\newcolumntype{C}[1]{>{\centering\let\newline\\\arraybackslash\hspace{0pt}}m{#1}}
\newtheorem{theorem}{Theorem}
\newtheorem{remark}{Remark}
\newtheorem{proposition}{Proposition}
\newtheorem{lemma}{Lemma}
\newtheorem{corollary}{Corollary}
\providecommand{\keywords}[1]{\textbf{\textit{Keywords:}} #1}
\begin{document}

\bibliographystyle{unsrtnat}

\title{Tuning in ridge logistic regression to solve separation}

\author[1]{Hana \v{S}inkovec}
\author[1]{Angelika Geroldinger}
\author[1]{Georg Heinze}
\author[2]{Rok Blagus}
\affil[1]{Section for Clinical Biometrics, 
Center for Medical Statistics, Informatics and Intelligent Systems, 
Medical University of Vienna}
\affil[2]{Institute for Biostatistics and Medical Informatics, 
University of Ljubljana,
Faculty of Medicine}

\date{}

\markboth%
{H. \v{S}inkovec, A. Geroldinger, G. Heinze, R. Blagus}
{Tuning in ridge logistic regression to solve separation}

\maketitle

\begin{abstract}
{Separation in logistic regression is a common problem causing failure of the iterative estimation process when finding maximum likelihood estimates. Firth's correction (FC) was proposed as a solution, providing estimates also in presence of separation. In this paper we evaluate whether ridge regression (RR) could be considered instead, specifically, if it could reduce the mean squared error (MSE) of coefficient estimates in comparison to FC. In RR the tuning parameter determining the penalty strength is usually obtained by minimizing some measure of the out-of-sample prediction error or information criterion. However, in presence of separation tuning these measures can yield an optimized value of zero (no shrinkage), and hence cannot provide a universal solution. We derive a new bootstrap based tuning criterion $B$ that always leads to shrinkage. Moreover, we demonstrate how valid inference can be obtained by combining resampled profile penalized likelihood functions. Our approach is illustrated in an example from oncology and its performance is compared to FC in a simulation study. Our simulations showed that in analyses of small and sparse datasets and with many correlated covariates $B$-tuned RR can yield coefficient estimates with MSE smaller than FC and confidence intervals that approximately achieve nominal coverage probabilities.}
\end{abstract}
\keywords{Logistic regression; Firth's correction; Separation; Ridge regression; Tuning; Inference; Mean squared error.}

\section{Introduction}
\label{intro}
Logistic regression is often used to describe the association of a set of covariates with a binary outcome, to estimate the effect of one exposure of interest on the binary outcome adjusted for covariates, or to  predict the probability of one of the levels of the outcome. The regression coefficients can be interpreted as log odds ratios. They are usually estimated by the method of maximum likelihood (ML). ML estimators are asymptotically unbiased and have optimal properties under some regularity conditions \citep{Kosmidis}.

However, in the analyses of small data sets or in situations where the outcomes or exposures are rare, covariates are strongly correlated with each other or strongly associated with the outcome, ML estimates of the regression coefficients may be biased away from zero or may even not exist \citep{Albert, Angelika}, i.e., the iterative estimation process does not converge. This situation has been termed \textit{separation} as the two outcome groups are separated by the values of a covariate or a linear combination of covariates  \citep{Heinze02}. Therefore, in the presence of sepration the fitted model yields perfect predictions for some (quasi-complete separation) or for all observations (complete separation).
A procedure that detects separation in the data was proposed by \citet{safeBinaryRegression}. In practice simpler alternatives, e.g. by observing  standard errors of standardized coefficients that become extremly large, are often used \citep{Heinze02}.

Although separation could arise as a feature of a data-generating process, we usually do not assume that one or several covariates are sufficient to  perfectly predict the outcome in the underlying population. Instead we consider separation as a consequence of random data variation. Firth's correction \citep{Firth93} (FC) was proposed as a solution to separation as it removes the $O(n^{-1})$ bias of the coefficients even in situations where the prevalence of separation is high \citep{Heinze02}. With FC the likelihood function is multiplied by a penalty term leading to shrinkage of the maximizer of the penalized likelihood towards zero. The penalty strength relative to the likelihood is determined by the number and correlation structure of covariates, hence no tuning is required.

In this paper we introduce an alternative solution to separation based on ridge regression (RR) and compare its performance in terms of mean squared error (MSE) of coefficients to the one of FC. RR adds a quadratic penalty to the likelihood and has been repeatedly reported to perform well in situations with correlated covariates \citep{flicflac, hdr}. Its usage is generally recommended whenever interest lies in obtaining coefficients and predicted probabilities with small MSE, but inference is not required \citep{flicflac}. In RR an additional parameter must be specified which determines the relative weight of the penalty in the penalized likelihood. This parameter is usually found by tuning criteria based on prediction performance, often estimated by cross-validation. In standard software tuning by cross-validated deviance, mean squared prediction error or mean classification error is commonly implemented \citep{glmnet, Hans92}. Tuning has a potential of finding an optimal penalty strength such that coefficient estimates have reduced MSE. However, determining the tuning parameter based on data introduces additional variability which has to be taken into account when constructing confidence intervals (CI); na\"{i}ve approaches, e.g., deducing Wald CI from the penalized covariance matrix, cannot be recommended as they do not achieve the nominal coverage levels due to the combination of bias and reduced variance \citep{flicflac}. Moreover, in the presence of separation tuning a measure of prediction performance might lead to no shrinkage and poor inference. We explain why this happens when using the commonly used tuning criteria and derive a new tuning criterion, denoted as B-tuning, that always leads to shrinkage. We also propose an associated valid procedure for CI estimation that takes into account the additional variabilty introduced by tuning.

The paper is organized accordingly. In the following section we summarize ML and penalized ML logistic regression and review commonly used criteria for optimization of the tuning parameter in RR together with our new criterion. Our proposal for interval estimation is also described in that section. Later we report on a comprehensive simulation study evaluating the performance of RR tuned by different criteria in comparison to FC, focusing on scenarios where separation occurs frequently. We exemplify the described methods in the analysis of a real data set. Finally, we summarize the most important findings of this work.

\section{Methods}
\label{meth}

Suppose we have $i=1,\ldots,N$, independent Bernoulli-distributed observations $y_i$, where $y_i=1$ denotes an event occurring in the $i$th subject and let $\pi_i$ denote the corresponding event probability. In the logistic regression model the logit of $\pi_i$ (i.e., the log-odds of the event) is modeled as a linear combination of the values of $K$ covariates,  $X=\{X_1,\ldots,X_K\}$, $K<N$, so that
\begin{equation}
\label{mod0}
\log\frac{\pi_i}{1-\pi_i}=\beta_0+\beta_1x_{i1}+\ldots+\beta_Kx_{iK}\mbox{, }i=1,\ldots,N,
\end{equation}
where $\beta_0$ is an intercept, and $\beta_k$, $k=1,\ldots,K$, are the logistic regression coefficients for $X_k$ that can be interpreted as the log-odds ratios, corresponding to one unit differences in $X_k$. The coefficients of the model can be obtained by the ML method, maximizing the likelihood function,
$$L(\bm\beta)=\prod_{i=1}^n \pi_i^{y_i}(1-\pi_i)^{1-y_i}, $$
or equivalently its natural logarithm, $l(\bm\beta)=\log L(\bm\beta)$, where $\bm\beta=(\beta_0,\ldots,\beta_K)^T$.

\subsection{Firth's correction}
\label{firth}
With FC the likelihood function is penalized by a term equivalent to Jeffreys' invariant prior so that the penalized log-likelihood becomes
\begin{equation}
\label{modFC}
l^P (\bm\beta)=l(\bm\beta)+\frac{1}{2} \log|I(\bm\beta)|,
\end{equation}
with $I(\bm\beta)$ denoting the Fisher information matrix evaluated at $\bm\beta$. To obtain valid inference, the use of profile penalized likelihood CI was recommended \citep{Heinze02}.


\subsection{Ridge regression}
\label{ridge}
In RR the penalized log-likelihood is given by
\begin{equation}
\label{modRR}
l^\lambda (\bm\beta)=l(\bm\beta)-\frac{\lambda}{2} \sum_{k=1}^K\beta_k^2,
\end{equation}
for some  $\lambda>0$, where a quadratic constraint is imposed on the coefficients so that their estimates are shrunken towards zero. Let $\bm{\hat\beta}^\lambda=(\hat\beta_0^\lambda,\ldots,\hat\beta_K^\lambda)^T$ be the penalized maximum likelihood estimate obtained as a solution to
 $$\frac{\partial l^\lambda (\bm\beta)}{\partial \bm\beta}=0 .$$
In contrast to FC an intercept $\beta_0$ is usually excluded from the penalty function which results in on average unbiased predicted probabilities for new observations \citep{Elgmati2015,Greenland15}.

From the Bayesian perspective, in RR univariate normal priors are imposed on the coefficients. These normal priors have mean zero and variance $\sigma^2_{prior}$ inversely proportional to $\lambda$. By using data augmentation it is possible to approximately fit RR by any statistical software where ML fitting and weighting of the observations is possible \citep{sullivan13}. Briefly, two pseudo-observations (event and non-event) are added for each covariate; the values for this covariate corresponding to the prior are set to $1/s$, where $s$ is a scaling factor which improves the approximation (usually set to $s=10$), and to zero for all other covariates. The weights for the augmented data used in ML fitting are then set to 1 for the original observations and to $2s^2/\sigma^2_{prior}$ for the pseudo-observations. This approach has been used to fit RR throughout this paper.

The consequence of using univariate priors is that, unlike in FC, the invariance to linear transformations of the parameter space is lost. Using a common penalty for all coefficients requires standardization of the covariates such that the coefficients approximately share their interpretation and are represented on the same scale. Usually standardization to zero mean and unit variance is applied. For a setting where both binary and continuous covariates are present \citet{gelman2008} proposed to first center both types of covariates to a mean of 0; continuous covariates are then scaled to a standard deviation of 1/2, while binary covariates are scaled so that their range is 1. The standardized estimates (of the continuous covariates) are then transformed back to the original scale to facilitate interpretation.

\subsubsection{Overview of commonly used tuning criteria}
\label{tuneD}
When optimizing $\lambda$ a grid of tuning parameter values is usually considered and a corresponding model is evaluated at each value of the grid. The selected tuning parameter is the one from the model with the smallest prediction error \citep{Hans92}. A value of $\lambda$ optimized at zero indicates that penalization does not reduce prediction error compared to the ML model. The prediction error is most often estimated by cross-validation (CV), and may be defined in different ways. The most commonly used criteria are, e.g.:

\begin{itemize}
\item minus twice log-likelihood error, i.e. deviance (\textit{D})
\begin{equation}
\label{dev}
D=-2\sum_{i=1}^n \left(y_i\log \hat\pi_{(-i)}+(1-y_i)\log(1-\hat\pi_{(-i)})  \right),
\end{equation}
\item mean squared prediction error (\textit{E})
\begin{equation}
\label{pe}
E=\frac{1}{n}\sum_{i=1}^n (y_i-\hat\pi_{(-i)})^2,
\end{equation}
\item mean classification error (\textit{C})
\begin{equation}
\label{ce}
C=\frac{1}{n}\sum_{i=1}^n\left[y_i \mbox{I}\left(\hat\pi_{(-i)}<\frac{1}{2}\right)+(1-y_i)\mbox{I}\left(\hat\pi_{(-i)}>\frac{1}{2}\right)+\frac{1}{2}\mbox{I}\left(\hat\pi_{(-i)}=\frac{1}{2}\right) \right],
\end{equation}
where $\mbox{I}(\cdot)$ is the indicator function, i.e. $\mbox{I}(\cdot)=1$, if the condition inside the brackets holds, and  $\mbox{I}(\cdot)=0$ otherwise.
\end{itemize}

In the definitions above
$$\hat\pi_{(-i)}=\frac{1}{1+\exp\left(-(\hat\beta_0^{(-i)}+\hat\beta_1^{(-i)}X_{i1}+\ldots+\hat\beta_K^{(-i)}X_{iK})\right)}, $$
denotes the estimated event probability for subject $i$, based on the (penalized) estimates of $\beta_k$, i.e. $\hat\beta_k^{(-i)}$, $k=0,\ldots,K$ where this subject is omitted. Note that $C$ does not generally have a unique optimum in $\lambda$.

Instead of CV an independent validation set, if available, may be used to estimate the prediction error.

Alternatively, one may use information criteria to avoid resampling when optimizing $\lambda$, e.g. the Akaike's information criterion (\textit{A}) \citep{AIC}. For RR it is given by

\begin{equation}
\label{aic}
A(\lambda)=-2\left(l^\lambda (\hat{\bm\beta})-df_e \right)
\end{equation}

with $df_e$ denoting an estimate of the effective degrees of freedom, given by

$$df_e=\mbox{trace}\left[ \frac{\partial^2 l}{\partial^2 \bm\beta}\left(\bm{\hat\beta}^\lambda\right) \left(\frac{\partial^2 l^\lambda }{\partial^2 \bm\beta}\left(\bm{\hat\beta}^\lambda\right)\right)^{-1} \right], $$

where the first term is the information matrix ignoring the penalty function, and the second term is the covariance matrix computed by inverting the information matrix of RR \citep{Harrell}. Note that for $\lambda=0$, $df_e=K+1$, the number of parameters being estimated, while for $\lambda>0$, $df_e<K+1$ holds \citep{Harrell}.

\subsubsection{Point and interval estimation for tuned ridge regression}
\label{tune}

For a saturated model with categorical predictors we theoretically explored the behavior of the criterion functions described above in cases of data separation (see Additional file 1). The criterion functions described in section \ref{tuneD} are all derived from a prediction perspective. In terms of prediction separation causes only little problems. For example, with complete separation the estimated probabilities equal the binary outcome when $\lambda=0$ and the (cross-validated) prediction error is as small as it can be - equal to zero. Thus, the criteria depending on the estimated probabilities are overoptimistic (as perfect predictions are not assumed), but could still work well enough for prediction purposes \citep{flicflac}. With $\lambda=0$, however, estimation of coefficients fails. Therefore, if interest lies in interpretability of regression coefficients, a tuning criterion should target inference instead, e.g., by directly taking into account the MSE of the coefficients \citep{msemin}. For that purpose we introduce a second-order moment matrix given by
$$\mathbf{m}(\lambda)=\mathbf{v}(\lambda)+\mathbf{b}(\lambda)\mathbf{b}(\lambda)^T, $$
where $\mathbf{v}(\lambda)$ and $\mathbf{b}(\lambda)$ are the variance and the bias of the RR estimator. Resorting to asymptotic results, the bias and variance of the RR estimator are given by
$$\mathbf{b}(\lambda)=-\lambda (I(\bm\beta)+\lambda\mathbf{I})^{-1}\bm\beta $$
and
$$\mathbf{v}(\lambda)= (I(\bm\beta)+\lambda\mathbf{I})^{-1} I(\bm\beta)  (I(\bm\beta)+\lambda\mathbf{I})^{-1},$$
respectively, where $\mathbf{I}$ is the identity matrix \citep{Hans92}. Note that $\mathbf{b}(\lambda)$ and $\mathbf{v}(\lambda)$ depend on the unknown true value of the parameter vector, $\bm\beta$. A meaningful estimate $\bm{\tilde\beta}$ has to be plugged in  for the approximation to be of any practical use. Let
\begin{equation}\label{est.mse}\widehat{\mathbf{m}}(\lambda)=\widehat{\mathbf{v}}(\lambda)+\widehat{\mathbf{b}}(\lambda)\widehat{\mathbf{b}}(\lambda)^T \end{equation}
be the estimated second-order moment matrix which is obtained by replacing $\bm{\beta}$  by $\bm{\tilde\beta}$. Optimization of $\widehat{\mathbf{m}}(\lambda)$ is then possible by using numerical methods, defining some scalar target function directly from $\widehat{\mathbf{m}}(\lambda)$, e.g, $\mbox{trace}(\mathbf{B}\widehat{\mathbf{m}}(\lambda))$, for a suitable non-zero positive semi-definite matrix $\mathbf{B}$. Using $\mathbf{B}=\mathbf{I}$ gives equal importance to all coefficients, while other choices may focus on  weighting the importance of each coefficient differently; zero weights are also possible for some coefficients. Our theoretical results to Theorem 1 show that using this criterion, termed \textit{B}-tuning criterion, yields an optimal tuning parameter different from zero even with complete separation given that an appropriate estimator $\bm{\tilde\beta}$ is used. The proof of the theorem is given in Additional file 1.

\begin{theorem}
Let $\bm{\tilde\beta}$ be some estimator of the $K+1$-parameter vector $\bm\beta$ such that 
$\bm{\tilde\beta}^T\bm{\tilde\beta}<\infty$ (condition 1). Let $\widehat{\mathbf{m}}(\lambda)$ be the estimate of $\mathbf{m}(\lambda)$ which is obtained by plugging in $\bm{\tilde\beta}$ for $\bm{\beta}$. Then $\mbox{trace}\left(\mathbf{B}\widehat{\mathbf{m}}(\lambda)\right)$ is minimized at some $\lambda>0$ for any non-zero, positive semi-definite matrix $\mathbf{B}$.
\end{theorem}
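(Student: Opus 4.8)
The plan is to treat $g(\lambda):=\mbox{trace}\!\left(\mathbf B\widehat{\mathbf m}(\lambda)\right)$ as a real function on $[0,\infty)$, show that $g'(0)<0$, and then argue that $g$ attains a minimum; together these imply that the minimiser cannot be $\lambda=0$. Write $\widehat I:=I(\bm{\tilde\beta})$. Condition~1 makes $\bm{\tilde\beta}$ finite, so every fitted probability $\hat\pi_i=\{1+\exp(-x_i^T\bm{\tilde\beta})\}^{-1}$ lies strictly in $(0,1)$ and hence $\widehat I=X^T\mbox{diag}\{\hat\pi_i(1-\hat\pi_i)\}X$ is symmetric positive definite, provided the design has full column rank. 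Consequently $\widehat I+\lambda\mathbf I$ is invertible for every $\lambda\ge0$, so $\widehat{\mathbf v}(\lambda)$, $\widehat{\mathbf b}(\lambda)$ and $g(\lambda)$ are infinitely differentiable on $[0,\infty)$.

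First I would differentiate $g$ at the origin. Writing $g(\lambda)=\mbox{trace}\!\left(\mathbf B\widehat{\mathbf v}(\lambda)\right)+\widehat{\mathbf b}(\lambda)^T\mathbf B\,\widehat{\mathbf b}(\lambda)$, the second summand is harmless: since $\widehat{\mathbf b}(\lambda)=-\lambda(\widehat I+\lambda\mathbf I)^{-1}\bm{\tilde\beta}$ vanishes at $\lambda=0$, the squared-bias term is $O(\lambda^2)$ and contributes nothing to $g'(0)$. For the variance term I would use $\frac{d}{d\lambda}(\widehat I+\lambda\mathbf I)^{-1}=-(\widehat I+\lambda\mathbf I)^{-2}$, which after cancellation gives $\widehat{\mathbf v}'(0)=-2\widehat I^{-2}$, so that
$$g'(0)=-2\,\mbox{trace}\!\left(\mathbf B\widehat I^{-2}\right)=-2\,\mbox{trace}\!\left(\widehat I^{-1}\mathbf B\,\widehat I^{-1}\right).$$
Because $\widehat I^{-1}\mathbf B\,\widehat I^{-1}$ is a congruence of the nonzero positive semi-definite matrix $\mathbf B$ by the invertible matrix $\widehat I^{-1}$, it is itself nonzero and positive semi-definite, hence has strictly positive trace, and therefore $g'(0)<0$. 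This already conveys the qualitative message — a strictly positive penalty beats no penalty at all — and it uses exactly the two hypotheses: condition~1 (so that $\widehat I$ is finite and positive definite) and $\mathbf B\neq\mathbf 0$.

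Next I would argue that a minimum is attained at a finite argument. As $\lambda\to\infty$ one has $(\widehat I+\lambda\mathbf I)^{-1}=O(\lambda^{-1})$, so $\widehat{\mathbf v}(\lambda)\to\mathbf 0$, $\widehat{\mathbf b}(\lambda)\to-\bm{\tilde\beta}$, and hence $g(\lambda)\to\bm{\tilde\beta}^T\mathbf B\,\bm{\tilde\beta}=:L$. On any compact interval $[0,\Lambda]$ the continuous function $g$ attains its minimum (Weierstrass), and by $g'(0)<0$ this minimum is not at $0$, so it sits at some $\lambda^\star>0$ — which already yields the statement when one tunes over a bounded grid, as in practice. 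To obtain it over all of $[0,\infty)$ one needs $\inf_{[0,\infty)}g<L$; combined with $g(\lambda)\to L$ this confines the infimum to a compact interval and hence makes it attained. Since $g'(0)<0$ forces $\inf_{[0,\infty)}g<g(0)$, the condition $\inf_{[0,\infty)}g<L$ holds automatically whenever $g(0)=\mbox{trace}(\mathbf B\widehat I^{-1})\le L$, and more generally whenever $g$ dips strictly below its limiting value $L$; I would add this as a mild nondegeneracy requirement on the pair $(\mathbf B,\bm{\tilde\beta})$.

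The step I expect to be the real obstacle is precisely this last one. The matrix calculus behind $g'(0)<0$ is routine, but one must exclude the degenerate situation in which $g$ decreases monotonically towards $L$, with the infimum only approached as $\lambda\to\infty$ — a situation that can genuinely arise for some rank-deficient $\mathbf B$ and some finite $\bm{\tilde\beta}$ — so some extra nondegeneracy (or the restriction to a bounded tuning grid) is needed to guarantee a finite optimal $\lambda$. A secondary point worth spelling out carefully is the positive definiteness of $\widehat I$, since it underlies both the differentiability of $g$ throughout $[0,\infty)$ and the strict positivity of $\mbox{trace}(\widehat I^{-1}\mathbf B\,\widehat I^{-1})$.
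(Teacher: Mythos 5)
Your proposal is correct, but it proves the theorem by a genuinely different route than the paper. The paper does not differentiate at all: it invokes Theobald's theorem and verifies the exact finite-$\lambda$ identity
$\widehat{\mathbf m}(0)-\widehat{\mathbf m}(\lambda)=\lambda(I(\bm{\tilde\beta})+\lambda\mathbf I)^{-1}\bigl[\,2\mathbf I+\lambda I(\bm{\tilde\beta})^{-1}-\lambda\bm{\tilde\beta}\bm{\tilde\beta}^T\bigr](I(\bm{\tilde\beta})+\lambda\mathbf I)^{-1}$,
which is positive definite precisely when $2-\lambda\bm{\tilde\beta}^T\bm{\tilde\beta}>0$; condition~1 guarantees such $\lambda$ exist. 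This buys an explicit interval $0<\lambda<2/(\bm{\tilde\beta}^T\bm{\tilde\beta})$ on which the whole estimated second-moment matrix decreases in the positive-definite ordering, so the trace criterion improves over $\lambda=0$ \emph{simultaneously for every} non-zero positive semi-definite $\mathbf B$, with no limiting argument. Your route — computing $g'(0)=-2\,\mbox{trace}(\mathbf B\widehat I^{-2})<0$ (the matrix calculus is right: the squared-bias term is $O(\lambda^2)$ and $\widehat{\mathbf v}'(0)=-2\widehat I^{-2}$) — is more elementary, needing no external theorem, but it is infinitesimal and per-$\mathbf B$, and it then forces you into the attainment discussion. Note that this attainment caveat is not a defect relative to the paper: the paper's proof likewise only establishes that $\lambda=0$ is beaten by some $\lambda>0$ and does not show the infimum over $[0,\infty)$ is attained at a finite $\lambda$ (indeed for $\bm{\tilde\beta}=\mathbf 0$ the criterion decreases monotonically towards its limit and the infimum sits at $\lambda=\infty$, which the supplement's conventions tolerate). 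Both arguments also share the same implicit hypothesis you make explicit: $I(\bm{\tilde\beta})$ must be positive definite, which under condition~1 requires a full-column-rank design; the paper simply asserts that condition~1 implies this. So your proof is sound and, modulo the attainment remark (which you correctly flag and which affects the paper equally), establishes the same conclusion; the paper's algebraic decomposition is sharper in that it gives a uniform, explicitly quantified range of improving $\lambda$.
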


In the presence of separation ML estimator does not fulfill condition 1. In contrast, the FC estimator satisfies this condition even with complete separation; if using FC to estimate the second-order moment matrix, the optimal value of $\lambda$ will be strictly greater than zero, thus, providing finite coefficient estimates. We favor FC for its good small sample properties but in principle any other estimator satisfying condition 1 could be plugged in.

Using an estimate of $\bm\beta$ when estimating $\mathbf{m}(\lambda)$ introduces additional variability which should be taken into an account if one wishes to construct valid CI. For this reason and to reduce the variance in estimating $\lambda$, we propose a combination of bootstrap and a procedure to combine likelihood profiles (CLIP), originally presented in the context of multiple imputation \citep{GeorgCLIP}. In brief, our proposed approach consists of the following steps:

\renewcommand{\theenumi}{\roman{enumi}}%
\begin{enumerate}
  \item bootstrap samples $b=1,\ldots,B$ are drawn with replacement and $\bm{\tilde\beta}_b$ is obtained on each sample;
  \item the scalar target function $\mbox{trace}(\mathbf{B}\widehat{\mathbf{m}}(\lambda))$ is numerically optimized for each bootstrap sample using $\bm{\tilde\beta}_b$ to obtain $\lambda_b^*$, $b=1,\ldots,B$;
  \item the optimal penalty parameter is determined as the mean of $\lambda_b^*$;
  \item a penalized likelihood profile is computed on the original sample by fixing $\lambda$ at $\lambda_b^*$;
  \item $B$ penalized likelihood profiles are pooled using CLIP as described in \citet{GeorgCLIP}.
\end{enumerate}

As an alternative of taking the mean of $\lambda_b^*$, the mode of the combined profile likelihood function can be used as a point estimate in tuned RR. 

\section{Simulation study}
\label{sims}

The simulation study is described following recent recommendations \citep{Morris2019}.

\textbf{Aims:} We performed a comprehensive simulation study to investigate the properties of point and interval estimates obtained by the proposed procedures in comparison with default solutions to fit logistic regression models, focusing on scenarios where separation occurs frequently.

\textbf{Data-generating mechanisms:} We considered a data generation scheme capturing a plausible biomedical context similar to the one described in \citet{Binder}, allowing a fair comparison between new and existing approaches \citep{BoulesteixBiomJ2018}. Specifically, we simulated data sets of size $N$ consisting of a binary outcome variable $Y$ and $K$ covariates $X_1,\ldots,X_k$ of mixed types. These covariates were obtained by first sampling $Z_1 ,\ldots, Z_k$ from a standard multivariate normal distribution with a zero mean vector and correlation matrix $\Sigma$ defined in Table ~\ref{tabsimX}. To obtain the desired $X_k$, we applied transformations to $Z_k$ as summarized in Table \ref{tabsimX}. Covariates $X_1, X_2, X_8, X_9$ were binary, $X_{10}$ and $X_{11}$ were ordinal with 3 levels each, and $X_3 ,\ldots, X_{7}$ and $X_{12} ,\ldots, X_{15}$ were continuous. To avoid too extreme values, continuous covariates were truncated at the third quartile plus five times the interquartile distance of the corresponding distribution.

\begin{table}[h]
\centering
\caption{Covariate structure and effect sizes applied in the simulation study based on \citet{Binder}. $[\cdot]$ denotes that the non-integer part of the argument is removed. $I(\cdot)$ is the indicator function, which takes the value of 1 if the argument is true, and 0 otherwise. $\beta_1\in\{0.69, 1.39, 2.08\}$.
\label{tabsimX}}
\scriptsize
\begin{tabular}{llllll}
  \hline
$z_{ik}$ & Correlation of $z_{ik}$ & Type & $x_{ik}$ & $\mbox{E}(x_{ik})$ & $\beta$ \\
  \hline
$z_{i1}$ & $z_{i2}(0.5)$, $z_{i6}(0.5)$, $z_{i12}(0.5)$ & binary & $x_{i1}=I(z_{i1}<0.84)$  & 0.8 & $\beta_1$ \\
$z_{i2}$ & $z_{i1}(0.5)$, $z_{i6}(0.3)$, $z_{i11}(-0.3)$ & binary & $x_{i2}=I(z_{i2}<-0.35)$ & 0.36 & 0.693 \\
$z_{i3}$ & $z_{i4}(0.3)$, $z_{i11}(-0.5)$ & continuous & $x_{i3}=\exp(0.4z_{i3}+3)$ & 21.8 & 0 \\
$z_{i4}$ & $z_{i3}(0.3)$, $z_{i7}(0.5)$, $z_{i13}(-0.3)$ & continuous & $x_{i4}=\exp(0.5z_{i4}+1.5)$ & 5.1 & 0 \\
$z_{i5}$ & - & continuous & $x_{i5}=0.01[100(z_{i5}+4)^2]$ & 17 & 0 \\
$z_{i6}$ & \begin{tabular}[x]{@{}c@{}}$z_{i1}(0.5)$, $z_{i2}(0.3)$, $z_{i9}(0.5)$, \\ $z_{i13}(0.5)$, $z_{i14}(0.3)$\end{tabular} & continuous & $x_{i6}=[10z_{i6}+55]$ & 54.5 & 0 \\
$z_{i7}$ & $z_{i4}(0.5)$ & continuous & $x_{i7}=[10z_{i7}+55]$ & 54.5 & 0 \\
$z_{i8}$ & - & binary & $x_{i8}=I(z_{i8}<0)$ & 0.5 & 0.347 \\
$z_{i9}$ & $z_{i6}(0.5)$, $z_{i13}(0.3)$, $z_{i14}(0.3)$  & binary & $x_{i9}=I(z_{i9}<0)$  & 0.5 & 0 \\
$z_{i10}$ & $z_{i13}(-0.5)$, $z_{i14}(-0.3)$ & ordinal & $x_{i10}=I(z_{i10}\geq -1.2)+I(z_{i10}\geq 0.75)$ & 1.11 & 0.693\\
$z_{i11}$ & \begin{tabular}[x]{@{}c@{}}$z_{i2}(-0.3)$, $z_{i3}(-0.5)$, $z_{i12}(0.3)$, \\ $z_{i13}(0.3)$\end{tabular} & ordinal & $x_{i11}=I(z_{i11}\geq 0.5)+I(z_{i11}\geq 1.5)$ & 0.37 & 0.347 \\
$z_{i12}$ &  $z_{i1}(0.5)$, $z_{i11}(0.3)$ & continuous & $x_{i12}=[10z_{i12}+55]$ & 54.5 & 0.036 \\
$z_{i13}$ & \begin{tabular}[x]{@{}c@{}}$z_{i4}(-0.3)$, $z_{i6}(0.5)$, $z_{i9}(0.3)$, \\ $z_{i10}(-0.5)$, $z_{i11}(0.3)$, $z_{i14}(0.5)$\end{tabular} & continuous & $x_{i13}=[\mbox{max}(0,100\exp(z_{i13})-20]$ & 138.58 & 0.003 \\
$z_{i14}$ & \begin{tabular}[x]{@{}c@{}}$z_{i6}(0.3)$, $z_{i9}(0.3)$, $z_{i10}(-0.3)$, \\ $z_{i13}(0.5)$\end{tabular}  & continuous & $x_{i14}=[\mbox{max}(0,80\exp(z_{i14})-20]$  & 106.97 & 0.004 \\
$z_{i15}$ & - & continuous & $x_{i15}=[10z_{i15}+55]$ & 54.5 & 0.036\\
\hline
\end{tabular}
\end{table}

We considered all the combinations of simulation parameters and investigated 27 data generating mechanisms defined by the number of covariates $K\in\{7, 10, 15\}$, the sample size $N\in\{80, 200, 500\}$ and the value of $\beta_1\in\{0.69, 1.39, 2.08\}$. The effects of other covariates were held constant; the true regression coefficients for binary covariates were $\beta_2=0.69$, $\beta_8=0.35$, $\beta_9=0$, for ordinal covariates were equal to $\beta_{10}=0.69$, $\beta_{11}=0.35$ and for continuous covariates $X_{12},...,X_{15}$ $\beta_k$ was 
chosen such that the log odds ratio between the first and the fifth sextile of the corresponding distribution was 0.69. The true effects of the other five continuous covariates $X_3,\ldots,X_{7}$ were set to zero. An intercept term $\beta_0$ was determined for each scenario such that on average the desired proportion of events of 0.1 was obtained. Given simulated values of covariates $x_{i1},\ldots, x_{iK}$ for a subject $i$, $i=1,\ldots,N$, a linear predictor was obtained as $\eta_i = \beta_0 + \sum_{k=1}^{K} x_{ik} \beta_k$, from which $\pi_i$ was obtained as $\pi_i=1/(1+\exp(-\eta_i))$. Finally, the value of the outcome variable was sampled from a Bernoulli distribution as $y_i\sim \mbox{Bern}(\pi_i)$. For each data generating mechanism 1000 data sets were generated.

\textbf{Methods:} Each simulated data set was analyzed by fitting two models using the methods outlined in \ref{meth}. The first model was fitted without $X_3,\ldots,X_{7}$ that had no real effect on the outcome, while the second model also included those noise covariates. For brevity, here we focus on the methods presented in Table ~\ref{tabMethods}, i.e. FC and \textit{D}-, \textit{A}- and \textit{B}-tuned RR. See Additional file 2 for detailed results also including other methods described in \ref{tuneD}.

\begin{table}[h]
\centering
\caption{Overview of methods of which results are presented in the main paper; see Additional file 2 for results of other methods described in \ref{tuneD}. LOOCV, leave-one-out cross-validation; AIC, Akaike's information criterium; MSE, mean squared error; CLIP, combined likelihood profile.
\label{tabMethods}}
\scriptsize
\begin{tabular}{lccccc}
  \hline
  Acronym & Section & Tuning criterion & Point estimate & Interval estimation & R package  \\\hline
  FC & (\ref{firth}) & - & Maximizing (\ref{modFC}) & \begin{tabular}[x]{@{}c@{}}Profile penalized \\ likelihood \end{tabular} & \textbf{logistf} \\ [0.5cm]
  \textit{D} & (\ref{tuneD}) & \begin{tabular}[x]{@{}c@{}} LOOCV \\ deviance (\ref{dev}) \end{tabular} & Maximizing (\ref{modRR}) & \begin{tabular}[x]{@{}c@{}}Profile penalized \\ likelihood at $\lambda^*$ \end{tabular} & \begin{tabular}[x]{@{}c@{}}\textbf{penalized} for selection of $\lambda^*$, \\ \textbf{logistf} for estimation \\ via data augmentation  \end{tabular} \\ [0.5cm]
  \textit{A} & (\ref{tuneD}) & AIC (\ref{aic}) & Maximizing (\ref{modRR}) & \begin{tabular}[x]{@{}c@{}}Profile penalized \\ likelihood at $\lambda^*$ \end{tabular} & \begin{tabular}[x]{@{}c@{}}\textbf{logistf} for estimation \\ via data augmentation \end{tabular} \\ [0.5cm]
  \textit{B} & (\ref{tune}) &  MSE (\ref{est.mse}) & Maximizing (\ref{modRR}) & \begin{tabular}[x]{@{}c@{}} According to \\ CLIP \citep{GeorgCLIP} \end{tabular} & \begin{tabular}[x]{@{}c@{}}\textbf{logistf} for estimation \\ via data augmentation  \end{tabular} \\
\hline
\end{tabular}
\end{table}

All RR models were computed after first standardizing covariates. For RR in combination with \textit{D}- (with leave-one-out CV, LOOCV) and \textit{A}-tuning we applied standardization to zero mean and unit variance. The tuning paramter was selected from a fixed grid of 200 log-linearly equidistant values ranging from $10^{(-6)}$ to $20$. For \textit{B}-tuning we applied standardization as recommended by \citet{gelman2008}. We generated $B=100$ bootstrap resamples to obtain the point estimates and CI. The target function was defined as $\mbox{trace}(\mathbf{I}\widehat{\mathbf{m}}(\lambda))$. The unknown $\beta$ was replaced by the FC estimates. 
To evaluate the impact of using an estimate of $\beta$, we also considered the optimization of $\widehat{\mathbf{m}}$ by plugging in the true coefficients instead. We abbreviate this as \textit{T}-tuned RR.

Separation was detected by an algorithm \citep{safeBinaryRegression} implemented in the \textbf{brglm2} package \citep{brglm2}. R version 3.4.3 \citep{R} with the packages \textbf{penalized} \citep{penalized} and \textbf{logistf} \citep{logistf} was used for computations (Table ~\ref{tabMethods}).

\textbf{Estimands:} The true regression coefficients $\beta_1$ and $\beta_2$ were the estimands in our investigation.

\textbf{Performance measures:} We evaluated the root mean square error (RMSE) of coefficient estimates as well as the coverage, power and the width of the 95\% CI.

\subsection{Results}
\label{res}

In this section we report the simulation study results by means of nested loop plots \citep{Nested, loopR}. First, we describe the RMSE of the point estimates (see Additional file 2 for results on bias and variance of the point estimates). Second, we discuss the coverage, power and the width of the two-sided interval estimators (the coverage of the one-sided CI of FC and \textit{B}-tuned RR is shown in Additional file 2).
We focus here mainly on the results for $\beta_1$, the effect estimates of the fairly unbalanced covariate $X_1$ $(\mbox{E}(X_1)=0.8)$, causing separation in a combination with rare outcome $\mbox{E}(Y)=0.1$. The prevalence of separation was higher in scenarios with smaller sample sizes and larger effects of $\beta_1$, and it varied from 0.2\% to 82.2\%. Figure ~\ref{fig1Res} shows the prevalence of separation over all simulated scenarios.



\subsubsection{Point estimation}
\label{est}

Figure ~\ref{fig1Res} shows the RMSE of $\beta_1$ estimated by the methods described in Table ~\ref{tabMethods}. The results obtained by \textit{D}- and \textit{A}-tuned RR were poor in some scenarios and their RMSE increased with the prevalence of separation. The performance of \textit{A}-tuned RR worsened if noise covariates $X_3,\ldots,X_{7}$ were included in the model. Interestingly, the performance of \textit{D}-tuned RR improved if including noise, as this caused more shrinkage, which reduced the variance. In contrast, although not necessarily always better than \textit{D}- and \textit{A}-tuned RR, the estimates obtained by FC were reliable across all simulated scenarios even if the prevalence of separation was high. Generally, the RMSE of FC increased with $K$, if including noise and with larger effects of $\beta_1$, and it decreased with larger $N$. Similar to FC, \textit{B}-tuned RR was robust to separation; its RMSE increased with larger effects of $\beta_1$  and decreased with larger $N$. Moreover, in scenarios of size $N=80$ with the worst performance of FC the RMSE of \textit{B}-tuned RR was acceptable and smaller compared to FC. This was due to a sligtly larger bias towards zero and larger variance of $\hat{\beta_1}$ obtained by FC (see Additional file 2). However, in scenarios with $N=200$ or $N=500$ FC resulted in a smaller RMSE; although the bias of $\hat{\beta_1}$ obtained by both approaches was very similar and close to zero, the variance of $\hat{\beta_1}$ obtained by FC was slightly smaller. Applying the hypothetical \textit{T}-tuned RR additionally reduced the RMSE of $\beta_1$ compared to \textit{B}-tuning and generally yielded the best performance. 

\begin{figure}[h!]
\centering
\includegraphics[width=0.9\textwidth]{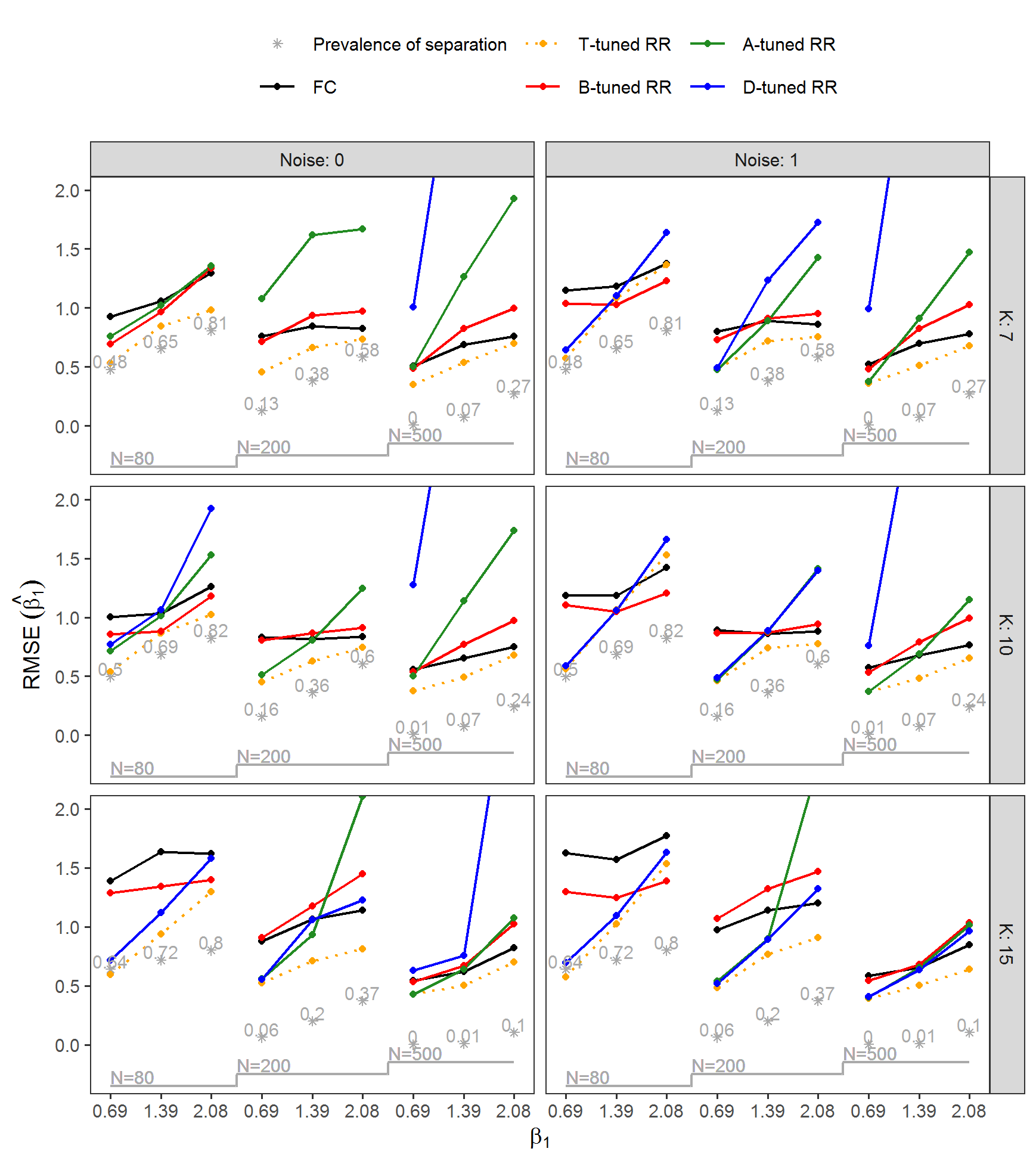}
  \caption{Root mean squared error of $\beta_1$. Nested loop plot of root mean squared error (RMSE) of $\beta_1$ by the value of $\beta_1\in\{0.69, 1.39, 2.08\}$, the sample size $N\in \{80, 200, 500\}$ and the number of covariates included into data-generating process $K\in\{7, 10, 15\}$ for all simulated scenarios. Two models were fitted by either excluding or including noise covariates $X_3,\ldots,X_{7}$ and the model parameters were obtained by different estimators. FC, Firth's correction; RR, ridge regression optimized by \textit{B}-tuning; \textit{D}-tuning, leave-one-out-cross-validated deviance; \textit{A}-tuning, Akaike's information criterium.}
  \label{fig1Res}
\end{figure}

In terms of RMSE of $\beta_2$ \textit{D}-tuned RR generally yielded best results, and for $N\neq 80$ even outperformed \textit{T}-tuned RR. The performance of \textit{D}-tuned RR improved if noise was added to the model. \textit{A}-tuning generally performed very similarly to \textit{D}-tuning except in scenarios with small samples and many covariates included, e.g. in all noise scenarios with $N=80$, where the RMSE of \textit{A}-tuned RR was large. Across all simulation scenarios the RMSE of $\beta_2$ obtained by FC and \textit{B}-tuned RR was comparable but slightly larger than the RMSE of \textit{D}- and \textit{A}-tuned RR.

\subsubsection{Interval estimation}
\label{ci}

The empirical coverage levels and the median width of the two-sided 95\% CI for $\beta_1$ are shown in Figure ~\ref{fig2Res} for FC and \textit{B}-tuned RR only as the two-sided coverage for other methods based on a profile likelihood CI was very unreliable, ranging from 26.6\% to 98.4\% for \textit{D}- and 19\% to 98.3\% for \textit{A}-tuned RR. Although FC outperformed \textit{B}-tuned RR, the coverage of the two-sided CI of \textit{B}-tuned RR for $\beta_1$ was satisfactory and always relatively close to the nominal level. The coverage rates of profile penalized likelihood CI according to FC were outside the 'plausible' interval, defined as $0.95\pm 2.57\sqrt(0.95*0.05/1000)$, in 11 out of 54 fitted models, always being too conservative with a maximum coverage of 97.9\%. The coverages of \textit{B}-tuned RR were outside the plausible range half of the time, in particular, 16-times being too conservative with maximum coverage of 98.7\% and 11-times below the nominal level with minimum coverage of 89.7\%. The median coverage (and the median width) over all scenarios was 96\% (4.69) for FC and 95.2\% (4.11) for \textit{B}-tuned RR if noise covariates were excluded from the fitting process, and 96.1\% (4.8) for FC and 95.8\% (4.59) for \textit{B}-tuned RR if including noise. Profile likelihood CI for FC were usually wider than those for \textit{B}-tuned RR. However, in some scenarios, especially with many covariates included in the model and with small sample sizes, CI for \textit{B}-tuned RR were substantially wider than the profile likelihood CI of FC due to a large upper limit. Briefly, one-sided 97.5\% CI for both FC and \textit{B}-tuned RR revealed overcoverage of the lower limit (extending to $-\infty$) and undercoverage of the upper limit (extending to $+\infty$), which was more apparent for \textit{B}-tuned RR (see Additional file 2). In terms of power to detect $\beta_1 \neq 0$ both methods performed very similarly; the median power was 23.3\% for FC, ranging from 0.9\% to 93\%, and 22.6\% for \textit{B}-tuned RR, ranging from 0.6\% to 92.7\%.

\begin{figure}[h!]
\centering
\includegraphics[width=0.9\textwidth]{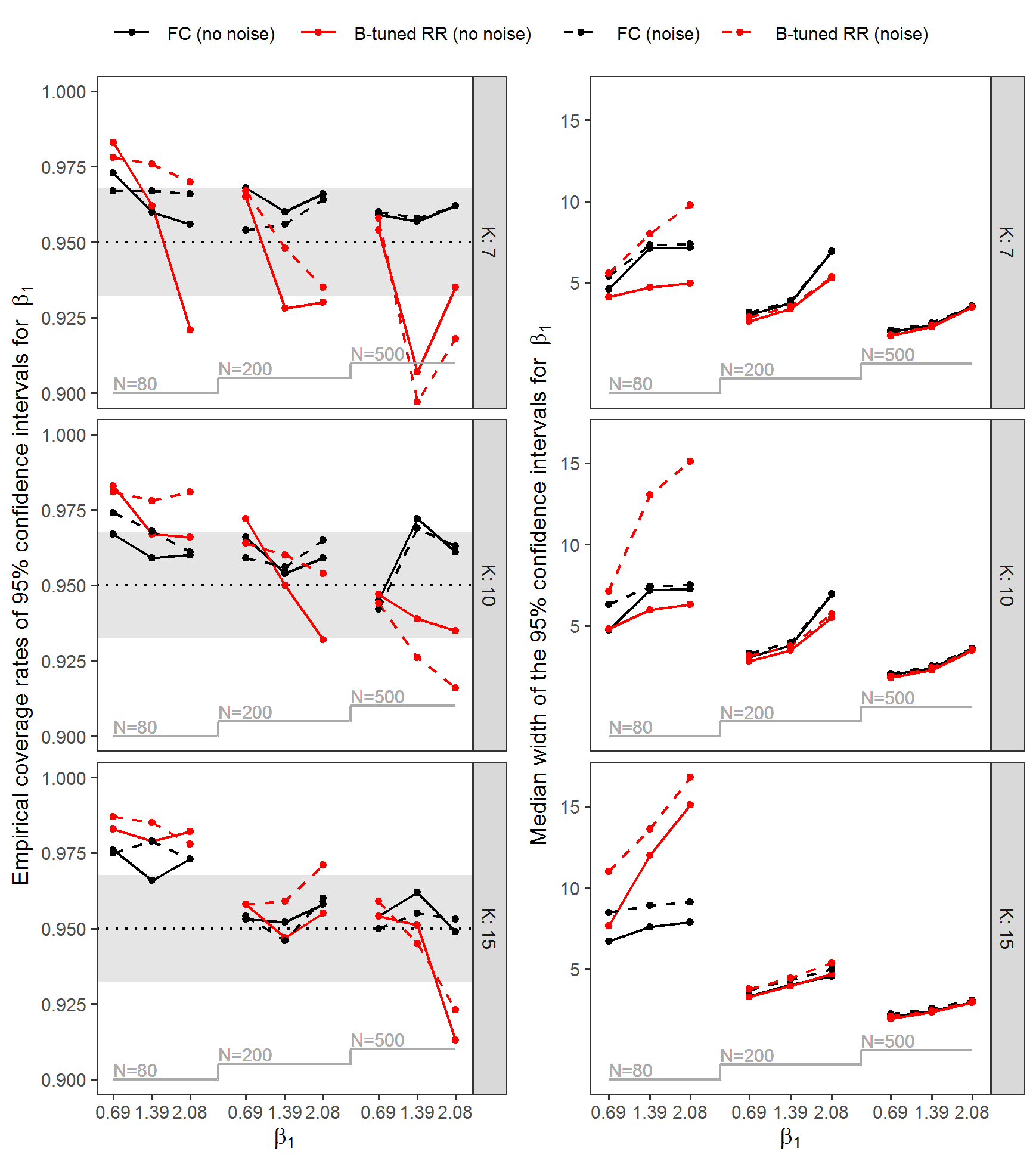}
  \caption{Coverage and median width of 95\% confidence intervals for $\beta_1$. Nested loop plot of empirical coverage and median width of the 95\% confidence intervals (CI) of $\beta_1$ by the value of $\beta_1\in\{0.69, 1.39, 2.08\}$, the sample size $N\in\{80, 200, 500\}$ and the number of covariates included into data-generating process $K\in\{7, 10, 15\}$ for all simulated scenarios. Two models were fitted by either excluding or including noise covariates $X_3,\ldots,X_{7}$. FC, Firth's correction; \textit{B}-tuned RR, tuned ridge regression where CI were obtained by combining $B=100$ penalized likelihood profiles.}
 \label{fig2Res}
\end{figure}

Although profile likelihood CI for $\beta_1$ by \textit{D}- and \textit{A}-tuned RR did not reach their nominal coverage levels, the coverage of 95\% CI for $\beta_2$ was satisfactory for all methods. FC yielded the best results, being slightly too conservative and outside the plausible range in three scenarios with a maximum coverage of 97.8\%. The CI of \textit{B}-tuned RR were too conservative in 6 cases with maximum coverage of 97\% and once fell below the nominal level with a minimum coverage of 93\%. One-sided 97.5\% CI for FC and \textit{B}-tuned RR were also satisfactory for both limits across all simulation scenarios (see Additional file 2). The coverage of \textit{D}-tuning (\textit{A}-tuning) was outside the plausible range in 9 (16) cases, and ranging from 90.9\% (86.6\%) to 97.1\% (96.4\%). The median widths of 95\% CI were comparable for FC, \textit{B}-tuned RR, ranging from 1.2 to 5.1 for FC, 1.2 to 6.7 for \textit{B}-tuned and from 1.2 to 6.2 for \textit{A}-tuned RR. The profile likelihood CI of \textit{D}- and \textit{A}-tuned RR were in median narrower, ranging from 1 to 2.8 and 1 to 2.7, respectively. However, considering the mean rather than the median, the width for \textit{D}- and \textit{A}-tuned RR could also be extreme, e.g. 32.5 and 30, respectively.

\section{Example}
\label{real}

As an example we consider a study of 79 primarily diagnosed cases of endometrial cancer with 30 and 49 patients being classified by histological grading (HG) as low or high, respectively \citep{Heinze02}. Three covariates, neovascularization (NV, binary), pulsality index of arteria uterina (PI, continuous) and endometrium height (EH, continuous) were used to explain the outcome. Quasi-complete separation occured in the data as all the patients with NV present had high HG.

In the analysis we fitted both a univariable model with NV only and a multivariable model with NV, PI and EH. The point estimates and two-sided 95\% CI are presented in Table ~\ref{tab.real} for FC and \textit{D}-, \textit{E}-, \textit{C}-, \textit{A}- and \textit{B}-tuned RR. For \textit{D}-, \textit{E}-, \textit{C}- and \textit{A}-tuned RR  we standardized the covariates to unit variance. For \textit{B}-tuned RR we used standardization recommended by \citet{gelman2008}, $B=250$ and $\mathbf{B}=\mathbf{I}$; CI were pooled from $B=250$ penalized likelihood profiles. For all the other tuning approaches we report in Table ~\ref{tab.real} the na{\"i}ve 95\% profile penalized likelihood CI. In addition, the optimal tuning parameter values are also shown in Table ~\ref{tab.real}.

\begin{table}[h]
\centering
\caption{Real data example: log odds ratio estimates $\hat{\beta}$ and two-sided 95\% confidence intervals (CI) for the covariates of endometrial cancer study obtained from the univariable models with NV only and multivariable models with NV, PI and EH by applying either Firth's correction (FC) or ridge regression (RR) optimized by \textit{B}-tuning, leave-one-out-cross-validated deviance (\textit{D}), mean squared prediction error (\textit{E}) and mean classification error (\textit{C}), and by Akaike's information criterion (\textit{A}). In addition, optimized tuning parameter values $\lambda$ are also shown where applicable.
\label{tab.real}}
\scriptsize
\begin{tabular}{lcccccc}
\hline
\multicolumn{1}{l}{Method}  & \multicolumn{2}{c}{Univariable} & \multicolumn{4}{c}{Multivariable}\\
  \cmidrule(lr){2-3}\cmidrule(lr){4-7}
 & $\lambda$ & $\hat{\beta}_{NV}$ (95\% CI) & $\lambda$  & $\hat{\beta}_{NV}$ (95\% CI) & $\hat{\beta}_{PI}$ (95\% CI) & $\hat{\beta}_{EH}$ (95\% CI) \\
  \hline
FC & - & 4.34 (2.23, 9.21) & - & 2.93 (0.61, 7.85) & -0.03 (-0.12, 0.04) & -2.6 (-4.37, -1.23) \\
  \textit{B}-tuned RR & 0.51 & 4.68 (2.33, 9.48) & 0.1 & 3.36 (0.68, 11.01) & -0.03 (-0.12, 0.04) & -2.66 (-4.4, -1.31) \\
  \textit{D}-tuned RR & 0 & - & 1.55 & 2.79 (0.84, 5.56) & -0.02 (-0.09, 0.04) & -2.14 (-3.39, -1.05) \\
  \textit{E}-tuned RR & 0.02 & 7.51 (2.78, 40.11) & 0 & - & -0.04 (-0.14, 0.04) & -2.9 (-4.79, -1.44) \\
  \textit{C}-tuned RR & 0 & - & 0.41 & 3.8 (1.02, 9.48) & -0.03 (-0.12, 0.04) & -2.62 (-4.23, -1.3) \\
  \textit{A}-tuned RR & 0.03 & 7.23 (2.76, 34.01) & 0.46 & 3.71 (1.01, 9.04) & -0.03 (-0.11, 0.04) & -2.59 (-4.18, -1.29) \\
   \hline
\end{tabular}
\end{table}

When fitting the univariable model we observe that existing tuning approaches resulted in no or only little shrinkage, yielding $\hat{\beta}_{NV}$ that either diverged to $\infty$ (\textit{D}-tuning) or was very large (odds ratio estimate exceeding 1000) with extremely wide 95\% penalized profile likelihood CI (\textit{A}- and \textit{E}-tuning). In contrast, the point and interval estimates obtained by \textit{B}-tuned RR were similar to those of FC. \textit{C}-tuned RR had no unique optimum in the interval $\lambda=[0, 20]$, neither for the univariable nor for the multivariable model, and hence the smallest optimal $\lambda$ was considered. In the univariable model the optimal tuning parameters covered the whole range of the interval, while for the multivariable model zero was not included in the range of optimal values. For the multivariable model, \textit{E}-tuning resulted in $\lambda=0$, therefore, providing no shrinkage, while \textit{B}-, \textit{D}- and \textit{A}-tuning yielded reasonable point estimates comparable to those obtained by FC. In contrast to \textit{D}- and \textit{A}-tuning, however, plausible CI were obtained by \textit{B}-tuning that could adequatly account for additional variability that came from specifying $\lambda$ from the data. Compared to the penalized likelihood profile CI of FC the CI of \textit{B}-tuned RR had a larger upper limit.

\section{Discussion}
\label{disc}

The present paper considers tuned RR as an alternative solution to separation in logistic regression and compares its performance to FC. It shows that tuning RR by the established approaches, i.e., by using criteria that depend on CV estimates of the prediction error or the Akaike information criterion, can be problematic in the presence of separation; tuning might result in an optimized value of zero and hence provide no shrinkage. Intuitively, in case of separation ML yields seemingly perfect predictions for some or for all observations. Therefore, tuning based on minimizing the (out-of-sample) prediction error might not improve predictive accuracy compared to ML. For a saturated model with categorical predictors we derived the conditions under which a tuning parameter equals zero; it can be shown that with complete separation the optimal value of the penalty parameter is always zero, but it can be non-zero with quasi-complete separation. Though obtaining a tuning parameter equal to zero is not necessarily problematic in terms of prediction, valid inference cannot be obtained since at least one of the coefficient estimates does not converge at the tuned solution. For this reason we derived a new tuning criterion targeting the MSE directly. We demonstrated that this criterion is robust to separation and that it provides parameter estimates with reasonable MSE even in the presence of complete separation. Moreover, the criterion can be optimized by using numerical methods and its flexibility allows a researcher to specify only particular coefficients of which MSE should be optimized. In agreement with \citet{sullivan13}, we advise against penalizing the intercept to obtain an average predicted probability that equals an observed event rate. A drawback of the new approach however is, that it requires bootstrapping and is thus computationally expensive.

The results from a simulation study confirmed our theoretical derivations also for non-saturated models and showed that using CV-deviance, mean squared prediction error and mean classification error or the Akaike information criterion can be problematic in the presence of separation. Mean classification error is anyhow a questionable criterion as it rarely provides a unique solution for the tuning parameter. For a saturated model in case of quasi-complete sepration CV mean squared prediction error has an advantage over deviance as it cannot result in a tuned solution of zero; however, in realistic settings it tends to overshrink. Further on, the criteria most commonly used in practice, i.e. the CV deviance and the Akaike information criterion, are also not reliable if separation is present in the data; both may yield extremly poor performance as coefficient estimates obtained for a covariate causing separation often have a very large RMSE. Software, e.g. \textbf{glmnet} \citep{glmnet}, may automatically specify a grid of tuning parameter values to search for optimal $\lambda$, and a danger is that a researcher overlooks that the optimal value is simply the smallest $\lambda$ from that grid. Interestingly, however, including noise covariates into the model improves the performance of CV deviance, suggesting a justified use of this criterion in a high-dimensional setting. Counter to the established criteria, \textit{B}-tuning based on minimizing the estimated MSE is robust to separation and yields coefficient parameter estimates with reasonable RMSE even in situations where the prevalence of separation is high. In addition, it may even outperform FC in scenarios with small sample sizes. Our results when using true coefficients to obtain $\widehat{\mathbf{m}}(\lambda)$ showed that the approach is still not optimal and plugging in another estimator instead of FC could additionally reduce the MSE of regression coefficients. In contrast to FC, however, RR lacks the invariance property, and an appropriate standardization of the covariates is required prior to tuning to assure that regression coefficients are comparable estimates of the impact of a covariate on the prediction. For \textit{B}-tuning we used standardization suggested by  \citet{gelman2008} as this improved the performance of this approach in comparison to zero mean and unit variance standardization.

Another important aspect regarding inference is interval estimation. In RR the estimation of CI is problematic; shrinkage potentially reduces the MSE of the ML but introduces bias and additional variability that comes from estimating the tuning parameter \citep{rok2020}. It was shown previously that na{\"i}ve approaches, e.g. Wald CI, do not reach their nominal coverage levels and could, therefore, not be recommended  \citep{flicflac}. In our study we na{\"i}vely estimated CI based on penalized likelihood profiles and could observe poor coverage for the binary covariate causing separation. Interestingly, the two-sided coverage for the other binary covariate was on average relatively close to the nominal level.

An alternative to tuning would be pre-specifying the value of the tuning parameter; \citet{sullivan13} suggest to choose the tuning parameter according to a plausible 95\% prior interval for a covariate's odds ratio. The width of such an interval on the log scale can be converted into a variance of a normal prior, and the penalty parameter in RR then corresponds to the reciprocal of the normal prior variance. In their approach, standardization of covariates is not considered, and different regression coefficients may have different prior variances and hence different penalty parameters. Alternatively, the penalty strength could be determined by fixing the effective degrees of freedom that one would like associated with the model fit, and solving for penalty parameter. \citet{Harrell} suggests to restrict the effective degrees of freedom to $m/15$ where $m$ is the frequency of the less frequent level of the outcome. 

For \textit{B}-tuned RR we demonstrated how to obtain interval estimates by combining penalized likelihood profiles obtained on bootstrap resamples \citep{GeorgCLIP}. Similarly as in FC, in situations where separation occurs penalized likelihood profile becomes highly asymmetric, therefore, potentially providing satisfactory coverage that is on average close to the nominal level \citep{Heinze02}. Using bootstrap resamples accounts for the additional source of variability that comes from estimating tuning parameter from the data. Moreover, it reduces the variance of tuning parameter, therefore, reducing the variance of penalized coefficient estimates. Our simulation study suggests that the obtained CI have satisfactory two-sided coverage even for the covariate causing separation. Despite the asymmetry of the CI that might yield a large upper limit, the one-sided coverage usually results in overcoverage of the lower limit and slight undercoverage of the upper limit. The CI are mostly of comparable width as the profile penalized likelihood CI obtained by FC, although they may also be wider due to a large upper limit. Nonetheless, the power of both approaches is comparable. One might also consider to obtain a point estimate by using the mode of the combined profile likelihood function. Our results suggest, however, that this results in a point estimate with a slightly larger RMSE.

All the proposed approaches can be easily implemented in any standard statistical software and the R-code is available from a github repository \citep{RRcode}. An example taken from a biomedical context illustrates practical application of the novel approach in comparison to the established methods. In this example \textit{B}-tuned RR yields plausible coefficient parameter estimates and  CI. This extensive investigation shows there is a small gain of \textit{B}-tuned RR over FC regarding the MSE of coefficient estimates in the analysis of small and sparse data sets prone to separation, however, at the cost of computational burden. FC, nevertheless, is transformation-invariant and mostly provides coefficient estimates with acceptable MSE, although its performance decreases in small sample size settings with many highly correlated covariates and included noise.

\section{Conclusions}
\label{con}

In logistic regression analyses of small and sparse data sets established criteria for tuning the amount of penalization in RR should be applied with caution as these measures can yield an optimized value of zero (no shrinkage) or a value which is the smallest from an arbitrary grid of values, and hence cannot be considered as a universal solution to separation. In contrast, our new tuning criterion \textit{B} leads to shrinkage even in case of separation. Moreover, by bootstrap resampling the computation of confidence intervals is facilitated which accounts for the variability that arises from the determination of a tuning parameter. If a data analyst is willing to accept additional computational burden, \textit{B}-tuned RR can be considered as an alternative to FC as a solution to separation in logistic regression having attractive properties in small and sparse samples with many highly correlated covariates.

\section{Software}
\label{soft}

Software in the form of R code, together with a sample input data set is available from a github repository \citep{RRcode}.

\section*{Acknowledgements}

We thank Nina Ru\v{z}i\v{c} for her help and comments on an earlier draft of this paper.

\section*{Funding}

This research was funded by the Austrian Science Fund (FWF), grant number I 2276, and the Slovenian Research Agency (Predicting rare events more accurately, N1-0035; Methodology for data analysis in medical sciences, P3-0154).

\bibliography{bibliography}

\pagebreak
\begin{flushleft}
\textbf{\LARGE Supplementary material for: Tuning in ridge logistic regression to solve separation}

\vspace{5mm}

\textbf{\LARGE Additional file 1: Theoretical results on the convergence of tuned ridge regression}

\end{flushleft}

\setcounter{equation}{0}
\setcounter{figure}{0}
\setcounter{table}{0}
\setcounter{page}{1}
\setcounter{section}{0}
\makeatletter
\renewcommand{\theequation}{S\arabic{equation}}
\renewcommand{\thefigure}{S\arabic{figure}}
\renewcommand{\thetable}{S\arabic{table}}
\renewcommand{\bibnumfmt}[1]{[S#1]}
\renewcommand{\citenumfont}[1]{S#1}

\section{Proof of Theorem 1}

According to Theorem 1 in Theobald \citep{S_RefA}, we need to show that there exists some $\lambda>0$ such that $\widehat{\mathbf{m}}(0)-\widehat{\mathbf{m}}(\lambda)$ is positive definite. Simple calculation yields
$$\widehat{\mathbf{m}}(0)-\widehat{\mathbf{m}}(\lambda)=\lambda(I(\bm\tilde\beta)+\lambda\mathbf{I})^{-1}\left[ 2\mathbf{I}+\lambda I(\bm\tilde\beta)^{-1} -\lambda \bm{\tilde\beta}\bm{\tilde\beta}^T \right](I(\bm\tilde\beta)+\lambda\mathbf{I})^{-1},  $$
$\lambda>0$. By condition (1) we need to show that there is some $\lambda>0$ such that $2\mathbf{I}-\lambda \bm{\tilde\beta}\bm{\tilde\beta}^T$ is positive definite, since (1) implies that $I(\bm\tilde\beta)$ is positive definite. Now since $K$ eigenvalues of $\lambda \bm{\tilde\beta}\bm{\tilde\beta}^T$ are zero and one is $\lambda \bm{\tilde\beta}^T\bm{\tilde\beta}$ the condition is $2-\lambda \bm{\tilde\beta}^T\bm{\tilde\beta}>0 $. By condition (1), there exists some $\lambda>0$ such that the condition holds, thus proving the theorem.

\section{Tuning ridge regression in the presence of separation}

Next we will derive theoretical conditions under which the optimal penalty parameter will be non-zero. We consider a particular model, i.e. a saturated model with categorical predictors. A motivating example is given first. This is then followed by an informal discussion of the existing criteria for tuning the penalty parameter in the presence of \textit{separation}. Finally, formal statements and proofs are given in the end.

\section{A motivating example}

As a motivating example we consider a data set available within R's \citep{S_RefB} \textbf{logistf} package \citep{S_RefC}. The data consists of 130 sexually active college women that suffered from urinary tract infection (UTI) and 109 controls, together with the covariate information on the use of contraceptives (yes and no) and age ($\geq 24$, $<24$ years).

First, consider estimating the log-odds for different categories of reported use of diaphragm (DIA) by using the model defined in Eq. (1) in the main text omitting the intercept - Model 1.  Only 7 women (3\%) reported the use of diaphragm (DIA) and all of them suffered from UTI, hence quasi-complete separation occurs as a consequence of the combination of rare exposure and strong association to the outcome. Thus, the parameter estimates cannot be obtained by ML. To estimate the parameters by RR (using the R's \textbf{penalized} package \citep{S_RefD}), the unknown penalty parameter needs to be chosen. The optimization of the LOOCV deviance for a grid of 200 log-linearly equidistant penalty parameter values ranging from $10^{-6}$ (-13.82 on the log scale) to 10 (2.3 on the log scale) yields a penalty parameter that equals the smallest value of the pre-specified grid. No matter how small the minimum value of the pre-specified grid, the LOOCV deviance is always optimized at the smallest value, i.e. if allowed at zero, which results in non-existing parameter estimate for a group of women using DIA. In contrast, optimizing the penalty parameter by LOOCV mean prediction error yields a value larger than the smallest value of a pre-specified grid, i.e. larger than zero. Still, the amount of penalization is small, and the parameter estimates are much larger than those obtained by FP (Table \ref{tab.mot1}).

Second, the goal is to estimate the log-odds for each combination of age group ($\geq 24$, $<24$ years), use of lubricated condom (LC) and oral contraceptive (OC) by again using the model defined in Eq. (1) omitting the intercept - Model 2. Even if there is no \textit{separation} when estimating univariable models, quasi-complete separation occurs in the fully saturated model, since all women with 24 years of age or older who did not use UCL and OC suffered from UTI. Like in the previous example, the LOOCV deviance is minimized at the smallest value of the grid, while LOOCV mean prediction error is not. 

In both examples tuned ridge regression when using the LOOCV deviance is, in terms of obtaining finite parameter estimates, not a solution for the \textit{separation}, while it is when using the LOOCV mean prediction error. However, when using the later criterion, the point estimates are (unreasonably) large. 
In the next section we present theoretically why this occurs in this particular model.

\begin{table}[ht]
\caption{Coefficient estimates for the urinary tract infection data obtained by ML, FP and RR where the optimal penalty parameter ($\lambda$) was obtained by minimizing LOOCV deviance (D) or LOOCV squared prediction error (PE).}
\label{tab.mot1}
\setlength{\tabcolsep}{6pt}
\centering
 \begin{threeparttable}
\begin{tabular}{llrrrr}
  \hline
& & ML & FP & \multicolumn{2}{c}{RR} \\
& &   &   &  D &  PE \\
  \hline
Model 1&  &  &  & -13.82\tnote{b} & -7.66\tnote{b} \\
 \multirow{2}{*}{DIA} &no  & 0.12 & 0.12 & 0.12 & 0.12 \\
  &yes &  -\tnote{a} & 2.71 & 13.18 & 7.58 \\\hline
 Model 2 & &  &  & -13.82\tnote{b} & -3.53\tnote{b} \\
 \multirow{8}{*}{age:LC:OC} &$<24$:no:no&  1.50 & 1.41 & 1.50 & 1.49 \\
  &$\geq24$:no:no & -\tnote{a} & 2.20 & 12.66 & 3.61 \\
  &$<24$:yes:no & -0.12 & -0.12 & -0.12 & -0.12 \\
  &$\geq24$:yes:no & -1.10 & -0.96 & -1.10 & -1.08 \\
  &$<24$:no:yes & 0.39 & 0.39 & 0.39 & 0.39 \\
  &$\geq24$:no:yes & -0.69 & -0.66 & -0.69 & -0.69 \\
  &$<24$:yes:yes & -0.10 & -0.09 & -0.10 & -0.09 \\
  &$\geq24$:yes:yes & -\tnote{a} & -1.61 & -12.02 & -3.06 \\
   \hline

\end{tabular}
\begin{tablenotes}
\item[a] the estimate does not exist \item[b] optimal penalty parameter (log scale)
\end{tablenotes}
\end{threeparttable}
\end{table}

\section{When is tuned ridge regression a potential solution when using existing criterion functions?}
\label{sec1}
In this section we show under which conditions could tuned ridge regression be considered a potential solution to \textit{separation}, when applied to a model, which was considered in a previous section. First, we introduce some notation. Assume a binary outcome $Y=\{0,1\}$ and a categorical covariate $X$ with $K$ categories, $X=\{1,\ldots,K\}$. $X$ can also be defined from a set of categorical covariates $X_j=\{1,\ldots,K_j\}$, $j=1,\ldots,L$, by joining all possible combinations of their levels, which yields $K=\prod_{j=1}^L K_j$ (fully saturated model). We observe $n$ independent realizations of pairs $(Y_i,X_i)$, denoted as $(y_i,x_i)$, where $Y_i\sim B(\pi_i)$ and $\pi_i=P(Y_i=1|X_i)$, where $B(\cdot)$ denotes Bernoulli distribution. The data can be summarized in the following $K\times 2$ contingency table,
\begin{center}\begin{tabular}{ll|cc|l}
&&\multicolumn{2}{c|}{$Y$}&\\
&&0&1&\\\hline
\multirow{ 4}{*}{$X$}&1& $a_{10}$&$a_{11}$&$a_{1\bullet}$\\
&$\ldots$&$\ldots$&$\ldots$&$\ldots$\\
&$k$&$a_{k0}$&$a_{k1}$&$a_{k\bullet}$\\
&$\ldots$&$\ldots$&$\ldots$&$\ldots$\\
&$K$&$a_{K0}$&$a_{K1}$&$a_{K\bullet}$\\\hline
&&$a_{\bullet 0}$&$a_{\bullet 1}$&$n$
\end{tabular}\end{center}

Then the model from the previous section is,
\begin{equation}
\label{mod1}
\log\frac{\pi_i}{1-\pi_i}=\beta_1 z_{i1}+\ldots+ \beta_K z_{iK}\mbox{, }i=1,\ldots,n,
\end{equation}
where $z_{ik}$ denotes the dummy covariate associated with the $k$th category, $k=1,\ldots,K$, i.e. $z_{ik}=1$ if $x_i=k$ and $z_{ik}=0$ otherwise. Maximum likelihood estimates are easily obtained and are given by
\begin{equation*}
\label{pks.1}
\hat\beta_k=\log\left[\frac{a_{k1}}{a_{k0}}\right]\mbox{, }k=1,\ldots,K
\end{equation*}
Observe that if any $a_{k1}$ or $a_{k0}$ are zero, the estimate is not defined, which is a well known consequence of \textit{separation} \citep{S_RefE, S_RefF}. To obtain finite estimates of $\beta_k$ even when $a_{k1}$ or $a_{k0}$ are zero (but not both simultaneously) one can for example apply the following ridge-type penalty
$$P(\lambda)=\frac{\lambda}{2}\sum_{k=1}^K \beta_k^2,$$
for some $\lambda\geq 0$. We show later that the approximate penalized estimates are then $\hat\beta_k(\lambda)=\log(\hat\pi_k(\lambda)/(1-\hat\pi_k(\lambda)))$, where
\begin{equation*}
\label{eq.pkl}
\hat\pi_k(\lambda)=\frac{a_{k1}+2\lambda}{a_{k\bullet}+4\lambda} \mbox{, }k=1,\ldots,K.
\end{equation*}
Note that this approximate solution is equivalent to penalizing the likelihood by the generalized Jeffrey's invariant prior and that $\lambda=1/4$ is the Jeffrey's invariant prior used by \citet{Firth93}. Any $\lambda>0$ will provide finite estimates with complete or quasi-complete separation. Observe that model (\ref{mod1}) omits the intercept; some results when the model includes the intercept are given later. In what follows, we consider optimizing the penalty parameter by using (i) leave-one-out cross-validation (LOOCV) or (ii) independent validation set. While LOOCV can in practice be computationally demanding it is the only CV scheme where theoretical investigation is feasible. In all the theoretical results we assume that $a_{k0},a_{k1}>1$ for all $k=1,\ldots,K$, unless stated otherwise.

\subsection*{Tuning the deviance}

Consider optimizing the LOOCV deviance which for our setting is given by
\begin{equation}
\label{eq.dev}
D=\sum_{k=1}^K D_k=-2\sum_{k=1}^K \left(a_{k0}\log\left(1-\hat\pi_{k0}(\lambda)\right)+a_{k1}\log \hat\pi_{k1}(\lambda)   \right),
\end{equation}
where \begin{equation*}
\label{pr.mod1.0}
\hat\pi_{k0}(\lambda)= \frac{a_{k1}+2\lambda}{a_{k\bullet}-1+4\lambda} \mbox{, }k=1,\ldots,K,
\end{equation*}
\begin{equation*}
\label{pr.mod1.1}
\hat\pi_{k1}(\lambda)= \frac{a_{k1}-1+2\lambda}{a_{k\bullet}-1+4\lambda} \mbox{, }k=1,\ldots,K.
\end{equation*}

We formally show later that when there is no \textit{separation} in the data, i.e. $|a_{k0}-a_{k1}|\neq a_{k\bullet}$ for $k=1,\ldots,K$, the LOOCV deviance is always minimized at some $\lambda>0$. Furthermore, whenever the difference between $a_{k0}$ and $a_{k1}$ is not large enough for all $k=1,\ldots,K$ the LOOCV deviance is optimized when using $\lambda=\infty$. Intuitively, whenever $a_{l0}=a_{l1}$, $a_{l0},a_{l1}>1$ holds, then $\hat\pi_{l0}(\lambda)\geq 1/2$ and $\hat\pi_{l1}(\lambda)\leq 1/2$, with the equalities applying, if and only if $\lambda=\infty$. Increasing $\lambda$ pools both event probabilities towards 1/2, therefore, both terms of $D_l$, i.e. $a_{l0}\log(1-\hat\pi_{l0}(\lambda))$ and $a_{l1}\log \hat\pi_{l1}(\lambda)$, are increasing functions of $\lambda$, which implies that $D_l$ is a decreasing function of $\lambda$ and is minimized at $\lambda=\infty$. However, when $a_{l0}$ and $a_{l1}$ are different enough
, then assuming without the loss of generality that $a_{l0}<a_{l1}$, $ 1-\hat\pi_{l0}^\lambda \leq 1/2 \leq \hat\pi_{l1}^\lambda$, with the equalities applying if and only if $\lambda=\infty$. Increasing $\lambda$ pools the two event probability estimates towards 1/2, hence $\log( 1-\hat\pi_{l0}(\lambda) )$ and $\log \hat\pi_{l1}(\lambda)$ are increasing and decreasing functions of $\lambda$, respectively. As the two terms in $D_l$ have different weights the gain of penalization up to some $\lambda$ for nonevents is larger than the loss for events and $D_l$ decreases. However, since $1-\hat\pi_{l0}(\lambda)$ approaches 1/2 more quickly than $\hat\pi_{l1}(\lambda)$, what can be seen from the derivatives with respect to $\lambda$, for a large enough $\lambda$ the loss for the events becomes larger as the gain for the nonevents, and the deviance becomes an increasing function of $\lambda$. Hence, the LOOCV deviance will be optimized for some $0<\lambda<\infty$.

However, with complete separation the LOOCV deviance is minimized at $\lambda=0$, leading to the ML solution and non-existent parameter estimates. In contrast, with quasi-complete separation, the LOOCV deviance can also be minimized at $\lambda>0$, but not always. Intuitively, when there are no events or nonevents for one category $k$ of $X$, e.g. in category $k$ there are no events, $a_{k1}=0$, then the term $a_{k1}\log \hat\pi_{k1}(\lambda)$ disappears from (\ref{eq.dev}) and $D_k$ becomes an increasing function of $\lambda$ as the ML estimate of the event probability in this category of $X$ is zero; hence $\log (1-\hat\pi_{k0}))=\log 1=0$, but for $\lambda>0$, $\log(1- \hat\pi_{k0}(\lambda))$ is a decreasing function of $\lambda$ and $0<\log(1- \hat\pi_{k0}(\lambda))\leq 1/2$ holds for all $\lambda>0$. Thus, penalization will always increase $D_k$ and whenever the gain of penalization for $D_l$, $l\neq k$ is smaller than the loss for $D_k$, $D$ will be optimized at $\lambda=0$. However, the gain of penalization can also be larger than the loss and the deviance will be optimized at $\lambda>0$. This occurs, if there are many categories with a small number of subjects within which no \textit{separation} exists. In a $2\times 2$ table however, the gain of penalization can exceed the loss only in exceptional cases (the exact conditions are given later).

When an independent validation set is available, the validation deviance, $D^V(\lambda)=D^V$, is
\begin{equation*}
\label{eq.dev.test1}
D^V=-2\sum_{k=1}^K D_k^V=-2\sum_{k=1}^K \left(a_{k0}^V\log(1-\hat\pi_{k}(\lambda))+a_{k1}^V\log \hat\pi_{k}(\lambda)   \right),
\end{equation*}
where $a_{k0}^V$ and $a_{k1}^V$ are the respective number of nonevents and events for category $k$ in the validation set. We formally show later that when there is complete separation in the original data and there is no \textit{separation} in the validation set then the validation deviance is minimized at $\lambda>0$, thus leading to finite parameter estimates even with a complete separation.

\subsection*{Tuning the squared prediction error}

The LOOCV squared prediction error is for our setting given by
\begin{equation*}
\label{eq.pm}
PE(\lambda)=\frac{1}{n}\sum_{k=1}^K \left( a_{k0}(\hat\pi_{k0}(\lambda))^2+a_{k1}(1-\hat\pi_{k1}(\lambda))^2  \right) .
\end{equation*}

For this criterion we show later, that it is minimized at $\lambda=0$ if and only if there is complete separation in the data and is hence superior to LOOCV deviance in a setting with a quasi-complete separation since it will in this case, in contrast to LOOCV deviance, always yield finite parameter estimates.

The validation squared prediction error is,
\begin{equation*}
\label{eq.pe.test1}
PE^V(\lambda)=\frac{1}{n_V}\sum_{k=1}^K \left( a_{k0}^V(\hat\pi_k(\lambda))^2+a_{k1}^V(1-\hat\pi_k(\lambda))^2  \right),
\end{equation*}
where $n_V$ is the number of subjects in the validation set. We later derive a similar result as for the validation deviance: even with complete separation in the original data, the validation prediction error will be minimized at $\lambda>0$ as long as there is no  \textit{separation} in the validation set.



\section{Formal derivation}

Here we derive the estimators used in Section \ref{sec1}. Theoretical results from that section are then formally stated. The proofs are then given in the end of this Additional file. We assume in all the theoretical results that $a_{k0},a_{k1}>1$ for all $k=1,\ldots,K$, unless stated otherwise. Throughout, if some function $\phi(x)$ for $x\in(0,\infty)$ reaches its infimum at $x=0$ or $x=\infty$, we define this as $\mbox{argmin}_{x\in(0,\infty)}\phi(x) =0$ and $\mbox{argmin}_{x\in(0,\infty)}\phi(x) =\infty$, respectively.

Observe that
\begin{equation}
\label{eq1}
 z_1+\ldots+z_K=1\mbox{; }
\end{equation}
\begin{equation}
\label{eq111}
 z_{j}^Tz_{k}=0\mbox{, for any } j\neq k,
\end{equation}
where $z_k=(z_{1k},\ldots,z_{nk})^T$ and $1$ is the identity vector of order $n$. Now assume the model
\begin{equation}
\label{mod2}
\log\frac{\pi_i}{1-\pi_i}=\gamma_0+\gamma_1 z_{i1}+\ldots+ \gamma_{K-1} z_{iK-1}\mbox{, }i=1,\ldots,n.
\end{equation}

It follows from (\ref{eq1}) and (\ref{eq111}) that
\begin{equation}
\label{eq2h}
\gamma_0=\beta_K\mbox{, }\gamma_k=\beta_k-\beta_K,
\end{equation}
hence the coefficients of the model (\ref{mod2}) can easily be obtained from the coefficients of the model (\ref{mod1}). The likelihood function for both models can be expressed as
$$L=\prod_{k=1}^K \pi_{k}^{a_{k1}} (1-\pi_{k})^{a_{k0}},  $$
where
$$\pi_k=P(Y_i=1|x_i=k)=P(Y_i=1|z_{ik}=1), $$
and the natural logarithm of $L$ is then simply,
$$l=\sum_{k=1}^K a_{k1}\log(\pi_{k})+a_{k0}\log(1-\pi_k)=\sum_{k=1}^K l_k.  $$

To estimate $\pi_k$ partial derivatives of $l$ are set to zero,
$$\frac{\partial l}{\partial \pi_k}=\frac{\partial l_k}{\partial \pi_k}=\frac{ -a_{k0}\pi_k+ a_{k1}(1-\pi_k)  }{ \pi_k(1-\pi_k)}=0\mbox{, }k=1,\ldots,K. $$

The partial derivatives are separable by $\pi_k$, hence the solutions are
\begin{equation}
\label{pks}
\hat\pi_k=\frac{a_{k1}}{a_{k\bullet}}\mbox{, }k=1,\ldots,K.
\end{equation}

Coefficient estimates of model (\ref{mod1}) and (\ref{mod2}) can easily be obtained from (\ref{pks}) and are given by
\begin{equation}
\label{pks.1}
\hat\beta_k=\log\left[\frac{a_{k1}}{a_{k0}}\right]\mbox{, }k=1,\ldots,K
\end{equation}
and
\begin{equation}
\label{pks.2}
\hat\gamma_0=\log\left[\frac{a_{K1}}{a_{K0}} \right]\mbox{, }\hat\gamma_k=\log\left[\frac{a_{k1}a_{K0}}{a_{k0}a_{K1}}\right]\mbox{, }k=1,\ldots,K-1,
\end{equation}
respectively. When for some category $k$ there are either no events or no nonevents, i.e. $|a_{k0}-a_{k1}|=a_{k\bullet}$ holds for some $k$, then the following results immediately follow from (\ref{pks.1}) and (\ref{pks.2}).
\begin{corollary}
\label{color1}
Let $A$ denote the set of indices for which
$$|a_{k0}-a_{k1}|=a_{k\bullet}>0,$$
holds. Then $\hat\beta_k$ does not exist for $k\in A$.
\end{corollary}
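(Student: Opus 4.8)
The plan is to reduce the hypothesis $|a_{k0}-a_{k1}|=a_{k\bullet}$ to the statement that one of the two cell counts in row $k$ of the contingency table vanishes, and then to read off non-existence of $\hat\beta_k$ from the per-category log-likelihood, using the closed form (\ref{pks.1}) only as a guide rather than as the rigorous argument (since that form was derived assuming the cell counts are nonzero).

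First I would record the elementary fact that for nonnegative integers $a_{k0},a_{k1}$ one has $|a_{k0}-a_{k1}|=a_{k0}+a_{k1}=a_{k\bullet}$ if and only if $\min(a_{k0},a_{k1})=0$; combined with $a_{k\bullet}>0$, this means that for each $k\in A$ exactly one of $a_{k0},a_{k1}$ equals $0$ while the other equals $a_{k\bullet}>0$. This is a one-line case check on the absolute value and is the only place any computation occurs. Then I would invoke the separability of the log-likelihood already noted before (\ref{pks}), namely $l=\sum_{k=1}^K l_k$ with $l_k=a_{k1}\log\pi_k+a_{k0}\log(1-\pi_k)$, so that $\beta_k$ enters only through $l_k$. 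For $k\in A$ with, say, $a_{k1}=0$, we get $l_k=a_{k0}\log(1-\pi_k)$, which is strictly increasing as $\pi_k\downarrow 0$, i.e. as $\beta_k=\log[\pi_k/(1-\pi_k)]\downarrow-\infty$; hence $l_k$, and therefore $l$, attains no maximum at any finite value of $\beta_k$. The case $a_{k0}=0$ is symmetric and pushes $\beta_k\uparrow+\infty$. In either case $\hat\beta_k$ is not a finite real number, which is exactly the claimed non-existence; this is of course consistent with the formal expression $\hat\beta_k=\log[a_{k1}/a_{k0}]$ in (\ref{pks.1}) being $\log 0=-\infty$ or $\log(a_{k\bullet}/0)=+\infty$, and with $\hat\pi_k=a_{k1}/a_{k\bullet}$ from (\ref{pks}) landing on the boundary of $(0,1)$.

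There is essentially no obstacle here: the corollary is an immediate consequence of the MLE formula (\ref{pks.1}), and the only point worth handling with care is that (\ref{pks.1}) was obtained under the tacit assumption that the denominators are nonzero, so for a fully rigorous argument one should argue from the monotonicity of $l_k$ in $\pi_k$ (equivalently in $\beta_k$) rather than from the undefined expression $\log[a_{k1}/a_{k0}]$ itself. I would phrase the write-up so that the separability $l=\sum_k l_k$ does the work of isolating the offending category and reduces the problem to the monotonicity of a single $\log$ term.
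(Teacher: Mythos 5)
Your proposal is correct. The paper itself gives no separate argument: Corollary 1 is stated as following ``immediately'' from the closed form $\hat\beta_k=\log\left[a_{k1}/a_{k0}\right]$ in (\ref{pks.1}), i.e.\ the estimate is read off as undefined when one cell count vanishes. You reach the same conclusion by a slightly more careful route: you first reduce $|a_{k0}-a_{k1}|=a_{k\bullet}>0$ to the statement that exactly one of $a_{k0},a_{k1}$ is zero, and then, exploiting the separability $l=\sum_k l_k$ already used in the paper, you argue that the single term $l_k$ is monotone in $\beta_k$ and so the likelihood has no finite maximizer in that coordinate. This buys genuine rigor that the paper's one-line justification lacks, since (\ref{pks.1}) was obtained from the score equations under the tacit assumption of an interior solution, whereas your monotonicity argument establishes non-existence of the maximizer directly; the underlying fact being exploited (a zero cell forces the fitted probability to the boundary, hence $\hat\beta_k\to\pm\infty$) is the same in both treatments.
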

\begin{corollary}
\label{color2}
Let $A$ denote the set of indices for which
$$|a_{k0}-a_{k1}|=a_{k\bullet}>0,$$
holds. If $K\in A$, then $\hat\gamma_k$, $k=0,\ldots,K-1$ do not exist. If $K\notin A$, then $\hat\gamma_k$ does not exist for $k\in A$.
\end{corollary}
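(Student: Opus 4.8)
The plan is to read both assertions directly off the closed-form maximum likelihood estimates just obtained: $(\ref{pks.1})$ gives $\hat\beta_k=\log[a_{k1}/a_{k0}]$ for model $(\ref{mod1})$, $(\ref{pks.2})$ gives $\hat\gamma_0=\log[a_{K1}/a_{K0}]$ and $\hat\gamma_k=\log[a_{k1}a_{K0}/(a_{k0}a_{K1})]$ for model $(\ref{mod2})$, and by the reparametrization $(\ref{eq2h})$ these satisfy $\hat\gamma_0=\hat\beta_K$ and $\hat\gamma_k=\hat\beta_k-\hat\beta_K$ for $k=1,\ldots,K-1$, so the whole argument reduces to Corollary~\ref{color1}. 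First I would record the elementary fact that, since $a_{k\bullet}=a_{k0}+a_{k1}$ with both summands non-negative, the condition $|a_{k0}-a_{k1}|=a_{k\bullet}>0$ holds exactly when one of $a_{k0},a_{k1}$ equals zero and the other is strictly positive; equivalently, $k\in A$ precisely when $\hat\pi_k=a_{k1}/a_{k\bullet}$ lies on the boundary $\{0,1\}$, in which case $\hat\beta_k=\log[a_{k1}/a_{k0}]$ equals $-\infty$ (if $a_{k1}=0$) or $+\infty$ (if $a_{k0}=0$) and hence does not exist, which is the content of Corollary~\ref{color1}.

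For the first assertion I would assume $K\in A$. Then $\hat\gamma_0=\hat\beta_K=\pm\infty$, so $\hat\gamma_0$ does not exist; and for $k\in\{1,\ldots,K-1\}$, substituting into $(\ref{pks.2})$ gives $\hat\gamma_k=\log[a_{k1}a_{K0}/(a_{k0}a_{K1})]$, where $K\in A$ forces $a_{K1}=0$ (a zero in the denominator of the argument) or $a_{K0}=0$ (a zero in its numerator), so the argument of the logarithm is not a positive real number and $\hat\gamma_k$ is undefined; hence all of $\hat\gamma_0,\ldots,\hat\gamma_{K-1}$ fail to exist. For the second assertion I would assume $K\notin A$, so that $a_{K0},a_{K1}>0$ and $\hat\gamma_0=\hat\beta_K=\log[a_{K1}/a_{K0}]$ is finite; then for $k\in A$ (necessarily $k\le K-1$) the ratio $a_{k1}a_{K0}/(a_{k0}a_{K1})$ in $(\ref{pks.2})$ has a zero in its numerator if $a_{k1}=0$ and in its denominator if $a_{k0}=0$, so $\hat\gamma_k$ is undefined, equivalently $\hat\gamma_k=\hat\beta_k-\hat\beta_K$ with $\hat\beta_k=\pm\infty$ and $\hat\beta_K$ finite. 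This gives the claim.

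The one subtlety, which I expect to be the main obstacle to making the argument fully rigorous rather than an obstacle to the idea, is the sub-case $k\in A$ and $K\in A$ simultaneously in the first assertion: there $(\ref{pks.2})$ formally returns $\log(0/0)$ and the identity $\hat\gamma_k=\hat\beta_k-\hat\beta_K$ returns an indeterminate $\infty-\infty$, so neither manipulation is self-contained. To close this I would revert to the definition of non-existence of the MLE as non-attainment of the supremum of $L=\prod_{k}\pi_k^{a_{k1}}(1-\pi_k)^{a_{k0}}$: since $\hat\pi_K\in\{0,1\}$, every maximizing sequence must drive the category-$K$ linear predictor, which in model $(\ref{mod2})$ is exactly $\gamma_0$, to $\pm\infty$, so the supremum is not attained and no component $\hat\gamma_0,\ldots,\hat\gamma_{K-1}$ is well-defined — which covers the degenerate sub-case and is consistent with the direct computation above.
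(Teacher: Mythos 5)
Your proposal is correct and follows essentially the paper's own route: the paper treats Corollary~\ref{color2} as an immediate consequence of the closed forms (\ref{pks.1}), (\ref{pks.2}) and the reparametrization (\ref{eq2h}), exactly as you do. Your extra handling of the degenerate sub-case $k\in A$ and $K\in A$ (the formal $\log(0/0)$) via non-attainment of the likelihood supremum is a careful elaboration of what the paper leaves implicit, not a different argument.
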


\subsection{Ridge penalty for the $\beta$ coefficients}

Introduce the RR penalty for the coefficients $\beta_k$, $k=1,\ldots,K$, i.e.
$$P(\lambda)=\frac{\lambda}{2}\sum_{k=1}^K \beta_k^2,$$
for some $\lambda\geq 0$, or equivalently in terms of $\pi_k$, $k=1,\ldots,K$ and $\gamma_k$, $k=0,\ldots,K-1$ as
\begin{equation}
\label{pen}
 P(\lambda)=\frac{\lambda}{2}\sum_{k=1}^K \log^2\left[ \frac{\pi_k}{1-\pi_k}  \right]=\frac{\lambda}{2}\left[\gamma_0^2+ \sum_{k=1}^{K-1}(\gamma_0+\gamma_k)^2 \right].
 \end{equation}

Applying penalty (\ref{pen}) to both models obviously leads to the same solution in terms of the event probability estimation, and therefore the relation (\ref{eq2h}) still applies. The penalized log likelihood for both models, as a function of $\pi_k$, is then
$$l^P= l-P(\lambda), $$
and the partial derivatives are
$$\frac{\partial l^P}{\partial \pi_k}=\frac{\partial l_k}{\partial \pi_k}-\frac{\lambda}{\pi_k(1-\pi_k)}\log \left[ \frac{\pi_k}{1-\pi_k}  \right]  \mbox{, }k=1,\ldots,K. $$

Setting the partial derivatives to zero again yields a system of equations, which are separable by $\pi_k$. The one step Newton solutions setting the initial values to 1/2 are then
\begin{equation}
\label{eq.pkl}
\hat\pi_k(\lambda)=\frac{a_{k1}+2\lambda}{a_{k\bullet}+4\lambda} \mbox{, }k=1,\ldots,K.
\end{equation}

Note that setting $\lambda$ to zero yields the ML solution, hence for $\lambda=0$ Corollary \ref{color1} and \ref{color2} apply. For $\lambda\rightarrow \infty$
$$\lim_{\lambda\rightarrow \infty}\hat\pi_k(\lambda)=\frac{1}{2}, $$
which is the fully penalized estimate for this example. Using $\lambda=1/4$ yields FP estimator while for $K=1$ setting $\lambda=1$ gives the Agresti-Coull estimator.

\subsubsection{Tuning the LOOCV deviance}

The LOOCV deviance, $D(\lambda)=D$, is
\begin{equation}
\label{eq.dev}
D=\sum_{k=1}^K D_k=-2\sum_{k=1}^K \left(a_{k0}\log\left(1-\hat\pi_{k0}(\lambda)\right)+a_{k1}\log \hat\pi_{k1}(\lambda)   \right),
\end{equation}
where $\hat\pi_{k0}(\lambda)$ and $\hat\pi_{k1}(\lambda)$ denote the penalized maximum likelihood estimates of the event probability when omitting subject $z_{ik}=1$,  $y_i=0$ and $z_{ik}=1$, $y_i=1$, respectively, i.e.
\begin{equation}
\label{pr.mod1.0}
\hat\pi_{k0}(\lambda)= \frac{a_{k1}+2\lambda}{a_{k\bullet}-1+4\lambda} \mbox{, }k=1,\ldots,K,
\end{equation}
\begin{equation}
\label{pr.mod1.1}
\hat\pi_{k1}(\lambda)= \frac{a_{k1}-1+2\lambda}{a_{k\bullet}-1+4\lambda} \mbox{, }k=1,\ldots,K.
\end{equation}

Observe that $D$ is not defined for $\lambda=0$, if there is a single event, $a_{k1}$, or nonevent, $a_{k0}$ (or both) for any $k$. If however, there are no events or no nonevents, then $D$ is well defined also for $\lambda=0$, since in this case the term $a_{k1}\log \hat\pi_{k1}(\lambda)$ or $a_{k0}\log\left(1-\hat\pi_{k0}(\lambda)\right)$ vanishes. Let
$$\lambda_D=\mbox{argmin}_{\lambda\geq0} D(\lambda), $$
be the point at which the LOOCV deviance defined in (\ref{eq.dev}) attains its minimum.

First, assume that there is no \textit{separation} in the data, i.e. $|a_{k0}-a_{k1}|\neq a_{k\bullet}$ for $k=1,\ldots,K$. Then we have the following results.
\begin{proposition}
\label{prop.D.ns}
If $a_{k0},a_{k1}>1$ for all $k=1,\ldots,K$, i.e. there is no \textit{separation} in the data, then $\lambda_D>0$.
\end{proposition}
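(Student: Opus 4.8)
To prove Proposition~\ref{prop.D.ns}, the plan is to show that the LOOCV deviance $D(\lambda)$ of (\ref{eq.dev}) is \emph{strictly decreasing at the left endpoint} $\lambda=0$. Under the hypothesis $a_{k0},a_{k1}>1$ the leave-one-out probabilities $\hat\pi_{k0}(0)=a_{k1}/(a_{k\bullet}-1)$ and $\hat\pi_{k1}(0)=(a_{k1}-1)/(a_{k\bullet}-1)$ both lie strictly inside $(0,1)$, so each summand $D_k$, and hence $D$, is finite and differentiable at $\lambda=0$; it therefore suffices to check $D'(0)<0$, since a negative one-sided derivative produces a point $\lambda>0$ with $D(\lambda)<D(0)$, which rules out $\lambda_D=0$ under the $\mbox{argmin}$ convention fixed above.

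The computation I would carry out is elementary. Differentiating the explicit rational expressions (\ref{pr.mod1.0})--(\ref{pr.mod1.1}) gives
$$\hat\pi_{k0}'(\lambda)=\frac{2(a_{k0}-a_{k1}-1)}{(a_{k\bullet}-1+4\lambda)^2},\qquad \hat\pi_{k1}'(\lambda)=\frac{2(a_{k0}-a_{k1}+1)}{(a_{k\bullet}-1+4\lambda)^2}.$$
Since $D_k=-2\big(a_{k0}\log(1-\hat\pi_{k0})+a_{k1}\log\hat\pi_{k1}\big)$, the chain rule yields $D_k'=2a_{k0}\hat\pi_{k0}'/(1-\hat\pi_{k0})-2a_{k1}\hat\pi_{k1}'/\hat\pi_{k1}$. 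Evaluating at $\lambda=0$, where $1-\hat\pi_{k0}(0)=(a_{k0}-1)/(a_{k\bullet}-1)$, everything collapses to
$$D_k'(0)=\frac{4}{a_{k\bullet}-1}\left(\frac{a_{k0}(a_{k0}-a_{k1}-1)}{a_{k0}-1}-\frac{a_{k1}(a_{k0}-a_{k1}+1)}{a_{k1}-1}\right).$$

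The one step needing a little care is the algebra that follows: bringing the bracket over the common denominator $(a_{k0}-1)(a_{k1}-1)$ and expanding, the cross terms cancel and the numerator reduces to $-\big(a_{k0}(a_{k0}-1)+a_{k1}(a_{k1}-1)\big)$, equivalently to $-\big((a_{k0}-a_{k1})^2+2a_{k0}a_{k1}-a_{k\bullet}\big)$. Hence
$$D_k'(0)=\frac{-4\big(a_{k0}(a_{k0}-1)+a_{k1}(a_{k1}-1)\big)}{(a_{k\bullet}-1)(a_{k0}-1)(a_{k1}-1)}<0$$
for every $k$, because under $a_{k0},a_{k1}>1$ each factor in the denominator and the whole numerator are strictly positive. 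Summing over $k$ gives $D'(0)=\sum_{k=1}^K D_k'(0)<0$, so $D(0)$ is not a minimiser and $\lambda_D>0$. I do not foresee a substantive obstacle: the only subtlety worth flagging is that the strict inequalities $a_{k0},a_{k1}>1$ --- rather than merely the no-separation condition $|a_{k0}-a_{k1}|\neq a_{k\bullet}$ --- are exactly what is used both for the differentiability of $D$ at $0$ and for the positivity of that numerator, which is why the hypothesis is stated in this stronger form.
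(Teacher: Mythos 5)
Your proof is correct and follows essentially the same route as the paper: both differentiate the LOOCV deviance and show that $\frac{dD}{d\lambda}\big|_{\lambda=0}=-4\sum_k \frac{a_{k0}(a_{k0}-1)+a_{k1}(a_{k1}-1)}{(a_{k0}-1)(a_{k1}-1)(a_{k\bullet}-1)}<0$ under $a_{k0},a_{k1}>1$, so $\lambda=0$ cannot be the minimiser. The only cosmetic difference is that the paper writes out the derivative for general $\lambda$ (needed for its subsequent lemmas) and then sets $\lambda=0$, whereas you evaluate directly at the endpoint; the numerator $a_{k0}^2+a_{k1}^2-a_{k\bullet}$ in the paper coincides with your $a_{k0}(a_{k0}-1)+a_{k1}(a_{k1}-1)$.
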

\begin{lemma}
\label{lemma.D.ns.1}
Assume that $a_{k0},a_{k1}>1$ and $(a_{k0}-a_{k1})^2-a_{k\bullet}> 0 $ for all $k=1,\ldots,K$. Then $0<\lambda_D<\infty$.
\end{lemma}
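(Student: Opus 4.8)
The plan is to build directly on Proposition~\ref{prop.D.ns}, which already gives $\lambda_D>0$: since the hypotheses $a_{k0},a_{k1}>1$ are precisely the no--separation assumption of that proposition, it remains only to show $\lambda_D<\infty$ under the extra condition $(a_{k0}-a_{k1})^2-a_{k\bullet}>0$. First I would note that $D(\lambda)$ extends to a continuous function on the compactified interval $[0,\infty]$: from (\ref{pr.mod1.0})--(\ref{pr.mod1.1}) we have $\hat\pi_{k0}(0^+)=a_{k1}/(a_{k\bullet}-1)$ and $\hat\pi_{k1}(0^+)=(a_{k1}-1)/(a_{k\bullet}-1)$, both in the open unit interval because $a_{k0},a_{k1}>1$, so $D(0^+)$ is finite; and $\hat\pi_{k0}(\lambda),\hat\pi_{k1}(\lambda)\to 1/2$ as $\lambda\to\infty$, so $D(\lambda)\to D(\infty):=2\log 2\,\sum_{k=1}^K a_{k\bullet}$. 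Hence the infimum of $D$ over $(0,\infty)$ is attained (at $0$, at $\infty$, or at an interior point), so $\lambda_D$ is well defined; by Proposition~\ref{prop.D.ns} it is not $0$. It therefore suffices to exhibit a finite $\lambda_0$ with $D(\lambda_0)<D(\infty)$, which forces the infimum to lie strictly below the value reached in the limit $\lambda\to\infty$ and hence $\lambda_D<\infty$.

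The core step is the leading-order behaviour of $D(\lambda)-D(\infty)$ as $\lambda\to\infty$. Writing $\delta_k=a_{k0}-a_{k1}$ and $t_k=a_{k\bullet}-1+4\lambda$, a direct manipulation of (\ref{pr.mod1.0})--(\ref{pr.mod1.1}) gives $2\bigl(1-\hat\pi_{k0}(\lambda)\bigr)=1+(\delta_k-1)/t_k$ and $2\hat\pi_{k1}(\lambda)=1-(\delta_k+1)/t_k$, so that
\begin{equation*}
D_k(\lambda)-2a_{k\bullet}\log 2=-2\left[a_{k0}\log\!\left(1+\frac{\delta_k-1}{t_k}\right)+a_{k1}\log\!\left(1-\frac{\delta_k+1}{t_k}\right)\right].
\end{equation*}
Expanding $\log(1+x)=x+O(x^2)$ and using $a_{k0}+a_{k1}=a_{k\bullet}$, the coefficient of $1/t_k$ collapses to $a_{k0}(\delta_k-1)-a_{k1}(\delta_k+1)=\delta_k^2-a_{k\bullet}$, whence $D_k(\lambda)-2a_{k\bullet}\log 2=-2(\delta_k^2-a_{k\bullet})/t_k+O(\lambda^{-2})$. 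Since $t_k=4\lambda+O(1)$ uniformly over the finitely many $k$, summing over $k$ yields
\begin{equation*}
D(\lambda)-D(\infty)=-\frac{1}{2\lambda}\sum_{k=1}^K\bigl((a_{k0}-a_{k1})^2-a_{k\bullet}\bigr)+O(\lambda^{-2}).
\end{equation*}
Under the hypothesis every summand is strictly positive, so the right-hand side is negative for all sufficiently large $\lambda$; fixing any such $\lambda_0$ gives $D(\lambda_0)<D(\infty)$, which together with Proposition~\ref{prop.D.ns} yields $0<\lambda_D<\infty$.

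The only real obstacle is the bookkeeping in that expansion: one must check that the $O(1/\lambda)$ term does \emph{not} vanish here (in contrast to what happens when one differentiates $D$, where the $O(1/\lambda)$ contributions cancel), and that the remainders are genuinely $O(\lambda^{-2})$ uniformly in $k$. An essentially equivalent route, which I would use instead if it reads more cleanly in the surrounding notation, is to differentiate: the same computation gives $D'(\lambda)=\tfrac{1}{2\lambda^2}\sum_{k}\bigl((a_{k0}-a_{k1})^2-a_{k\bullet}\bigr)+O(\lambda^{-3})>0$ for large $\lambda$, so $D$ is eventually strictly increasing, again ruling out a minimiser at $\infty$. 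Everything else — continuity, existence of the minimiser on $[0,\infty]$, and the reduction to ``$D(\lambda_0)<D(\infty)$ for some finite $\lambda_0$'' — is routine.
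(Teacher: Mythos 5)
Your proof is correct, and it reaches the conclusion by a mildly different route than the paper. The paper argues entirely through the closed-form derivative (\ref{der}): the numerator of each summand is $a_{k0}^2+a_{k1}^2-a_{k\bullet}-2\lambda\left((a_{k0}-a_{k1})^2-a_{k\bullet}\right)$, positive at $\lambda=0$ and (under the hypothesis) eventually negative, so $\frac{dD}{d\lambda}<0$ near $0$ and $\frac{dD}{d\lambda}>0$ for all large $\lambda$, forcing an interior minimum; the paper phrases this as a single crossing point $\lambda^-$, which for $K>1$ is more than is strictly needed (the $K$ summands change sign at different values of $\lambda$, so a unique crossing is not immediate), though only the endpoint behaviour is actually used. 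You instead import $\lambda_D>0$ from Proposition~\ref{prop.D.ns} (whose proof is exactly the paper's $\lambda=0$ computation) and rule out $\lambda_D=\infty$ by comparing values: expanding $D(\lambda)-D(\infty)$ via $2(1-\hat\pi_{k0}(\lambda))=1+(\delta_k-1)/t_k$ and $2\hat\pi_{k1}(\lambda)=1-(\delta_k+1)/t_k$ from (\ref{pr.mod1.0})--(\ref{pr.mod1.1}) gives the leading coefficient $-\tfrac{1}{2\lambda}\sum_k\left((a_{k0}-a_{k1})^2-a_{k\bullet}\right)<0$, so $D(\lambda_0)<D(\infty)$ for large finite $\lambda_0$; your alternative large-$\lambda$ derivative asymptotics agree with what (\ref{der}) gives directly. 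Both arguments pivot on the same quantity $\sum_k\left((a_{k0}-a_{k1})^2-a_{k\bullet}\right)>0$; your value-comparison version buys you freedom from any claim about the number of sign changes of $D'$, at the modest cost of checking the first-order term does not cancel and that the $O(\lambda^{-2})$ remainders are uniform in $k$, which is routine since $K$ is finite and $t_k=4\lambda+O(1)$.
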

\begin{lemma}
\label{lemma.D.ns.2}
Assume that $a_{k0},a_{k1}>1$ and $(a_{k0}-a_{k1})^2-a_{k\bullet}\leq 0 $ for all $k=1,\ldots,K$. Then $\lambda_D=\infty$.
\end{lemma}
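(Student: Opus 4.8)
The plan is to establish the stronger statement that, under the hypotheses, \emph{each} summand $D_k(\lambda)$ of the LOOCV deviance (\ref{eq.dev}) is strictly decreasing on $(0,\infty)$. Since the hypothesis $(a_{k0}-a_{k1})^2-a_{k\bullet}\le 0$ is assumed for every $k$, summing then makes $D(\lambda)=\sum_k D_k(\lambda)$ strictly decreasing on $(0,\infty)$; because no separation is assumed, $D$ is in addition continuous on $[0,\infty)$ and has a finite limit as $\lambda\to\infty$, so its infimum over $\lambda\ge 0$ is approached only in the limit and is never attained. By the convention fixed at the start of the section this gives $\lambda_D=\infty$.

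For fixed $k$, abbreviate $a=a_{k1}$, $b=a_{k0}$, $n=a_{k\bullet}=a+b$, noting that $a_{k0},a_{k1}>1$ forces $a,b\ge 2$. From (\ref{pr.mod1.0})--(\ref{pr.mod1.1}) one has $1-\hat\pi_{k0}(\lambda)=\frac{b-1+2\lambda}{n-1+4\lambda}$ and $\hat\pi_{k1}(\lambda)=\frac{a-1+2\lambda}{n-1+4\lambda}$, both strictly in $(0,1)$ for $\lambda>0$, so $D_k$ is smooth there. Differentiating $D_k=-2\bigl(b\log(1-\hat\pi_{k0})+a\log\hat\pi_{k1}\bigr)$ and collecting the two logarithmic derivatives yields
\[
D_k'(\lambda)=\frac{-4}{n-1+4\lambda}\,h_k(\lambda),\qquad
h_k(\lambda)=\frac{b(a-b+1)}{b-1+2\lambda}+\frac{a(b-a+1)}{a-1+2\lambda}.
\]
Since all three denominators are positive for $\lambda>0$, the sign of $D_k'(\lambda)$ is opposite to that of the numerator obtained by clearing the fractions in $h_k$.

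The crux is the elementary identity
\[
(b-1+2\lambda)(a-1+2\lambda)\,h_k(\lambda)=\bigl[a(a-1)+b(b-1)\bigr]+2\lambda\bigl[(a+b)-(a-b)^2\bigr],
\]
obtained by expanding and cancelling: the $\lambda^2$ term and the mixed $ab$ terms all drop out, the constant term collapses to $a^2+b^2-(a+b)=a(a-1)+b(b-1)$, and---the decisive point---the coefficient of $2\lambda$ collapses to exactly $a_{k\bullet}-(a_{k0}-a_{k1})^2$, the negative of the ``difference is large enough'' quantity from the hypothesis. Given this, the argument closes at once: the constant term is $\ge 2+2=4>0$ because $a,b\ge 2$, and the slope is $\ge 0$ by hypothesis, so the right-hand side is $>0$ for every $\lambda>0$; hence $h_k(\lambda)>0$ and $D_k'(\lambda)<0$ on $(0,\infty)$. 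Adding over $k$ (this is where the ``for all $k$'' in the hypothesis is used) gives $D'(\lambda)<0$ on $(0,\infty)$. Since $\hat\pi_{k0}(\lambda),\hat\pi_{k1}(\lambda)\to\frac12$ as $\lambda\to\infty$, each $D_k(\lambda)\to 2a_{k\bullet}\log 2$, so $D$ decreases to the finite value $2n\log 2$ without attaining it, and continuity at $\lambda=0$ rules out a boundary minimizer; with the stated convention, $\lambda_D=\infty$.

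I expect the only genuine obstacle to be the bookkeeping in that displayed expansion---tracking the cancellations carefully enough to see the threshold $(a_{k0}-a_{k1})^2-a_{k\bullet}$ appear on the nose as (minus) the coefficient of $2\lambda$, which is exactly what makes this hypothesis the sharp complement of the one in Lemma~\ref{lemma.D.ns.1}. The remaining steps---the sign tracking in $D_k'$, the positivity of the constant term coming from $a,b\ge 2$, and the passage from ``each $D_k$ is decreasing'' to ``$D$ is decreasing''---are routine.
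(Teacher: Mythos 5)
Your proof is correct and follows essentially the same route as the paper: you differentiate the LOOCV deviance term by term and, after clearing denominators, arrive at exactly the expression $a_{k0}(a_{k0}-1)+a_{k1}(a_{k1}-1)-2\lambda\bigl((a_{k0}-a_{k1})^2-a_{k\bullet}\bigr)$ in the last line of the paper's equation for $\frac{dD_k}{d\lambda}$, whose positivity under the hypothesis forces $\frac{dD}{d\lambda}<0$ for all $\lambda$ and hence $\lambda_D=\infty$. You merely make explicit some steps the paper leaves implicit (the cancellation bookkeeping, the finite limit $2a_{k\bullet}\log 2$, and the argmin convention), so there is nothing to correct.
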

\begin{remark}
$a_{k0}=a_{k1}$ is a special case of $(a_{k0}-a_{k1})^2-a_{k\bullet}\leq 0 $.
\end{remark}
\begin{lemma}
\label{lemma.D.ns.1.0}
Let $K=1$ and assume that $a_{10},a_{11}>1$, $(a_{10}-a_{11})^2-a_{1\bullet}>0$, then
\begin{equation}
\label{eq.opt.l}
\lambda_D=\frac{ a_{10}(a_{10}-1)+a_{11}(a_{11}-1) }{2(( a_{10}-a_{11} )^2-a_{1\bullet}  )} .
\end{equation}
\end{lemma}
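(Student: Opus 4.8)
The plan is to specialise the LOOCV deviance of (\ref{eq.dev}) to a single category, reduce the stationarity condition $D'(\lambda)=0$ to the vanishing of a polynomial, observe that this polynomial is actually linear in $\lambda$, and read off its unique root; Lemma~\ref{lemma.D.ns.1} then certifies that this root is the minimiser.

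First I would insert the leave-one-out probabilities (\ref{pr.mod1.0})--(\ref{pr.mod1.1}) into (\ref{eq.dev}) with $K=1$. Since $1-\hat\pi_{10}(\lambda)=\frac{a_{10}-1+2\lambda}{a_{1\bullet}-1+4\lambda}$ and $\hat\pi_{11}(\lambda)=\frac{a_{11}-1+2\lambda}{a_{1\bullet}-1+4\lambda}$, this yields
$$D(\lambda)=-2\bigl[a_{10}\log(a_{10}-1+2\lambda)+a_{11}\log(a_{11}-1+2\lambda)-a_{1\bullet}\log(a_{1\bullet}-1+4\lambda)\bigr].$$
On $[0,\infty)$ all three logarithmic arguments are strictly positive (this is where $a_{10},a_{11}>1$ enters), so $D$ is differentiable throughout and
$$D'(\lambda)=-4\left[\frac{a_{10}}{a_{10}-1+2\lambda}+\frac{a_{11}}{a_{11}-1+2\lambda}-\frac{2a_{1\bullet}}{a_{1\bullet}-1+4\lambda}\right].$$
Putting the bracket over the common denominator $(a_{10}-1+2\lambda)(a_{11}-1+2\lambda)(a_{1\bullet}-1+4\lambda)$, which is positive for $\lambda\ge 0$, I see that $D'(\lambda)=0$ is equivalent to the vanishing of the numerator polynomial
$$\mathcal{N}(\lambda)=\bigl[a_{10}(a_{11}-1+2\lambda)+a_{11}(a_{10}-1+2\lambda)\bigr](a_{1\bullet}-1+4\lambda)-2a_{1\bullet}(a_{10}-1+2\lambda)(a_{11}-1+2\lambda).$$

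The crux is to expand $\mathcal{N}$. Writing $S=a_{1\bullet}=a_{10}+a_{11}$ and $P=a_{10}a_{11}$, a direct expansion shows that the $\lambda^2$ coefficient cancels, leaving the linear function $\mathcal{N}(\lambda)=(S^2-S-2P)-2(S^2-S-4P)\,\lambda$. By hypothesis $S^2-S-4P=(a_{10}-a_{11})^2-a_{1\bullet}>0$, so the slope is nonzero and $\mathcal{N}$ has the single root
$$\lambda=\frac{S^2-S-2P}{2(S^2-S-4P)}=\frac{a_{10}(a_{10}-1)+a_{11}(a_{11}-1)}{2\bigl((a_{10}-a_{11})^2-a_{1\bullet}\bigr)},$$
using $S^2-S-2P=a_{10}^2+a_{11}^2-(a_{10}+a_{11})$. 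This value is strictly positive (positive numerator, since $a_{10},a_{11}>1$, over a positive denominator) and finite.

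Finally I would invoke Lemma~\ref{lemma.D.ns.1}, which under exactly the present hypotheses guarantees $0<\lambda_D<\infty$; hence $D$ attains its minimum at an interior critical point of $[0,\infty)$. Since $D'$ vanishes there only at the unique root of $\mathcal{N}$ just computed, that root must be $\lambda_D$, which therefore equals the displayed expression~(\ref{eq.opt.l}). The only genuine obstacle is the bookkeeping in expanding $\mathcal{N}$ and checking that the quadratic-in-$\lambda$ terms really do cancel; once that is verified everything else is immediate.
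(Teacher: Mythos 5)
Your proposal is correct and follows essentially the same route as the paper: the paper's general derivative formula already has a numerator linear in $\lambda$ (your expansion showing the $\lambda^2$ terms cancel reproduces exactly this), and setting it to zero gives the displayed root, with minimality certified by the sign of the derivative at $\lambda=0$ (the paper) or equivalently by Lemma~\ref{lemma.D.ns.1} (your appeal), which under the stated hypotheses are interchangeable.
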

\begin{remark}
Plugging (\ref{eq.opt.l}) into (\ref{eq.pkl}) gives a closed form approximation of RR penalized proportion estimate when optimizing the LOOCV deviance.
\end{remark}

When there is \textit{separation} in the data, i.e. for some $k$, $|a_{k0}-a_{k1}|=a_{k\bullet}$ holds, the following results are obtained.
\begin{proposition}
\label{prop.D.cs}
If $|a_{k0}-a_{k1}|=a_{k\bullet}>1$ for all $k=1,\ldots,K$, i.e. there is complete separation in the data, then $\lambda_D=0$.
\end{proposition}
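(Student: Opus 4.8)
The plan is to show directly that $D(\lambda) > D(0) = 0$ for every $\lambda > 0$, which forces $\lambda_D = \mbox{argmin}_{\lambda\geq 0} D(\lambda) = 0$. Since the LOOCV deviance (\ref{eq.dev}) decomposes as $D(\lambda)=\sum_{k=1}^K D_k(\lambda)$ with each summand depending only on the counts $(a_{k0},a_{k1})$ in category $k$, it suffices to analyse each $D_k$ separately and then add.

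First I would record what complete separation means category by category: the hypothesis $|a_{k0}-a_{k1}|=a_{k\bullet}$ forces, for each $k$, either $a_{k1}=0$ (so $a_{k0}=a_{k\bullet}>1$) or $a_{k0}=0$ (so $a_{k1}=a_{k\bullet}>1$). In the first case the term $a_{k1}\log\hat\pi_{k1}(\lambda)$ drops out of $D_k$ because its coefficient is zero (there are no event subjects in category $k$ to leave out), and substituting $a_{k1}=0$ into (\ref{pr.mod1.0}) gives $1-\hat\pi_{k0}(\lambda)=\frac{a_{k\bullet}-1+2\lambda}{a_{k\bullet}-1+4\lambda}$, so that
$$D_k(\lambda)=-2a_{k\bullet}\log\frac{a_{k\bullet}-1+2\lambda}{a_{k\bullet}-1+4\lambda}.$$
In the second case the term $a_{k0}\log(1-\hat\pi_{k0}(\lambda))$ drops out and substituting $a_{k1}=a_{k\bullet}$ into (\ref{pr.mod1.1}) gives $\hat\pi_{k1}(\lambda)=\frac{a_{k\bullet}-1+2\lambda}{a_{k\bullet}-1+4\lambda}$, hence exactly the same formula for $D_k(\lambda)$. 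This expression is well defined for all $\lambda\geq 0$ because $a_{k\bullet}-1\geq 1$, and in particular $D_k(0)=-2a_{k\bullet}\log 1=0$ — which is precisely where it matters that the problematic LOOCV term carries a zero coefficient under separation.

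Next I would observe that, writing $g_k(\lambda)=\frac{a_{k\bullet}-1+2\lambda}{a_{k\bullet}-1+4\lambda}$, we have $g_k(0)=1$ while for $\lambda>0$ the numerator is strictly smaller than the (positive) denominator, so $0<g_k(\lambda)<1$ and therefore $D_k(\lambda)=-2a_{k\bullet}\log g_k(\lambda)>0$; one can also note $g_k$ is strictly decreasing towards the limit $1/2$, so each $D_k$ is in fact strictly increasing on $[0,\infty)$, although only $D_k(\lambda)>D_k(0)$ is needed. Summing over the (nonempty) index set $k=1,\ldots,K$ gives $D(\lambda)=\sum_k D_k(\lambda)>0=D(0)$ for every $\lambda>0$, so the infimum of $D$ over $[0,\infty)$ is attained uniquely at $\lambda=0$, i.e. $\lambda_D=0$.

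There is no deep obstacle here; the argument is essentially a direct computation. The only points requiring care are the bookkeeping that makes $D$ well defined at $\lambda=0$ under separation — namely that for each category exactly one of the two LOOCV terms has coefficient zero and the other evaluates the estimate at $1-\hat\pi_{k0}(0)=1$ or $\hat\pi_{k1}(0)=1$, so no $\log 0$ arises — and the (trivial) observation that the two possible orientations of a completely separated category yield literally the same per-category deviance function, so no separate case analysis of the sign of $a_{k0}-a_{k1}$ is needed.
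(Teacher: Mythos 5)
Your proof is correct, and it reaches the conclusion by a slightly different, somewhat more elementary route than the paper. The paper plugs the separation condition (w.l.o.g.\ $a_{k0}=0$, $a_{k1}>1$) into its general derivative formula (\ref{der}) and observes that $-2\,dD_k/d\lambda = 4a_{k1}(a_{k1}-1)/\bigl((a_{k1}-1+2\lambda)(a_{k1}-1+4\lambda)\bigr) > 0$, so $D$ is strictly increasing on $[0,\infty)$ and hence minimized at $\lambda=0$; this fits the unified derivative-sign machinery it uses for all the deviance propositions. You instead write each separated category's contribution in closed form, $D_k(\lambda) = -2a_{k\bullet}\log\frac{a_{k\bullet}-1+2\lambda}{a_{k\bullet}-1+4\lambda}$, note that the two orientations of separation give the same expression, and conclude from $D(0)=0$ together with $D(\lambda)>0$ for $\lambda>0$ that the minimizer is $\lambda=0$ with no differentiation needed. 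What your version buys is the explicit (and conceptually illuminating) fact that under complete separation the LOOCV deviance attains its absolute lower bound $0$ at $\lambda=0$ — exactly the "perfect prediction" overoptimism discussed in the main text — while the paper's version additionally records strict monotonicity of $D$ as a byproduct (which you also note in passing); your bookkeeping of the vanishing $a_{k1}\log\hat\pi_{k1}$ or $a_{k0}\log(1-\hat\pi_{k0})$ term matches the paper's convention for why $D$ is well defined at $\lambda=0$ in this case, so there is no gap.
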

\begin{lemma}
\label{lemma.D.qcs.K}
Assume that there is quasi-complete separation in the data. Let $S=\{1,2,\ldots,K\}$ denote the set of all indices. Let $A$ denote the set of indices for which
$$|a_{k0}-a_{k1}|=a_{k\bullet}>1, $$
holds and let $B$ denote the set of indices for which
$$a_{k0},a_{k1}>1,$$
holds. Assume that $A\cup B=S$. If
$$\sum_{k\in B} \frac{(a_{k0}^2+a_{k1}^2-a_{k\bullet})}{(a_{k1}-1)(a_{k0}-1)(a_{k\bullet}-1)}> \sum_{k\in A}\frac{a_{k\bullet}}{a_{k\bullet}-1}$$
holds, then $\lambda_D>0$.
\end{lemma}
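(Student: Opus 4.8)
The plan is to reduce the claim to a statement about the right-hand derivative of $D$ at $\lambda=0$: I will show that the displayed hypothesis is equivalent to $D'(0)<0$, which forces the minimizer $\lambda_D$ to be strictly positive. Throughout I use the formula (\ref{eq.dev}) for $D=\sum_{k=1}^K D_k$ together with the leave-one-out probabilities (\ref{pr.mod1.0})--(\ref{pr.mod1.1}).

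First I would check that $D$ is well-behaved at the left endpoint. Since $A\cup B=S$, every category is either completely separated inside itself ($a_{k\bullet}>1$ with exactly one of $a_{k0},a_{k1}$ equal to $0$) or free of separation ($a_{k0},a_{k1}>1$); in particular no category has a single event or nonevent, so every summand $D_k$ is finite at $\lambda=0$ (for $k\in A$ one of the two logarithmic terms is absent because its coefficient vanishes). Moreover each $D_k$ is $C^1$ on a right-neighbourhood of $0$, because the denominators $a_{k0}-1+2\lambda$, $a_{k1}-1+2\lambda$, $a_{k\bullet}-1+4\lambda$ entering $1-\hat\pi_{k0}(\lambda)$ and $\hat\pi_{k1}(\lambda)$ are strictly positive there.

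Next I would evaluate $D_k'(0)$ category by category. Writing $D_k(\lambda)=-2\bigl(a_{k0}\log(a_{k0}-1+2\lambda)+a_{k1}\log(a_{k1}-1+2\lambda)-a_{k\bullet}\log(a_{k\bullet}-1+4\lambda)\bigr)$ (with the convention that a term carrying a zero coefficient is dropped), differentiation gives for $k\in B$
$$D_k'(0)=-4\left(\frac{a_{k0}}{a_{k0}-1}+\frac{a_{k1}}{a_{k1}-1}-\frac{2a_{k\bullet}}{a_{k\bullet}-1}\right)=-4\left(\frac{1}{a_{k0}-1}+\frac{1}{a_{k1}-1}-\frac{2}{a_{k\bullet}-1}\right),$$
and a one-line algebraic simplification rewrites the bracket as $\dfrac{a_{k0}^2+a_{k1}^2-a_{k\bullet}}{(a_{k0}-1)(a_{k1}-1)(a_{k\bullet}-1)}$; since $a_{k0},a_{k1}\ge 2$, both $1/(a_{k0}-1)$ and $1/(a_{k1}-1)$ exceed $1/(a_{k\bullet}-1)$, so this quantity is positive and $D_k'(0)<0$. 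For $k\in A$, say $a_{k1}=0$ (hence $a_{k\bullet}=a_{k0}$), only the nonevent term survives and one finds $D_k'(0)=\dfrac{4a_{k0}}{a_{k0}-1}=\dfrac{4a_{k\bullet}}{a_{k\bullet}-1}>0$; the symmetric computation for $a_{k0}=0$ yields the same value.

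Summing the contributions,
$$D'(0)=4\left[\sum_{k\in A}\frac{a_{k\bullet}}{a_{k\bullet}-1}-\sum_{k\in B}\frac{a_{k0}^2+a_{k1}^2-a_{k\bullet}}{(a_{k0}-1)(a_{k1}-1)(a_{k\bullet}-1)}\right],$$
so the inequality assumed in the lemma is exactly $D'(0)<0$. Finally, because $D$ is continuously differentiable at $0$ from the right, $D'(0)<0$ implies $D(\lambda)<D(0)$ for all sufficiently small $\lambda>0$; hence $\inf_{\lambda\ge 0}D(\lambda)<D(0)$, so $\lambda_D\neq 0$, i.e.\ $\lambda_D>0$ (the value $\lambda_D=\infty$ is not excluded, but is still positive under the convention adopted for $\mbox{argmin}$). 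The only genuinely delicate point is the bookkeeping at $\lambda=0$ for the separated categories and confirming differentiability there; the derivative evaluations themselves are elementary.
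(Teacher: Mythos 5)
Your proof is correct and takes essentially the same route as the paper: the paper also evaluates the derivative of the LOOCV deviance at $\lambda=0$, writes it over a common denominator, and notes that the lemma's inequality is exactly the condition that this derivative is negative, hence $\lambda_D>0$. Your per-category evaluations, $D_k'(0)=-4\,(a_{k0}^2+a_{k1}^2-a_{k\bullet})/\bigl((a_{k0}-1)(a_{k1}-1)(a_{k\bullet}-1)\bigr)$ for $k\in B$ and $D_k'(0)=4a_{k\bullet}/(a_{k\bullet}-1)$ for $k\in A$, are precisely the ``simple algebra'' the paper leaves implicit.
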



Note that Lemma \ref{lemma.D.qcs.K} gives only one situation where the LOOCV deviance in case of quasi-complete separation is optimized at $\lambda>0$. We summarize here for $K=2$ a few more interesting examples with quasi-complete separation: Proposition \ref{prop.1.s} explains a situation where $|a_{k0}-a_{k1}|=a_{k\bullet}$, $a_{l0}=a_{l1}$ for $l\neq k$, and Proposition \ref{prop.2.s} a situation where $|a_{k0}-a_{k1}|=a_{k\bullet}$, $a_{l0}\neq a_{l1}$ for $l\neq k$.

\begin{proposition}
\label{prop.1.s}
Let $K=2$. Assume that $|a_{k0}-a_{k1}|=a_{k\bullet}>1$, $a_{l0}=a_{l1}>1$, $l\neq k$. Then the following exhaustive and mutually exclusive situations are derived,
\begin{itemize}
\item[1.] $\lambda_D=0$ if any of the following holds
\begin{itemize}
\item[(i.)] $a_{k\bullet}>2$, $a_{l\bullet}<a_{k\bullet}(a_{k\bullet}-1)$, $a_{k\bullet}(\frac{a_{l\bullet}}{2}-1)(a_{l\bullet}-1)\geq a_{l\bullet}(a_{k\bullet}-1)$
\item[(ii.)] $a_{k\bullet}>2$, $a_{l\bullet}\geq a_{k\bullet}(a_{k\bullet}-1)$
\item[(iii.)] $a_{k\bullet}=2$, $a_{l\bullet}>4$
\end{itemize}
\item[2.] $\lambda_D=\infty$ if $a_{k\bullet}=2$, $a_{l\bullet}=4$
 \item[3.] $0<\lambda_D<\infty$ if $a_{k\bullet}>2$, $a_{l\bullet}<a_{k\bullet}(a_{k\bullet}-1)$, $a_{k\bullet}(\frac{a_{l\bullet}}{2}-1)(a_{l\bullet}-1)< a_{l\bullet}(a_{k\bullet}-1)$.
\end{itemize}
\end{proposition}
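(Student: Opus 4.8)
The plan is to evaluate the LOOCV deviance $D(\lambda)=D_k(\lambda)+D_l(\lambda)$ from (\ref{eq.dev}) in closed form for this two-cell configuration, reduce the location of $\lambda_D=\mbox{argmin}_{\lambda\ge0}D(\lambda)$ to the sign of one explicit quadratic in $\lambda$ together with a single comparison of the endpoint values $D(0)$ and $\lim_{\lambda\to\infty}D(\lambda)$, and then read off the five regimes by a short case split on the coefficients of that quadratic.

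First I would simplify the two summands of (\ref{eq.dev}). Since the LOOCV deviance is invariant under the relabelling $y\mapsto 1-y$ (which sends $\hat\pi_{k\bullet}(\lambda)$ to $1-\hat\pi_{k\bullet}(\lambda)$ and leaves $a_{l0}=a_{l1}$ unchanged), I may assume $a_{k1}=0$, so $a_{k0}=a_{k\bullet}$; then the event term of $D_k$ drops out and, by (\ref{pr.mod1.0}),
$$D_k(\lambda)=-2a_{k\bullet}\log\frac{a_{k\bullet}-1+2\lambda}{a_{k\bullet}-1+4\lambda}.$$
For cell $l$ the equality $a_{l0}=a_{l1}$ collapses (\ref{pr.mod1.0})--(\ref{pr.mod1.1}) to $1-\hat\pi_{l0}(\lambda)=\hat\pi_{l1}(\lambda)=\frac{a_{l\bullet}-2+4\lambda}{2(a_{l\bullet}-1+4\lambda)}$, so
$$D_l(\lambda)=-2a_{l\bullet}\log\frac{a_{l\bullet}-2+4\lambda}{2(a_{l\bullet}-1+4\lambda)}.$$
Differentiating, $D_k'(\lambda)>0$ and $D_l'(\lambda)<0$ on $[0,\infty)$, so $D$ need not be monotone; placing $D_k'$ and $D_l'$ over a common (positive) denominator shows $\mbox{sign}\,D'(\lambda)=\mbox{sign}\,Q(\lambda)$ on $[0,\infty)$, where
$$Q(\lambda)=a_{k\bullet}(a_{k\bullet}-1)(a_{l\bullet}-2+4\lambda)(a_{l\bullet}-1+4\lambda)-2a_{l\bullet}(a_{k\bullet}-1+2\lambda)(a_{k\bullet}-1+4\lambda).$$
Expanding, $Q$ is quadratic with leading coefficient $16\bigl(a_{k\bullet}(a_{k\bullet}-1)-a_{l\bullet}\bigr)$, linear coefficient $4(a_{k\bullet}-1)(2a_{k\bullet}a_{l\bullet}-3a_{k\bullet}-3a_{l\bullet})$, and constant term $Q(0)=(a_{k\bullet}-1)\bigl(a_{k\bullet}(a_{l\bullet}-2)(a_{l\bullet}-1)-2a_{l\bullet}(a_{k\bullet}-1)\bigr)$, whose sign is exactly that of $a_{k\bullet}(\tfrac{a_{l\bullet}}{2}-1)(a_{l\bullet}-1)-a_{l\bullet}(a_{k\bullet}-1)$, i.e. of the quantity appearing in the statement.

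Next I would collect the endpoint data and the facts I need. At $\lambda=0$ the $k$-term vanishes, so $D(0)$ is finite, and $\lim_{\lambda\to\infty}D(\lambda)=2(a_{k\bullet}+a_{l\bullet})\log 2=:D_\infty$ (both $\hat\pi$'s tend to $1/2$); a short computation gives $D(0)-D_\infty=2a_{l\bullet}\log\frac{a_{l\bullet}-1}{a_{l\bullet}-2}-2a_{k\bullet}\log 2$, hence $D(0)<D_\infty\iff\bigl(\tfrac{a_{l\bullet}-1}{a_{l\bullet}-2}\bigr)^{a_{l\bullet}}<2^{a_{k\bullet}}$. Since $\bigl(1+\tfrac1{j-2}\bigr)^{j}$ is strictly decreasing in the integer $j\ge4$, equals $(3/2)^4=81/16>4$ at $j=4$ and $(5/4)^6=15625/4096<4$ at $j=6$, and $2^{a_{k\bullet}}\ge4$, it follows that $D(0)>D_\infty$ precisely when $(a_{k\bullet},a_{l\bullet})=(2,4)$ and $D(0)<D_\infty$ for every other admissible pair; $D(0)=D_\infty$ cannot occur, as $\bigl(\tfrac{a_{l\bullet}-1}{a_{l\bullet}-2}\bigr)^{a_{l\bullet}}$ is non-integral for $a_{l\bullet}\ge4$. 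For $Q$ itself: (a) if $a_{k\bullet}\ge3$ then its linear coefficient is positive (because $2a_{k\bullet}a_{l\bullet}-3a_{k\bullet}-3a_{l\bullet}>0$ once $a_{k\bullet}\ge3,a_{l\bullet}\ge4$), so $Q$ is strictly increasing on $[0,\infty)$; (b) if moreover its leading coefficient is $\le0$, i.e. $a_{l\bullet}\ge a_{k\bullet}(a_{k\bullet}-1)\ge6$, then $Q(0)>0$ (since $a_{l\bullet}^2-5a_{l\bullet}+2>0$ for $a_{l\bullet}\ge6$); while for $a_{k\bullet}=2$ the leading coefficient of $Q$ is negative and $Q(0)>0$ directly.

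Finally I would assemble the cases. If $a_{k\bullet}\ge3$, then $Q$ is increasing by (a); when its leading coefficient is positive, i.e. $a_{l\bullet}<a_{k\bullet}(a_{k\bullet}-1)$, the sign of $D'$ on $(0,\infty)$ is constantly $+$ if $Q(0)\ge0$ — case 1(i), $D$ strictly increasing, $\lambda_D=0$ — and switches once from $-$ to $+$ at the unique positive root $\lambda^\ast$ of $Q$ if $Q(0)<0$ — case 3, $D$ decreasing on $[0,\lambda^\ast]$ and increasing on $[\lambda^\ast,\infty)$, so $0<\lambda_D=\lambda^\ast<\infty$; when its leading coefficient is $\le0$ — case 1(ii) — by (b) $Q(0)>0$ and $a_{l\bullet}\ge6$, so $Q$ starts positive and has at most one positive root, meaning $D$ rises from $D(0)$ to an interior maximum and falls to $D_\infty$, and since $a_{l\bullet}\ge6$ forces $D(0)<D_\infty$ we get $\lambda_D=0$. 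If $a_{k\bullet}=2$, then $Q$ is a downward parabola with $Q(0)>0$ and hence exactly one positive root, so again $D$ rises to an interior maximum and falls to $D_\infty$; here $D(0)<D_\infty$ iff $a_{l\bullet}>4$, which gives $\lambda_D=0$ in case 1(iii) and $\lambda_D=\infty$ in case 2. The five cases are exhaustive and mutually exclusive because they are indexed by whether $a_{k\bullet}=2$ or $a_{k\bullet}>2$, whether $a_{l\bullet}<a_{k\bullet}(a_{k\bullet}-1)$ or $a_{l\bullet}\ge a_{k\bullet}(a_{k\bullet}-1)$, and whether $Q(0)\ge0$ or $Q(0)<0$ (the last distinction refining the case $a_{l\bullet}<a_{k\bullet}(a_{k\bullet}-1)$ only), together with whether $a_{l\bullet}=4$ or $a_{l\bullet}>4$ within the case $a_{k\bullet}=2$. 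The one genuinely non-mechanical step is the two-sided endpoint comparison in the rise-then-fall cases: establishing that $\bigl(1+\tfrac1{j-2}\bigr)^{j}$ is strictly decreasing and pinning down its values at $j=4$ and $j=6$; everything else is bookkeeping with the quadratic $Q$.
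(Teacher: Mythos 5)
Your proof is correct and follows essentially the same route as the paper's: it reduces the sign of $D'(\lambda)$ to the sign of a quadratic numerator (the paper's $\phi(\lambda)-\xi(\lambda)$, which equals $Q(\lambda)/2$), splits cases on the signs of its leading, linear and constant coefficients, and settles the rise-then-fall cases by comparing $D(0)$ with $\lim_{\lambda\to\infty}D(\lambda)$ --- your explicit $\bigl(1+\tfrac{1}{j-2}\bigr)^{j}$ argument in fact supplies the endpoint-comparison details that the paper only asserts (``it can be seen that $\Delta<0$''). Two small points to tidy: the blanket statement that $Q$ is strictly increasing whenever $a_{k\bullet}\ge 3$ holds only when the leading coefficient $16\bigl(a_{k\bullet}(a_{k\bullet}-1)-a_{l\bullet}\bigr)$ is nonnegative (your separate treatment of case 1(ii) correctly does not rely on it, so this is a wording slip rather than a gap), and excluding $D(0)>D_\infty$ for all pairs other than $(2,4)$ uses that $a_{l\bullet}=2a_{l0}$ is even --- since $(4/3)^5>4$, the value $a_{l\bullet}=5$ with $a_{k\bullet}=2$ is not settled by your checks at $j=4$ and $j=6$ alone --- so the evenness of $a_{l\bullet}$ should be stated explicitly.
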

\begin{remark}
Case 3. can occur if and only if $a_{l\bullet}=4$, $a_{k\bullet}>4$, hence case 1.(i.) can occur if and only if $a_{l\bullet}=4$, $2<a_{k\bullet}\leq4$ or $a_{l\bullet}>4$, $a_{k\bullet}>1$.
\end{remark}
\begin{proposition}
\label{prop.2.s}
Let $K=2$. Assume that $|a_{k0}-a_{k1}|=a_{k\bullet}>1$ and $a_{l0},a_{l1}>1$, $a_{l0}\neq a_{l1}$, for $l\neq k$. Then
\begin{itemize}
\item[1.] $\lambda_D=0$ if any of the following holds
\begin{itemize}
\item[(i.)] $(a_{l0}-a_{l1})^2<a_{l\bullet}$, $a_{k\bullet}(a_{k\bullet}-1)\leq a_{l\bullet}-(a_{l0}-a_{l1})^2$
 \item[(ii.)] $(a_{l0}-a_{l1})^2> a_{l\bullet}$, $a_{k\bullet}>2 $
 \item[(iii.)] $a_{l0}a_{l1}\leq \frac{1}{6}\left(a_{l0}^2+a_{l1}^2+5a_{l\bullet}-10\right)$, $a_{k\bullet}=2$
 \item[(iv.)] $(a_{l0}-a_{l1})^2< a_{l\bullet}$, $a_{k\bullet}(a_{k\bullet}-1)\geq a_{l\bullet}-(a_{l0}-a_{l1})^2$,
    $a_{k\bullet}(a_{k\bullet}-1)(3a_{l\bullet}-5)-2((a_{l0}-1)a_{l0}+(a_{l1}-1)a_{l1})+3(a_{k\bullet}-1)( (a_{l0}-a_{l1})^2-a_{l\bullet}  )  \geq 0$
 \item[(v.)] $(a_{l0}-a_{l1})^2= a_{l\bullet}$, $ a_{k\bullet}(a_{k\bullet}-1)(3a_{l\bullet}-5)\geq 4 a_{l0}a_{l1}$
\end{itemize}
 \item[2.] $\lambda_D=\infty$ if $a_{k\bullet}=2$, $a_{l\bullet}=5$.
\end{itemize}
\end{proposition}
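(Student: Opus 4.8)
The plan is to collapse the problem to the single scalar $\lambda$, using that the LOOCV deviance of this saturated model is a sum over the two categories. By the symmetry of the LOOCV deviance under interchanging events and nonevents, I would assume without loss of generality that the separated category $k$ has $a_{k1}=0$, so $a_{k0}=a_{k\bullet}$; the companion category $l$ then has $a_{l0},a_{l1}>1$, $a_{l0}\neq a_{l1}$, and since all five conditions in the statement are symmetric in $a_{l0}\leftrightarrow a_{l1}$ nothing further is lost. Substituting the penalized LOOCV estimates (\ref{pr.mod1.0})--(\ref{pr.mod1.1}) into (\ref{eq.dev}) I would write $D(\lambda)=D_k(\lambda)+D_l(\lambda)$ in closed form, with $D_k(\lambda)=2a_{k\bullet}\bigl(\log(a_{k\bullet}-1+4\lambda)-\log(a_{k\bullet}-1+2\lambda)\bigr)$ strictly increasing, $D_k(0)=0$, $D_k(\infty)=2a_{k\bullet}\log 2$, and $D_l$ exactly the single-category deviance already treated in Lemmas \ref{lemma.D.ns.1}, \ref{lemma.D.ns.2} and \ref{lemma.D.ns.1.0}. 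From those I record: $D_l'(0^{+})=-4\bigl(\tfrac{1}{a_{l0}-1}+\tfrac{1}{a_{l1}-1}-\tfrac{2}{a_{l\bullet}-1}\bigr)<0$; writing $\delta:=(a_{l0}-a_{l1})^2-a_{l\bullet}$, the derivative $D_l'(\lambda)$ is a rational function whose numerator is only \emph{linear} in $\lambda$ (the quadratic coefficient cancels because $a_{l0}+a_{l1}=a_{l\bullet}$), equal to $8\delta\lambda-4(a_{l0}^2+a_{l1}^2-a_{l\bullet})$ over a product of positive linear factors; hence $D_l$ is strictly decreasing on $(0,\infty)$ when $\delta\le 0$ (so $D_l(0)\ge D_l(\infty)=2a_{l\bullet}\log 2$), and is unimodal with the unique interior minimiser (\ref{eq.opt.l}) when $\delta>0$.

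Next I would determine the shape of $D$ itself. Clearing denominators gives $D'(\lambda)=N(\lambda)/Q(\lambda)$ with $Q(\lambda)=(a_{k\bullet}-1+2\lambda)(a_{k\bullet}-1+4\lambda)(a_{l0}-1+2\lambda)(a_{l1}-1+2\lambda)(a_{l\bullet}-1+4\lambda)>0$ on $[0,\infty)$ and, by the remark just made, $N$ a polynomial of degree $3$ whose leading coefficient has the sign of $\Delta:=a_{k\bullet}(a_{k\bullet}-1)+\delta$. Thus $D'$ changes sign at most three times on $(0,\infty)$, $D(\infty)=2(a_{k\bullet}+a_{l\bullet})\log 2$, and the admissible shapes of $D$ are constrained by the signs of $D'(0^{+})=\tfrac{4a_{k\bullet}}{a_{k\bullet}-1}+D_l'(0^{+})$ and of $\Delta$ together with this degree bound. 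Part~1 asks to show that $D$ attains its infimum over $[0,\infty]$ at $\lambda=0$, i.e. $D(\lambda)\ge D_l(0)$ for all $\lambda>0$. When $\delta>0$ (cases~(ii) and, for $a_{k\bullet}=2$, (iii)) the explicit $D'(0^{+})$ is automatically nonnegative, and under the stated restriction on $a_{k\bullet}$ one checks that $N$ has no root on $(0,\infty)$, so $D$ is increasing there. When $\delta\le 0$ (cases~(i), (iv), (v)) $D_l$ is decreasing while $D(\infty)=2(a_{k\bullet}+a_{l\bullet})\log 2$, so $\lambda_D=0$ requires (a)~$D'(0^{+})\ge 0$, (b)~that the total drop $D_l(0)-2a_{l\bullet}\log 2$ of $D_l$ not exceed the total rise $2a_{k\bullet}\log 2$ of $D_k$ (otherwise the minimiser escapes to $\infty$), and (c)~that no interior dip of $D$ falls below $D_l(0)$. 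Substituting the closed forms of the first paragraph, each of (a)--(c) becomes a polynomial inequality in $a_{k\bullet},a_{l0},a_{l1}$, and assembling them reproduces exactly conditions~(i), (iv) and~(v), the last being the sub-case $\delta=0$.

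For part~2 the constraints $a_{l0},a_{l1}>1$, $a_{l0}\neq a_{l1}$ and $a_{l\bullet}=5$ force $\{a_{l0},a_{l1}\}=\{2,3\}$; combined with $a_{k\bullet}=2$ the closed forms simplify to $D(\lambda)=14\log 2+4\log\bigl(1+\lambda/(1+2\lambda)^2\bigr)$, which exceeds $14\log 2$ for every $\lambda>0$ and decreases to $14\log 2$ as $\lambda\to\infty$; since $\lambda=0$ leaves the coefficient of the separated category non-finite, the relevant infimum of $D$ is approached only as $\lambda\to\infty$, giving $\lambda_D=\infty$.

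The genuine difficulty is confined to the borderline cases~(i), (iv) and~(v). There $D_k$ adds a bounded loss, at most $2a_{k\bullet}\log 2$, while $D_l$ keeps contributing a gain as $\lambda$ grows toward its limit $2a_{l\bullet}\log 2$, so monotonicity of $D=D_k+D_l$ cannot be read off locally at $\lambda=0$: one must bound the \emph{entire} drop of $D_l$ (and any interior dip of $D$) against the bounded rise of $D_k$. The unwieldy polynomial side-conditions are precisely the certificates that the bounded rise dominates, and extracting them cleanly from $N(\lambda)$ — or equivalently from $D(\lambda)-D_l(0)\ge 0$ — is where essentially all the work lies.
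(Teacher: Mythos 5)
Your set-up is sound and in fact coincides with the paper's: after the same w.l.o.g. reduction the paper writes $\frac{dD}{d\lambda}$ as a ratio with positive denominator and a cubic numerator (its $\phi(\lambda)-\xi(\lambda)$ is your $N$ up to a positive constant), and your closed form for part 2 is correct: with $a_{k\bullet}=2$, $\{a_{l0},a_{l1}\}=\{2,3\}$ one gets $D(\lambda)=14\log 2+4\log\bigl(1+\lambda/(1+2\lambda)^2\bigr)$. Note, however, that this minimal value $14\log 2$ is also \emph{attained} at $\lambda=0$ (the deviance is well defined there because the separated cell contributes no term), so the correct statement is the paper's: the deviance has two minima, one at $0$ and one at $\infty$; your reason for discarding $\lambda=0$ (non-finiteness of the coefficient) is extraneous to the definition of $\lambda_D$, and the assertion that the infimum is approached "only" as $\lambda\to\infty$ is not literally true.

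The genuine gap is part 1, which is essentially the whole proposition: for each of (i)--(v) one must verify that the stated inequalities force $\lambda_D=0$, and none of that verification is carried out. For (ii)/(iii) you assert that $D'(0^{+})\ge 0$ is "automatic" under $\delta>0$ (unproven, and in any case local positivity at $0$ does not control the cubic $N$ on all of $(0,\infty)$) and that "one checks that $N$ has no root" — but that check is the theorem. What is actually needed, and what the paper supplies, is: positivity of $N(0)$ (inequality (\ref{eq241})), positivity of the constant term of $N'$ (inequality (\ref{eq24})), and nonnegativity of the linear term of $N'$ — the long third condition in (iv) and the condition in (v) are \emph{exactly} that linear coefficient being nonnegative, and for (ii) (resp.\ (iii)) its positivity follows from (\ref{eq23}) (resp.\ is equivalent to (\ref{eh})) — which together give $N'\ge 0$, hence $N>0$, hence $D$ increasing. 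So (iv) and (v) are monotonicity certificates, not outcomes of your drop-versus-rise bookkeeping, and your claim that assembling the necessary-and-sufficient requirements (a)--(c) "reproduces exactly" (i), (iv), (v) cannot be right: the paper's remark explicitly notes that (i)--(v) are not exhaustive (configurations outside them with $\lambda_D=0$ are only verified numerically), so they are sufficient conditions, not the exact characterization your (a)--(c) would yield. Finally, for (i) the indispensable ingredient is the endpoint comparison $D(0)\le\lim_{\lambda\to\infty}D(\lambda)$, proved in the paper as (\ref{eq242}); it is nontrivial and tight precisely at the part-2 configuration $a_{k\bullet}=2$, $a_{l\bullet}=5$. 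As you concede, extracting these certificates "is where essentially all the work lies" — and that work, absent here, is what a proof of this proposition consists of.
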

\begin{remark}
{\color{white} Some remarks about theorem 1 are in order}
\begin{itemize}
\item[1.] Cases 1.(i.)-1.(v.) are mutually exclusive but do not include examples where analytical solution could only be obtained by directly investigating the LOOCV deviance. However, our extensive numerical evaluation suggests that $\lambda_D=0$ applies to a general example in such a setting.
 \item[2.] Note that 2. is a special case of 1.(i.), hence here the LOOCV deviance has two minima, one at $\lambda=0$ and the other at $\lambda=\infty$.
 \end{itemize}
\end{remark}



\subsubsection{Tuning the LOOCV squared prediction error}

The LOOCV squared prediction error which for our setting is given by
\begin{equation}
\label{eq.pm}
PE(\lambda)=\frac{1}{n}\sum_{k=1}^K \left( a_{k0}(\hat\pi_{k0}(\lambda))^2+a_{k1}(1-\hat\pi_{k1}(\lambda))^2  \right) ,
\end{equation}
where $\hat\pi_{k0}(\lambda)$ and $\hat\pi_{k1}(\lambda)$ are defined in (\ref{pr.mod1.0}) and (\ref{pr.mod1.1}). Let
$$\lambda_{PE}=\mbox{argmin}_{\lambda\geq0} PE(\lambda), $$
be the point at which the LOOCV squared prediction error defined in (\ref{eq.pm}) attains its minimum. The theoretical results for this criterion are as follows.
\begin{proposition}
\label{th1}
$\lambda_{PE}=0\mbox{, if and only if }$ $|a_{k0}-a_{k1}|=a_{k\bullet}>1$ for all $k=1,\ldots,K$, i.e. when there is complete separation in the data.
\end{proposition}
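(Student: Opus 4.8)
The plan is to exploit that $PE(\lambda)$ separates over the $K$ categories. Writing $PE(\lambda)=\frac1n\sum_{k=1}^K g_k(\lambda)$ with $g_k(\lambda)=a_{k0}\hat\pi_{k0}(\lambda)^2+a_{k1}\bigl(1-\hat\pi_{k1}(\lambda)\bigr)^2$, and using $\hat\pi_{k0}(\lambda)=\frac{a_{k1}+2\lambda}{a_{k\bullet}-1+4\lambda}$ together with the one-line rearrangement $1-\hat\pi_{k1}(\lambda)=\frac{a_{k0}+2\lambda}{a_{k\bullet}-1+4\lambda}$ of (\ref{pr.mod1.1}), each summand becomes the single rational function $g_k(\lambda)=\bigl[a_{k0}(a_{k1}+2\lambda)^2+a_{k1}(a_{k0}+2\lambda)^2\bigr]/(a_{k\bullet}-1+4\lambda)^2$, which is smooth on $[0,\infty)$ as soon as $a_{k\bullet}>1$. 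Everything then reduces to the behaviour of each $g_k$ near $\lambda=0$, category by category.

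For the ``if'' direction I would argue monotonicity directly. Under complete separation every category is pure, so one of $a_{k0},a_{k1}$ vanishes and $g_k(\lambda)=c_k\bigl(\tfrac{2\lambda}{a_{k\bullet}-1+4\lambda}\bigr)^2$ with $c_k>0$. Since $\lambda\mapsto\tfrac{2\lambda}{a_{k\bullet}-1+4\lambda}$ is nonnegative with strictly positive derivative on $[0,\infty)$, its square and hence $g_k$ is strictly increasing there; summing, $PE$ is strictly increasing, so its unique minimiser is $\lambda=0$, i.e.\ $\lambda_{PE}=0$. (Equivalently: $PE\ge0$ everywhere while $PE(0)=0$ under complete separation.)

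For the ``only if'' direction I would prove the contrapositive. If complete separation fails there is an index $l$ with $a_{l0},a_{l1}\ge1$, hence $a_{l\bullet}\ge2$. A short quotient-rule computation gives, for every $k$ with $a_{k\bullet}>1$, the clean identity $g_k'(0)=-\,\frac{8\,a_{k0}a_{k1}}{(a_{k\bullet}-1)^3}$; in particular $g_l'(0)<0$, while $g_k'(0)\le0$ for all remaining indices, with equality precisely on the pure categories. Hence $PE'(0)=\frac1n\sum_k g_k'(0)<0$, so $PE(\varepsilon)<PE(0)$ for all small $\varepsilon>0$, the infimum of $PE$ on $[0,\infty)$ is not attained at $0$, and therefore $\lambda_{PE}>0$. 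Together with the first direction this establishes the equivalence.

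The computations are elementary; the only genuine caveat is the degenerate case of a pure category with $a_{k\bullet}=1$, where $g_k$ is not even defined at $\lambda=0$ — but that is excluded here, since the stated complete-separation condition requires $a_{k\bullet}>1$ and we assume throughout that the cell counts are large enough for $PE$ to be well defined at $\lambda=0$. The crux of the whole argument is noticing that the per-category right derivative at the origin collapses to $-8a_{k0}a_{k1}/(a_{k\bullet}-1)^3$: this single sign computation simultaneously forces shrinkage whenever any category is mixed and forces $\lambda_{PE}=0$ in the purely separated case, so the main work is just carrying out and sanity-checking that derivative.
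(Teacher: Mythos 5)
Your proof is correct and follows essentially the same route as the paper's: you differentiate $PE$ category by category, your identity $g_k'(0)=-8a_{k0}a_{k1}/(a_{k\bullet}-1)^3$ is exactly the paper's derivative formula $\frac{8}{n}\sum_k\bigl[\lambda((a_{k0}-a_{k1})^2-a_{k\bullet})-a_{k0}a_{k1}\bigr]/(a_{k\bullet}-1+4\lambda)^3$ evaluated at $\lambda=0$, and your monotonicity argument under complete separation matches the paper's observation that each numerator reduces to $\lambda a_{k1}(a_{k1}-1)\ge 0$. The only (harmless) cosmetic difference is that you state the mixed-category condition as $a_{l0},a_{l1}\ge 1$ rather than the paper's standing assumption $a_{l0},a_{l1}>1$, which does not affect the sign argument since $a_{l\bullet}>1$.
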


The following result is obtained for $K=1$.
 \begin{lemma}
 \label{lm1}
Let $K=1$. Assume that $a_{10},a_{11}>1$.
\begin{itemize}
\item[1.] if $(a_{10}-a_{11})^2-a_{1\bullet}\leq0$, then $\lambda_{PE}=\infty$
 \item[2.] if $(a_{10}-a_{11})^2-a_{1\bullet}>0$, then
\begin{equation}
\label{eq.pm.opt}
\lambda_{PE}=\frac{a_{10}a_{11}}{(a_{10}-a_{11})^2-a_{1\bullet}}.
\end{equation}
\end{itemize}
\end{lemma}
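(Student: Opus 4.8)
The plan is to reduce the statement to elementary single‑variable calculus on a rational function. First I would specialize the general formula \eqref{eq.pm} to $K=1$. Using $1-\hat\pi_{11}(\lambda)=(a_{10}+2\lambda)/(a_{1\bullet}-1+4\lambda)$ together with (\ref{pr.mod1.0}), both summands acquire the common denominator $(a_{1\bullet}-1+4\lambda)^2$, so that $a_{1\bullet}\,PE(\lambda)=\big(a_{10}(a_{11}+2\lambda)^2+a_{11}(a_{10}+2\lambda)^2\big)\big/(a_{1\bullet}-1+4\lambda)^2$. Expanding the numerator collapses it (the $z^2$ and the constant cross terms recombine via $a_{1\bullet}=a_{10}+a_{11}$) to $a_{1\bullet}(2\lambda)^2+4a_{10}a_{11}(2\lambda)+a_{10}a_{11}a_{1\bullet}$, a quadratic in $2\lambda$, while the denominator is already a perfect square in $2\lambda$. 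Putting $u=2\lambda\ge 0$, I would therefore study $f(u)=N(u)/D(u)$ with $N(u)=a_{1\bullet}u^2+4a_{10}a_{11}u+a_{10}a_{11}a_{1\bullet}$ and $D(u)=(a_{1\bullet}-1+2u)^2$, noting $a_{1\bullet}-1+2u>0$ on $[0,\infty)$ so $f$ is smooth there.

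Next I would differentiate. Writing $s=a_{1\bullet}-1+2u$ so $D=s^2$ and $D'=4s$, the numerator of $f'$ equals $N's^2-4Ns=s(N's-4N)$; since $s>0$ the sign of $f'(u)$ is that of $g(u):=N'(u)\,s-4N(u)$, where $N'(u)=2a_{1\bullet}u+4a_{10}a_{11}$. The key computation is to expand $g$: the $u^2$ terms cancel identically and the surviving constant simplifies, leaving the \emph{linear} function $g(u)=2\big((a_{10}-a_{11})^2-a_{1\bullet}\big)u-4a_{10}a_{11}$, using the identity $a_{1\bullet}(a_{1\bullet}-1)-4a_{10}a_{11}=(a_{10}-a_{11})^2-a_{1\bullet}$. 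This reduction is the heart of the argument and the step I would verify most carefully.

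From here the two cases are immediate. If $(a_{10}-a_{11})^2-a_{1\bullet}\le 0$, then $g(u)\le -4a_{10}a_{11}<0$ for every $u\ge 0$, so $f$ is strictly decreasing on $[0,\infty)$ and its infimum is attained only as $u\to\infty$; by the convention fixed in this Additional file this means $\lambda_{PE}=\infty$. If $(a_{10}-a_{11})^2-a_{1\bullet}>0$, then $g$ is strictly increasing with unique root $u^\star=2a_{10}a_{11}/\big((a_{10}-a_{11})^2-a_{1\bullet}\big)>0$, hence $f'<0$ on $(0,u^\star)$ and $f'>0$ on $(u^\star,\infty)$, so $f$ has a unique interior minimizer at $u^\star$. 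Translating back via $u=2\lambda$ yields $\lambda_{PE}=u^\star/2=a_{10}a_{11}/\big((a_{10}-a_{11})^2-a_{1\bullet}\big)$, which is (\ref{eq.pm.opt}).

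I expect the only genuine obstacle to be the bookkeeping in the expansion of $g(u)$ — in particular confirming that the quadratic‑in‑$u$ terms vanish and that the remaining linear coefficient is exactly $2\big((a_{10}-a_{11})^2-a_{1\bullet}\big)$; everything else (positivity of $s$, reading off monotonicity, and the change of variables back to $\lambda$) is routine. It is also worth recording that the hypothesis $a_{10},a_{11}>1$ is what makes $a_{10}a_{11}>0$, so that in case 2 the critical point $u^\star$ is strictly positive and the dichotomy in the two cases is exhaustive.
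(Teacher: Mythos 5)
Your proof is correct and follows essentially the same route as the paper: after clearing denominators your $g(u)=2\bigl((a_{10}-a_{11})^2-a_{1\bullet}\bigr)u-4a_{10}a_{11}$ is, up to a positive factor and the substitution $u=2\lambda$, exactly the numerator of $dPE/d\lambda$ that the paper analyzes, and both arguments conclude by noting this linear function is negative throughout in case 1 and changes sign once at the stated $\lambda$ in case 2.
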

\begin{remark}
Plugging (\ref{eq.pm.opt}) into (\ref{eq.pkl}) gives a closed form approximation of the RR proportion estimate optimized by the LOOCV squared prediction error.
\end{remark}

\subsubsection{Tuning the validation deviance}

The validation deviance, $D^V(\lambda)=D^V$, is
\begin{equation}
\label{eq.dev.test1}
D^V=-2\sum_{k=1}^K D_k^V=-2\sum_{k=1}^K \left(a_{k0}^V\log(1-\hat\pi_{k}(\lambda))+a_{k1}^V\log \hat\pi_{k}(\lambda)   \right),
\end{equation}
where $a_{k0}^V$ and $a_{k1}^V$ are the respective number of nonevents and events for category $k$ in the validation set. Let
$$\lambda_{D^V}=\mbox{argmin}_{\lambda\geq0} D^V(\lambda), $$
 denote the point at which the validation deviance defined in (\ref{eq.dev.test1}) obtains its minimum. Then when assuming there is no \textit{separation} in the validation set we obtain the following result.
\begin{proposition}
\label{prop.D.valid}
Assume that $a_{k0}^V,a_{k1}^V>0$, for $k=1,\ldots,K$, i.e. there is no \textit{separation} in the validation set. If $|a_{k0}-a_{k1}|=a_{k\bullet}>0$, $k=1,\ldots,K$, i.e. there is complete separation in the original data set, then $\lambda_{D^V}>0$.
\end{proposition}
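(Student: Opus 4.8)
The plan is to show that the validation deviance $D^V(\lambda)$ blows up to $+\infty$ as $\lambda\to0^+$ while staying finite for every $\lambda>0$, so that its infimum over $(0,\infty)$ cannot be approached along $\lambda\to0^+$; under the $\mathrm{argmin}$ convention adopted in this file that is precisely the assertion $\lambda_{D^V}>0$. The mechanism is the clash between perfect separation in the training data and the absence of separation in the validation data: complete separation forces the unpenalized fitted probabilities to be $0$ or $1$ in every cell, and a validation set containing both events and non-events in every cell then assigns such a fit an infinite deviance, so some positive penalty is unavoidable.

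First I would spell out the fitted probabilities. Complete separation means $|a_{k0}-a_{k1}|=a_{k\bullet}>0$ for every $k$, i.e.\ for each $k$ either $a_{k1}=0$ or $a_{k0}=0$. By (\ref{eq.pkl}), $\hat\pi_k(\lambda)=(a_{k1}+2\lambda)/(a_{k\bullet}+4\lambda)$, so $\hat\pi_k(0)=a_{k1}/a_{k\bullet}\in\{0,1\}$ — it is $0$ when $a_{k1}=0$ and $1$ when $a_{k0}=0$ — whereas $\hat\pi_k(\lambda)\in(0,1)$ strictly for every $\lambda>0$. Fixing any $k$: if $a_{k1}=0$ then the summand $-2a_{k1}^V\log\hat\pi_k(\lambda)$ in $D^V$ tends to $+\infty$ as $\lambda\to0^+$, since $\hat\pi_k(\lambda)\to0$ and $a_{k1}^V>0$; if $a_{k0}=0$ then $-2a_{k0}^V\log(1-\hat\pi_k(\lambda))\to+\infty$, since $1-\hat\pi_k(\lambda)\to0$ and $a_{k0}^V>0$. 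Either way at least one per-cell block $D_k^V$ diverges.

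Next I would rule out cancellation and handle $\lambda>0$. For $\hat\pi_k(\lambda)\in[0,1]$ each block $-2\bigl(a_{k0}^V\log(1-\hat\pi_k(\lambda))+a_{k1}^V\log\hat\pi_k(\lambda)\bigr)$ is nonnegative (a positive combination of $-\log$ of numbers in $[0,1]$), so $D^V(\lambda)$ is a sum of nonnegative terms and the divergence of one of them forces $D^V(\lambda)\to+\infty$ as $\lambda\to0^+$ with no possibility of cancellation. On the other hand, for every $\lambda>0$ all the logarithms are finite, so $D^V(\lambda)<\infty$; $D^V$ is continuous on $(0,\infty)$; and since $\hat\pi_k(\lambda)\to1/2$ as $\lambda\to\infty$ it has the finite limit $2n_V\log2$ at infinity. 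Hence $\inf_{\lambda>0}D^V(\lambda)<\infty$ and this infimum is not approached along $\lambda\to0^+$, which by the stated convention gives $\lambda_{D^V}\neq0$, i.e.\ $\lambda_{D^V}>0$.

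I do not expect a genuine obstacle here: the argument is short and essentially a divergence check at the boundary. The only points that need care are (i) confirming, via the nonnegativity of each per-cell deviance block, that the divergent term cannot be cancelled by another block, and (ii) matching the conclusion "$\lambda_{D^V}>0$" to the $\mathrm{argmin}$-at-$0$/$\infty$ convention, so that excluding the limit $\lambda\to0^+$ is exactly what is being asserted. This is the validation-set counterpart of the no-separation result (Proposition~\ref{prop.D.ns}), where in-sample shrinkage was likewise forced to be strictly positive, and of the parallel statement announced for the validation squared prediction error.
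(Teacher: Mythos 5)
Your proof is correct, but it takes a different route from the paper's. The paper differentiates the validation deviance in $\lambda$: using $\hat\pi_k(\lambda)=(a_{k1}+2\lambda)/(a_{k\bullet}+4\lambda)$ it writes $\frac{dD^V}{d\lambda}$ in closed form and, taking w.l.o.g.\ $a_{k1}=0$ for all $k$, shows that each category's contribution is, up to positive factors, $-a_{k1}^V a_{k0}-2\lambda\left(a_{k1}^V-a_{k0}^V\right)$, which is negative for small $\lambda>0$; so $D^V$ is decreasing near zero and the minimizer must be strictly positive, with the sign analysis additionally revealing whether the optimum is interior (when $a_{k1}^V<a_{k0}^V$) or at $\lambda=\infty$. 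You instead argue at the boundary: complete separation makes the unpenalized fitted probabilities equal $0$ or $1$ in every cell, each nonnegative validation block then diverges as $\lambda\to0^+$ (no cancellation possible), while $D^V$ is finite and continuous on $(0,\infty)$ with finite limit $2n_V\log 2$ at infinity, so the infimum cannot be approached at $0$ and, under the stated argmin convention, $\lambda_{D^V}>0$. Both arguments are sound; yours is more elementary (no derivative computation) and in fact needs only a single separated training category with a nonseparated validation cell, whereas the paper's derivative calculation is the one reused throughout the supplement and yields extra structural information about where the minimum lies.
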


Examples where the original data are not completely separated, even with $K=1$, are too difficult to investigate analytically without imposing further assumptions. However, there are a few exceptions. E.g., let $A=\{k; a_{k0}=a_{k1} \}$ and $B=\{k; |a_{k0}-a_{k1}|=a_{k\bullet}\}$, $A\cup B=S=\{1,\ldots,K\}$. Then it can be shown that for $k\in A$, $\frac{dD_k^V}{d\lambda}=0$ holds and Proposition \ref{prop.D.valid} still applies. 


\subsubsection{Tuning the validation squared prediction error}

The validation squared prediction error is,
\begin{equation}
\label{eq.pe.test1}
PE^V(\lambda)=\frac{1}{n_V}\sum_{k=1}^K \left( a_{k0}^V(\hat\pi_k(\lambda))^2+a_{k1}^V(1-\hat\pi_k(\lambda))^2  \right),
\end{equation}
where $n_V$ is the number of subjects in the validation set. Let
$$\lambda_{PE^V}=\mbox{argmin}_{\lambda\geq0} PE^V(\lambda),$$
where $PE^V(\lambda)$ is defined in (\ref{eq.pe.test1}). Assuming there is no \textit{separation} in the validation set and there is complete separation in the original data set we obtain the following result.
\begin{proposition}
\label{pr4}
Assume that $|a_{k0}-a_{k1}|=a_{k\bullet}>0$ and $a_{k0}^V,a_{k1}^V>0$ for all $k=1,\ldots,K$, i.e. there is complete separation in the original data set and no \textit{separation} in the validation set. Then $\lambda_{PE^V}>0$.
\end{proposition}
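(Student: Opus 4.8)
The plan is to prove Proposition~\ref{pr4} in exactly the way the companion statement for the validation deviance (Proposition~\ref{prop.D.valid}) is handled: show that $PE^V(\lambda)$ is a smooth function of $\lambda$ on $[0,\infty)$ — which holds because every denominator $a_{k\bullet}+4\lambda$ appearing in (\ref{eq.pkl}) is bounded away from $0$ for $\lambda\ge 0$, in particular at $\lambda=0$ — and that its right derivative at $\lambda=0$ is strictly negative. Once that is established, $\inf_{\lambda\ge0}PE^V(\lambda)<PE^V(0)$, so the infimum is not attained at $0$, and by the convention fixed for $\mbox{argmin}$ above we get $\lambda_{PE^V}\in(0,\infty]$, i.e.\ $\lambda_{PE^V}>0$.

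To carry this out, I would first record the shape of $\hat\pi_k$ under complete separation. For each $k$ either $a_{k1}=a_{k\bullet}$ and $a_{k0}=0$ (call this an ``all-event'' cell) or $a_{k0}=a_{k\bullet}$ and $a_{k1}=0$ (an ``all-nonevent'' cell). From (\ref{eq.pkl}) a short computation gives $\hat\pi_k'(\lambda)=2(a_{k\bullet}-2a_{k1})/(a_{k\bullet}+4\lambda)^2$, so that $\hat\pi_k(0)=1$, $\hat\pi_k'(0)=-2/a_{k\bullet}$ on an all-event cell and $\hat\pi_k(0)=0$, $\hat\pi_k'(0)=+2/a_{k\bullet}$ on an all-nonevent cell. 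Next I would differentiate the $k$-th summand $g_k(\lambda)=a_{k0}^V\hat\pi_k(\lambda)^2+a_{k1}^V(1-\hat\pi_k(\lambda))^2$ of (\ref{eq.pe.test1}), obtaining $g_k'(\lambda)=2\hat\pi_k'(\lambda)\big(a_{k0}^V\hat\pi_k(\lambda)-a_{k1}^V(1-\hat\pi_k(\lambda))\big)$. Evaluating at $\lambda=0$ yields $g_k'(0)=-4a_{k0}^V/a_{k\bullet}$ on an all-event cell and $g_k'(0)=-4a_{k1}^V/a_{k\bullet}$ on an all-nonevent cell; this is precisely where the hypothesis $a_{k0}^V,a_{k1}^V>0$ (no separation in the validation set) is used, as it makes each $g_k'(0)$ strictly negative. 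Summing, $\frac{d}{d\lambda}PE^V(\lambda)\big|_{\lambda=0}=\frac{1}{n_V}\sum_{k=1}^K g_k'(0)<0$, which finishes the proof.

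I do not expect a genuine obstacle here. The only points needing care are (i) differentiability at the boundary point $\lambda=0$, which is immediate since the denominators never vanish on $[0,\infty)$, and (ii) treating the all-event and all-nonevent cells uniformly so that every $g_k'(0)$ is seen to be strictly negative. If one wished to avoid derivatives, an equivalent route is to observe that $PE^V(0)$ equals $\frac{1}{n_V}$ times the number of validation observations whose outcome contradicts the degenerate fitted probability ($0$ or $1$), and then exhibit a small $\lambda>0$ that strictly lowers this value; but the one-line derivative computation above is the cleanest.
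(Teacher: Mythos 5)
Your proof is correct and follows essentially the same route as the paper's: differentiate $PE^V(\lambda)$, note continuity of the derivative on $[0,\infty)$, and show the derivative at $\lambda=0$ is strictly negative because $a_{k0}^V,a_{k1}^V>0$, so the minimum cannot be at zero. The only cosmetic difference is that you treat all-event and all-nonevent cells separately, whereas the paper disposes of this with a ``w.l.o.g.\ $a_{k0}=0$'' symmetry argument before evaluating the same derivative at $\lambda=0$.
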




\subsection{Ridge penalty for the $\gamma$ coefficients}

Here we apply RR penalty for the coefficients $\gamma_k$ where the intercept term is excluded from penalization, i.e.
$$P_1(\lambda)=\frac{\lambda}{2}\sum_{k=1}^{K-1}\gamma_k^2,$$
which can be expressed in terms of $\beta_k$ and $\pi_k$ as
\begin{equation}
\label{pen1.1}
P_1(\lambda)=\frac{\lambda}{2} \sum_{k=1}^{K-1}(\beta_k-\beta_K)^2=\frac{\lambda}{2} \sum_{k=1}^{K-1}\log^2\left(\frac{\pi_k}{1-\pi_k}\frac{1-\pi_K}{\pi_K} \right).
 \end{equation}

Applying the penalty term (\ref{pen1.1}) to the log-likelihood function, $l^{P_1}=l-P_1(\lambda)$, and taking partial derivatives with respect to $\pi_k$, $k=1,\ldots,K$ yields,
$$\frac{\partial l^{P_1}}{\partial \pi_k}=\frac{a_{k1}-\pi_ka_{k\bullet}-\lambda\log\frac{\pi_k}{1-\pi_k}\frac{1-\pi_K}{\pi_K}}{\pi_k(1-\pi_k)} \mbox{, } k=1,\ldots,K-1,  $$
$$\frac{\partial l^{P_1}}{\partial \pi_K}=\frac{a_{K1}-\pi_Ka_{K\bullet}+\lambda\sum_{k=1}^{K-1}\log\left(\frac{\pi_k}{1-\pi_k}\frac{1-\pi_K}{\pi_K}\right) }{\pi_K(1-\pi_K)} .$$

Setting the partial derivatives to zero, a single step Newton-Raphson solutions initializing at 1/2 are

$$\tilde\pi_{K}(\lambda)=\frac{a_{K1}+4\lambda\sum_{j=1}^{K-1}\frac{a_{j1}}{a_{j\bullet}+4\lambda}}{a_{K\bullet}+4\lambda\sum_{j=1}^{K-1}\frac{a_{j\bullet}}{a_{j\bullet}+4\lambda}} $$

$$\tilde\pi_{k}(\lambda)= \frac{a_{k1}+4\lambda \tilde\pi_{K}(\lambda)}{a_{k\bullet}+4\lambda}\mbox{, }k=1,\ldots,K-1,  $$

Observe that for $\lambda=0$ the above estimators simplify to the ML solution, hence Corollary \ref{color1} and \ref{color2} apply. For $\lambda\rightarrow\infty$ we have
$$\lim_{\lambda\rightarrow\infty}\tilde\pi_{k}(\lambda)=\frac{a_{\bullet1}}{n}\mbox{, }k=1,\ldots,K, $$
the obvious estimates for the fully penalized  model. Given the unpleasant form of $\tilde\pi_k(\lambda)$, $k=1,\ldots,K$, for a general $K$, henceforth assume that $K=2$ where
\begin{equation}
\label{estt.pk.mod2}
\tilde\pi_k(\lambda)= \frac{ a_{j\bullet}a_{k1}+4\lambda a_{\bullet 1}      }{ a_{k\bullet}a_{j\bullet}+4\lambda n  } \mbox{, } k=1,2\mbox{ and }j\neq k.
\end{equation}
The penalized ML estimates of the event probability when omitting subject $z_{ik}=1$,  $y_i=0$ and $z_{ik}=1$, $y_i=1$, respectively, are
\begin{equation}
\label{est.ridge2.1}
\tilde\pi_{k0}(\lambda)= \frac{a_{k1}a_{j\bullet}+4\lambda a_{\bullet 1}}{(a_{k\bullet}-1)a_{j\bullet}+4\lambda(n-1)} \mbox{, }k=1,2 \mbox{, }j\neq k ,
\end{equation}
\begin{equation}
\label{est.ridge2.2}
\tilde\pi_{k1}(\lambda)= \frac{(a_{k1}-1)a_{j\bullet}+4\lambda (a_{\bullet 1}-1)}{(a_{k\bullet}-1)a_{j\bullet}+4\lambda(n-1)} \mbox{, }k=1,2 \mbox{, }j\neq k .
\end{equation}

(Note that over-parameterizing the model (\ref{mod2}) so that shrinkage does not depend on the choice of the reference category, for $K=2$ yields the estimator that is identical as $\pi_k(\lambda/2)$, $k=1,2$. Hence the results presented hereafter would apply also for the over-parameterized model.)

\subsubsection{Tuning the LOOCV deviance}

Using the estimates of the event probability defined in (\ref{est.ridge2.1}) and (\ref{est.ridge2.2}) when calculating the LOOCV deviance (Eq. (\ref{eq.dev})) gives the following result.
\begin{proposition}
\label{pr5}
Let $K=2$ and let $\lambda_{D}$ denote the point at which the LOOCV deviance defined in (\ref{eq.dev}) utilizing the estimates of the event probability defined in (\ref{est.ridge2.1}) and (\ref{est.ridge2.2}) attains its minimum. Then $\lambda_{D}=0 \mbox{ if and only if } $ any of the following mutually exclusive conditions holds,
\begin{itemize}
\item[1.] $|a_{k0}-a_{k1}|= a_{k\bullet}>1\mbox{ for }k=1,2$, i.e. there is complete separation in the data,
\item[2.] assume without the loss of generality, $a_{10}=0$, $a_{20},a_{21}>1$, i.e. there is quasi-complete separation in the data
\begin{itemize}
\item[(i.)] $a_{21}>a_{20}$
\item[(ii.)] $2<a_{21}\leq a_{20}$
\item[(iii.)] $2=a_{21}\leq a_{20}$ and $a_{11}(a_{20}-1)-a_{20}(1+a_{20})+2\leq 0$.
\end{itemize}
\end{itemize}
\end{proposition}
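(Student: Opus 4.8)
The plan is to write $D(\lambda)=D_1(\lambda)+D_2(\lambda)$ with $D_k(\lambda)=-2\big(a_{k0}\log(1-\tilde\pi_{k0}(\lambda))+a_{k1}\log\tilde\pi_{k1}(\lambda)\big)$ and $\tilde\pi_{k0},\tilde\pi_{k1}$ as in (\ref{est.ridge2.1})--(\ref{est.ridge2.2}), and to treat the three regimes (complete separation, no separation, quasi-complete separation) separately. Throughout I would use that each $\tilde\pi_{k\cdot}(\lambda)$ is a ratio of strictly positive affine functions of $\lambda$ on $[0,\infty)$, so that $D$ is smooth wherever it is defined at $\lambda=0$, and $\frac{dD}{d\lambda}$ is a rational function whose denominator is a product of positive squares; hence the sign of $\frac{dD}{d\lambda}$ is that of an explicit low-degree numerator polynomial, and ``$\lambda_D=0$'' is equivalent to ``$D(\lambda)\ge D(0)$ for all $\lambda>0$''.

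First I would dispose of the two easy regimes. If $|a_{k0}-a_{k1}|=a_{k\bullet}>1$ for $k=1,2$ (complete separation), then in each category one of $a_{k0},a_{k1}$ is zero, the corresponding summand of $D_k$ vanishes identically, and the surviving leave-one-out estimate is monotone in $\lambda$ away from its degenerate ML value $0$ or $1$; hence each $D_k$, and thus $D$, is strictly increasing and $\lambda_D=0$. If $a_{k0},a_{k1}>1$ for $k=1,2$ (no separation), I would show $\frac{dD}{d\lambda}(0^+)<0$ by the same bookkeeping as in Proposition~\ref{prop.D.ns}: each nondegenerate category contributes a term that is locally decreasing at $\lambda=0$, because a small amount of shrinkage of both leave-one-out probabilities towards $1/2$ lowers the deviance, and the only interaction between the two categories in the $\gamma$-penalised estimators is through the totals $a_{\bullet1}$ and $n$, which does not flip this sign. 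Then $D(\varepsilon)<D(0)$ for small $\varepsilon>0$, so $\lambda_D>0$, which removes ``no separation'' from the right-hand side of the statement.

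The substantive regime is quasi-complete separation; take WLOG $a_{10}=0$, $a_{20},a_{21}>1$, so $a_{1\bullet}=a_{11}$. Then $D_1(\lambda)=-2a_{11}\log\tilde\pi_{11}(\lambda)$ with $\tilde\pi_{11}(0)=1$, so $D_1(0)=0$ and $D_1$ is strictly increasing, with $D_1'(0)=\frac{8a_{11}a_{20}}{(a_{11}-1)a_{2\bullet}}>0$ after a direct computation; the nondegenerate category~2 contributes $D_2'(0)=-\frac{8a_{20}}{(a_{2\bullet}-1)(a_{21}-1)}<0$. Adding these, $\frac{dD}{d\lambda}(0^+)\ge 0$ is equivalent to $a_{11}(a_{2\bullet}-1)(a_{21}-1)\ge(a_{11}-1)a_{2\bullet}$, which is automatic when $a_{21}\ge 3$ (then $(a_{2\bullet}-1)(a_{21}-1)\ge a_{2\bullet}$, giving cases (i) and (ii)), and for $a_{21}=2$ reduces to $a_{11}\le a_{20}+2$, i.e.\ exactly the quadratic condition in (iii); these alternatives partition the region $a_{20},a_{21}>1$. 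It remains to upgrade this to a global statement: $\frac{dD}{d\lambda}(0^+)\ge 0$ must force $D(\lambda)\ge D(0)$ for all $\lambda>0$, and $\frac{dD}{d\lambda}(0^+)<0$ forces $\lambda_D>0$ (immediate, as above). For the former I would count the sign changes of the numerator polynomial of $\frac{dD}{d\lambda}$ on $(0,\infty)$, using that $D_1$ is increasing, that $\lim_{\lambda\to\infty}D(\lambda)$ is finite with $D(\infty)\ge D(0)$, and the signs of the polynomial's coefficients, to conclude that $D$ is first nondecreasing and any subsequent decrease cannot bring it below $D(0)$.

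The main obstacle is precisely this last global sign control in the quasi-complete case: because the $\gamma$-penalty couples the two categories through $a_{\bullet1}$ and $n$, $D$ is not a separable sum of a monotone piece and a unimodal piece, so one cannot simply invoke the $K=1$ analysis of Lemma~\ref{lemma.D.ns.1.0}; ruling out a ``negative hump'' in $\frac{dD}{d\lambda}$ once $\frac{dD}{d\lambda}(0^+)\ge 0$, and in particular handling the boundary case $a_{21}=2$ where the degree and sign pattern of the numerator degenerate, is where the delicate (but elementary) estimates lie, and these I would relegate to the end of the file.
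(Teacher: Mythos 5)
Your route is the same as the paper's: differentiate the LOOCV deviance built from the $\gamma$-penalised leave-one-out estimates and read off the sign of $dD/d\lambda$ in each separation regime. Your local computations check out: with $a_{10}=0$ one indeed gets $D_1'(0)=8a_{11}a_{20}/\bigl((a_{11}-1)a_{2\bullet}\bigr)$ and $D_2'(0)=-8a_{20}/\bigl((a_{2\bullet}-1)(a_{21}-1)\bigr)$, so $dD/d\lambda(0)\geq 0$ iff $a_{11}(a_{2\bullet}-1)(a_{21}-1)\geq (a_{11}-1)a_{2\bullet}$, which holds automatically for $a_{21}\geq 3$ and for $a_{21}=2$ reduces to $a_{11}\leq a_{20}+2$, i.e. exactly condition 2(iii); this is in fact cleaner than the paper, which establishes several of these sign statements (including the analogous inequality in the no-separation case) only by numerical verification. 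One small correction: your heuristic for the no-separation case, ``shrinkage of both leave-one-out probabilities towards $1/2$,'' is the wrong mechanism here, since the $\gamma$-penalty shrinks towards the pooled rate $a_{\bullet 1}/n$; the conclusion $dD/d\lambda(0)<0$ is nevertheless correct and can be shown by an algebraic identity (the relevant numerator difference equals $a_{j1}a_{k0}(a_{k0}-1)+a_{j0}a_{k1}(a_{k1}-1)>0$), so this is a flaw of justification, not of fact.

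The genuine gap is the one you yourself flag and then defer: the ``if'' direction. Having $dD/d\lambda(0)\geq 0$ under conditions 1 and 2(i)--(iii) does not yield $\lambda_D=0$; you must rule out that the derivative later turns negative and drags $D$ below $D(0)$, i.e. you need global sign control of the numerator of $dD/d\lambda$ on $(0,\infty)$ (or at least $D(\lambda)\geq D(0)$ for all $\lambda$). Your sketched remedy — counting sign changes of the numerator polynomial, invoking $D(\infty)\geq D(0)$, and asserting that ``any subsequent decrease cannot bring it below $D(0)$'' — is a plan, not a proof: the inequality $D(\infty)\geq D(0)$ is itself unproven, and the no-negative-hump claim is precisely the content that needs an argument, since (as you note) the $\gamma$-penalty couples the two categories and the $K=1$ unimodality analysis does not transfer. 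For what it is worth, the paper does not close this step analytically either: it writes out the derivative explicitly and then states that the global nonnegativity of $dD/d\lambda$ under the stated conditions ``was verified numerically.'' So you have not missed an analytic idea that the paper supplies, but as written your proposal proves only the ``only if'' half (local negativity of the derivative forces $\lambda_D>0$) plus the complete-separation case, and leaves the equivalence incomplete at exactly the point where the paper resorts to numerics.
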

\begin{remark}
When $a_{20},a_{21}=2$ the condition 2.(iii.) simplifies to $a_{11}\leq 4$. When $a_{11}=0$ just replace $a_{20}$ and $a_{21}$ with $a_{21}$ and $a_{20}$, respectively. 
\end{remark}

\subsubsection{Tuning the LOOCV squared prediction error}

When using (\ref{est.ridge2.1}) and (\ref{est.ridge2.2}) for calculating the LOOCV squared prediction error (Eq. (\ref{eq.pm})), the following result applies.
\begin{proposition}
\label{pr6}
Let $K=2$ and let $\lambda_{PE}$ denote the point at which the LOOCV squared prediction error defined in (\ref{eq.pm}) utilizing the estimates of the event probability defined in (\ref{est.ridge2.1}) and (\ref{est.ridge2.2}) attains its minimum. Then $\lambda_{PE}=0 \mbox{, if and only if } $ $|a_{k0}-a_{k1}|= a_{k\bullet}>1\mbox{ for }k=1,2$, i.e. there is complete separation in the data.
\end{proposition}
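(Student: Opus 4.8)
The plan is to reduce the statement to a sign computation for the right-hand derivative of $PE$ at $\lambda=0$. First I would record two preliminary observations. (a) $PE(\lambda)\ge 0$ for every $\lambda\ge 0$, being a sum of squares with non-negative weights $a_{k0},a_{k1}$. (b) $PE$ is a smooth function of $\lambda$ on $[0,\infty)$: the estimators $\tilde\pi_{k0}(\lambda),\tilde\pi_{k1}(\lambda)$ in (\ref{est.ridge2.1})--(\ref{est.ridge2.2}) are rational in $\lambda$ with strictly positive denominators for $\lambda\ge 0$ (here $a_{k\bullet}>1$ is exactly what keeps the $\lambda=0$ value well defined), and whenever a cell count is zero the corresponding summand of (\ref{eq.pm}) simply drops out, so no difficulty arises at $\lambda=0$ even under quasi-complete separation.

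Observation (a) already settles the ``if'' direction. Under complete separation each category is pure, so at $\lambda=0$ the surviving leave-one-out prediction equals exactly $0$ or $1$ (e.g.\ if $a_{k0}=0$ then $\tilde\pi_{k1}(0)=(a_{k1}-1)/(a_{k\bullet}-1)=1$), and the removed cell contributes nothing; hence $PE(0)=0$, which is a global minimum, so $\lambda_{PE}=0$. For the ``only if'' direction I would argue by contrapositive: if separation is \emph{not} complete, I claim $\left.\tfrac{d}{d\lambda}PE(\lambda)\right|_{\lambda=0}<0$, which precludes $\lambda_{PE}=0$. Differentiating (\ref{eq.pm}),
\[
n\,PE'(\lambda)=\sum_{k=1}^{2}\Bigl(2a_{k0}\tilde\pi_{k0}(\lambda)\tilde\pi_{k0}'(\lambda)-2a_{k1}\bigl(1-\tilde\pi_{k1}(\lambda)\bigr)\tilde\pi_{k1}'(\lambda)\Bigr).
\]
Evaluating at $\lambda=0$ and using $a_{\bullet1}=a_{11}+a_{21}$ and $n=a_{1\bullet}+a_{2\bullet}$ to simplify the numerators, I would verify the two identities
\[
a_{k0}\tilde\pi_{k0}(0)=a_{k1}\bigl(1-\tilde\pi_{k1}(0)\bigr)=\frac{a_{k0}a_{k1}}{a_{k\bullet}-1},\qquad
\tilde\pi_{k0}'(0)-\tilde\pi_{k1}'(0)=-\frac{4}{(a_{k\bullet}-1)^2}\qquad(k=1,2).
\]
Substituting these gives the closed form
\[
PE'(0)=-\frac{8}{n}\sum_{k=1}^{2}\frac{a_{k0}a_{k1}}{(a_{k\bullet}-1)^{3}}\;\le\;0,
\]
with equality exactly when $a_{k0}a_{k1}=0$ for both $k$, i.e.\ exactly under complete separation. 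Thus if separation is not complete, $PE'(0)<0$, so $\lambda=0$ cannot minimize $PE$ on $[0,\infty)$ and $\lambda_{PE}>0$; together with the previous paragraph this is the claimed equivalence.

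The computation is essentially bookkeeping. The only points needing care are (i) the two displayed identities, where the clean cancellations --- in particular the collapse of $\tilde\pi_{k0}'(0)-\tilde\pi_{k1}'(0)$ to $-4/(a_{k\bullet}-1)^2$ and the emergence of the common factor $a_{k0}a_{k1}/(a_{k\bullet}-1)$ in both ``weight'' terms --- rely on rewriting $a_{\bullet1}$ and $n$ through the individual cell counts, and (ii) the legitimacy of differentiating at $\lambda=0$ under quasi-complete separation, which is precisely the smoothness remark (b) above. I do not expect any genuine obstacle beyond these algebraic verifications.
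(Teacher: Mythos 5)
Your proposal is correct, and its core computation coincides with the paper's: the paper likewise evaluates $\frac{dPE}{d\lambda}$ at $\lambda=0$ and reads off its sign, and your closed form $PE'(0)=-\frac{8}{n}\sum_{k=1}^{2}\frac{a_{k0}a_{k1}}{(a_{k\bullet}-1)^{3}}$ agrees exactly with what the paper's (messier) general-$\lambda$ derivative gives at $\lambda=0$ — your unified formula simply merges the paper's separate ``no separation'' and ``quasi-complete separation'' cases, where the paper argues term-by-term that the bracket reduces to $(a_{k\bullet}-1)-(n-1)<0$ or vanishes for a pure category. Where you genuinely diverge is the ``if'' direction: the paper proves it by showing that under complete separation $\frac{dPE}{d\lambda}\geq 0$ for all $\lambda$, with equality only at $\lambda=0$, so $PE$ is strictly increasing; you instead observe $PE(0)=0\leq PE(\lambda)$, which is more elementary. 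The only thing your shortcut loses is strictness: $PE(0)=0$ makes $\lambda=0$ \emph{a} minimizer, but to assert $\lambda_{PE}=0$ as the argmin you should add the one-line remark that for $\lambda>0$ the estimates (\ref{est.ridge2.1})--(\ref{est.ridge2.2}) are shrunk strictly toward the pooled event rate (e.g.\ with $a_{10}=a_{21}=0$, $\tilde\pi_{11}(\lambda)<1$ and $\tilde\pi_{20}(\lambda)>0$), so $PE(\lambda)>0$; the paper gets this uniqueness for free from its monotonicity computation. With that sentence added, your argument is complete and arguably cleaner than the paper's case split.
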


 \section{Proofs}

 \begin{proof}[Proof of Proposition \ref{prop.D.ns}, \ref{prop.D.cs}, \ref{prop.1.s} and \ref{prop.2.s} and Lemma \ref{lemma.D.ns.1}, \ref{lemma.D.ns.2}, \ref{lemma.D.ns.1.0}, and \ref{lemma.D.qcs.K}.]
 The first 
 derivative of $D$ with respect to $\lambda$ is


$$\frac{d D}{d \lambda}=-2\sum_{k=1}^K \frac{d D_k}{d \lambda},$$

where

\begin{eqnarray}
\label{der}
 -2\frac{d D_k}{d \lambda}&=&-4\left( \frac{a_{k1}(a_{k0}-a_{k1}+1)}{(a_{k1}-1+2\lambda)(a_{k\bullet}-1+4\lambda)}-\frac{a_{k0}(a_{k0}-a_{k1}-1)}{(a_{k0}-1+2\lambda)(a_{k\bullet}-1+4\lambda)}   \right) \nonumber\\
  &=&\frac{-4\left( a_{k1}(a_{k0}-1+2\lambda)(a_{k0}-a_{k1}+1)-a_{k0}(a_{k1}-1+2\lambda)(a_{k0}-a_{k1}-1)  \right) }{(a_{k1}-1+2\lambda)(a_{k0}-1+2\lambda)(a_{k\bullet}-1+4\lambda)}  \nonumber\\
  &=&\frac{-4\left( a_{k0}(a_{k0}-1)+a_{k1}(a_{k1}-1)-2\lambda( (a_{k0}-a_{k1})^2-a_{k1}-a_{k0}  )  \right)}{(a_{k1}-1+2\lambda)(a_{k0}-1+2\lambda)(a_{k\bullet}-1+4\lambda)},
\end{eqnarray}



Note that $\frac{d D_k}{d \lambda}$ and hence $\frac{d D}{d \lambda}$ are continuous on $0\leq\lambda<\infty$.

For Proposition \ref{prop.D.ns}, write

$$\frac{d D}{d \lambda}=-4\sum_{k=1}^K \frac{a_{k0}^2+a_{k1}^2-a_{k\bullet}-2\lambda((a_{k0}-a_{k1})^2-a_{k\bullet})}{(a_{k1}-1+2\lambda)(a_{k0}-1+2\lambda)(a_{k\bullet}-1+4\lambda)}. $$

It is obvious that for $\lambda=0$, $\frac{d D}{d \lambda}<0$, thus proving Proposition \ref{prop.D.ns}. Assuming that $(a_{k0}-a_{k1})^2-a_{k\bullet}\geq 0$ holds for all $k=1,...,K$, then we have for $0\leq\lambda<\lambda^-$, for some $\lambda^->0$, $\frac{d D}{d \lambda}<0$, while for $\lambda>\lambda^-$, $\frac{d D}{d \lambda}>0$, hence $D$ is minimized at $\lambda^-$, thus proving Lemma \ref{lemma.D.ns.1}.

For Proposition \ref{prop.D.cs} assume w.l.o.g. that $a_{k0}=0$ and $a_{k1}>1$, then

$$-2\frac{d D_k}{d \lambda}=\frac{4a_{k1}(a_{k1}-1)}{(a_{k1}-1+2\lambda)(a_{k1}-1+4\lambda)}>0, $$

hence $\frac{d D}{d \lambda}>0$ which proves Proposition \ref{prop.D.cs}.

Write

$$-2\frac{d D_k}{d \lambda}=-\frac{4a_{k\bullet}}{(\frac{a_{k\bullet}}{2}-1+2\lambda)(a_{k\bullet}-1+4\lambda)}<0\mbox{, for }k=1,...,K, $$

hence if $a_{k0}=a_{k1}$ holds for all $k=1,...,K$, then $\frac{d D}{d \lambda}<0$ which proves Lemma \ref{lemma.D.ns.2} for a special case, where $a_{k0}=a_{k1}$.

For a general situation considered in Lemma \ref{lemma.D.ns.2}, it is obvious from the last line of (\ref{der}) that $-2\frac{d D_k}{d \lambda}<0$, hence $\frac{d D}{d \lambda}<0$ which proves Lemma \ref{lemma.D.ns.2}.

For Lemma \ref{lemma.D.ns.1.0}, setting $\frac{d D}{d \lambda}$ to zero, yields

\begin{equation}
\label{eq.opt.l1}
\lambda=\frac{ -2a_{10}a_{11}+(a_{10}-a_{11}+1)a_{11}-(a_{10}-a_{11}-1)a_{10} }{ 2((a_{10}-a_{11}+1)a_{11}-(a_{10}-a_{11}-1)a_{10})  }=\frac{ a_{10}(a_{10}-1)+a_{11}(a_{11}-1) }{2(( a_{10}-a_{11} )^2-a_{10}-a_{11}  )} .
\end{equation}

from where it can be seen that under the setting of Lemma \ref{lemma.D.ns.1.0}, $0<\lambda<\infty$. Since for $\lambda=0$, $\frac{d D}{d \lambda}<0$, it follows that 
(\ref{eq.opt.l1}) is indeed the minimum, hence the deviance is optimized at (\ref{eq.opt.l1}).

For Lemma \ref{lemma.D.qcs.K} assume w.l.o.g. that for the first $K_1$ categories there are no nonevents,

$$a_{k0}=0\mbox{ for }k=1,...,K_1, $$

and that for the other $K_2=K-K_1$ categories there are at least two events and nonevents,

$$a_{k0},a_{k1}>1\mbox{ for }k=K_1+1,...,K .$$

For $\lambda=0$, $\frac{d D}{d \lambda}$, can be expressed as

\begin{eqnarray}
\label{kaj}
\frac{d D}{d \lambda}|\lambda=0&=&\frac{-4}{ \prod_{k=1}^{K_1}(a_{k1}-1)\prod_{k=K_1+1}^K (a_{k0}-1)(a_{k1}-1)(a_{k\bullet}-1)  }\left[ \right.\nonumber\\
&& \sum_{k=K_1+1}^K (a_{k0}^2+a_{k1}^2-a_{k\bullet})\prod_{l=1}^{K_1} (a_{l1}-1)\prod_{l=K_1+1,l\neq k}^K (a_{l1}-1)(a_{l0}-1)(a_{l\bullet}-1)\nonumber\\
&-&\left.\sum_{k=1}^{K_1}a_{k1}\prod_{l=1,l\neq k}^{K_1}(a_{l1}-1)\prod_{l=K_1+1}^K (a_{l1}-1)(a_{l0}-1)(a_{l\bullet}-1)  \right].
\end{eqnarray}

Observe that the denominator in (\ref{kaj}) is always positive, while the numerator can be expressed as given in Lemma \ref{lemma.D.qcs.K} by using simple algebra, hence under the condition of Lemma \ref{lemma.D.qcs.K} it follows that for $\lambda=0$, $\frac{d D}{d \lambda}<0$, hence proving Lemma \ref{lemma.D.qcs.K}.

To prove Proposition \ref{prop.1.s}, assume, w.l.o.g., $a_{10}=0$, $a_{11}>1$, and $ a_{20}=a_{21}>1$. In this case,

\begin{eqnarray*}
\frac{d D}{d \lambda}&=&4\left( \frac{a_{11}(a_{11}-1)(\frac{a_{2\bullet}}{2}-1+2\lambda)(a_{2\bullet}-1+4\lambda)-a_{2\bullet}(a_{11}-1+2\lambda)(a_{11}-1+4\lambda)}{(a_{11}-1+2\lambda)(a_{11}-1+4\lambda)(\frac{a_{2\bullet}}{2}-1+2\lambda)(a_{2\bullet}-1+4\lambda)} \right)\\
&=&\frac{4(\phi(\lambda)-\xi(\lambda))}{\psi(\lambda)},
 \end{eqnarray*}

for nonnegative continuous functions $\phi(\lambda),\xi(\lambda),\psi(\lambda)>0$ on $0\leq \lambda<\infty$. 

Rewrite the expression $\phi(\lambda)-\xi(\lambda)$ to

\begin{eqnarray}
\label{qv}
\phi(\lambda)&-&\xi(\lambda)=8\lambda^2( a_{11}(a_{11}-1)-a_{2\bullet} )\nonumber\\
&+&2\lambda(a_{11}-1)(a_{11}(2a_{2\bullet}-3)-3a_{2\bullet} )\nonumber\\
&+& a_{11}(a_{11}-1)(\frac{a_{2\bullet}}{2}-1)( a_{2\bullet} -1)-a_{2\bullet}(a_{11}-1)^2,
\end{eqnarray}

from where it can be seen that $\phi(\lambda)-\xi(\lambda)$ is a quadratic function of $\lambda$. The derivative of $\phi(\lambda)-\xi(\lambda)$ with respect to $\lambda$ is

 \begin{eqnarray}
\label{derq}
\frac{d(\phi(\lambda)-\xi(\lambda))}{d\lambda}&=&16\lambda ( a_{11}(a_{11}-1)-a_{2\bullet} )\nonumber\\
&+&2(a_{11}-1)(a_{11}(2a_{2\bullet}-3)-3a_{2\bullet} ).
\end{eqnarray}

Note that $\frac{d(\phi(\lambda)-\xi(\lambda))}{d\lambda}$ is continuous on $0\leq\lambda<\infty$.

Assume,

\begin{itemize}
\item [A1.]
$$a_{11}(a_{11}-1)-a_{2\bullet}> 0 \mbox{ and } a_{2\bullet}(2a_{11}-3)-3a_{11}\geq 0,$$

or equivalently

$$a_{11}>2 \mbox{ and }  a_{2\bullet}< a_{11}(a_{11}-1).$$

Under this condition $\phi(\lambda)-\xi(\lambda)$ is a strictly increasing function of $\lambda$. If we further assume

$$ a_{11}(\frac{a_{2\bullet}}{2}-1)( a_{2\bullet} -1)\geq a_{2\bullet}(a_{11}-1),  $$

then it follows that for $\lambda\geq 0$, $\phi(\lambda)-\xi(\lambda)\geq 0$ and hence $\frac{dD}{d\lambda}\geq 0$ which implies that the deviance is optimized at $\lambda^*=0$, which proves claim 1.(A).

If however,

$$ a_{11}(\frac{a_{2\bullet}}{2}-1)( a_{2\bullet} -1)< a_{2\bullet}(a_{11}-1),  $$

 then for $0<\lambda<\lambda^+$, for some $\lambda^+>0$, $\frac{dD}{d\lambda}< 0$, for $\lambda=\lambda^*$, $\frac{dD}{d\lambda}= 0$, and for $\lambda>\lambda^+$, $\frac{dD}{d\lambda}> 0$ which implies that the deviance is minimized for some $0<\lambda^*<\infty$, thus proving claim 3..

\item [A2.]
$$a_{11}(a_{11}-1)-a_{2\bullet}< 0  \mbox{ and }  a_{2\bullet}(2a_{11}-3)-3a_{11}< 0,$$

or equivalently

$$a_{11}=2 \mbox{ and } a_{2\bullet}<6.  $$

Here $\phi(\lambda)-\xi(\lambda)$ is a strictly decreasing function of $\lambda$ and since here

$$a_{11}(\frac{a_{2\bullet}}{2}-1)( a_{2\bullet} -1)> a_{2\bullet}(a_{11}-1) ,$$

 it can be seen that for  $0\leq\lambda<\lambda^+$, for some $\lambda^+>0$,  $\frac{d(\phi(\lambda)-\xi(\lambda))}{d\lambda}>0$  for $\lambda=\lambda^+$, $\frac{d(\phi(\lambda)-\xi(\lambda))}{d\lambda}=0$ holds, while for $\lambda>\lambda^+$, $\frac{d(\phi(\lambda)-\xi(\lambda))}{d\lambda}<0$ which implies that to obtain the point at which the deviance is optimized it is sufficient to compare the deviance when using $\lambda=0$ and $\lambda=\infty$. For $\lambda=0$ the deviance is

$$D(\lambda=0)=-2a_{2\bullet}\left(  \log\left( \frac{\frac{a_{2\bullet}}{2}-1}{a_{2\bullet}-1} \right)  \right),$$

while

$$\lim_{\lambda\rightarrow \infty} D(\lambda)=-2( a_{11}+a_{2\bullet} )\log\left(\frac{1}{2}\right) ,$$

and the difference is

\begin{eqnarray}
\label{difDev}
\Delta&=&D(\lambda=0)-\lim_{\lambda\rightarrow \infty} D(\lambda)\nonumber\\
&=&-2\left( a_{11}\log(2)+a_{2\bullet}\log\left(\frac{a_{2\bullet}-2}{a_{2\bullet}-1}\right)  \right).
\end{eqnarray}

In this setting it is easy to see that $D>0$ which implies that the deviance is optimized at $\lambda^*=\infty$ hence proving claim 2..

\item[A3.]
$$a_{11}(a_{11}-1)-a_{2\bullet}\geq 0  \mbox{ and }  a_{2\bullet}(2a_{11}-3)-3a_{11}< 0, $$
or equivalently
$$a_{2\bullet}\leq a_{11}(a_{11}-1)  \mbox{ and } a_{2\bullet}<\frac{3a_{11}}{2a_{11}-3} , $$

which can not hold since it implies that

$$a_{2\bullet}<3. $$

\item[A4.]
$$a_{11}(a_{11}-1)-a_{2\bullet}< 0  \mbox{ and }    a_{2\bullet}(2a_{11}-3)-3a_{11}\geq 0, $$
or equivalently
$$a_{11}=2   \mbox{ and }  a_{2\bullet}>4 $$
or
$$a_{11}>2  \mbox{ and }  a_{2\bullet}>a_{11}(a_{11}-1). $$
In this case it can be seen that for  $0\leq\lambda<\lambda^+$, for some $\lambda^+>0$,  $\frac{d(\phi(\lambda)-\xi(\lambda))}{d\lambda}>0$  for $\lambda=\lambda^+$, $\frac{d(\phi(\lambda)-\xi(\lambda))}{d\lambda}=0$ holds, while for $\lambda>\lambda^+$, $\frac{d(\phi(\lambda)-\xi(\lambda))}{d\lambda}<0$ which implies that to obtain the point at which the deviance is optimized it is sufficient to compare the deviance when using $\lambda=0$ and $\lambda=\infty$.

It can be seen that if $a_{11}=2   \mbox{ and }  a_{2\bullet}>4 $, then $\Delta<0$, for $\Delta$ defined in (\ref{difDev}), and the deviance is optimized at $\lambda^*=0$ which proves claim 1.(C).

Also, for $a_{11}>2  \mbox{ and }  a_{2\bullet}>a_{11}(a_{11}-1) $, then $\Delta<0$.

\item[A5.]
finally, assume that

$$a_{11}(a_{11}-1)-a_{2\bullet}= 0 \mbox{ and } a_{2\bullet}(2a_{11}-3)-3a_{11}\geq 0,$$

which is  equivalent to

$$a_{11}>2 \mbox{ and }a_{2\bullet}=a_{11}(a_{11}-1). $$

In this case  $\phi(\lambda)-\xi(\lambda)$ and hence $\frac{dD}{d\lambda}$ is a strictly increasing function of $\lambda$. Since in this setting it can be seen that
$$ a_{11}(\frac{a_{2\bullet}}{2}-1)( a_{2\bullet} -1)\geq a_{2\bullet}(a_{11}-1),  $$

then $\frac{dD}{d\lambda}>0$ and the deviance is optimized at $\lambda^*=0$, which together with 4. proves claim 1.(B).

\end{itemize}

For Proposition \ref{prop.2.s} assume w.l.o.g. that $a_{10}=0$, $a_{11}>1$, $1<a_{20}<a_{21}$. Write,
\begin{eqnarray*}
\frac{d D}{d \lambda}&=& \frac{4}{ (a_{11}-1+2\lambda)(a_{11}-1+4\lambda)(a_{21}-1+2\lambda)(a_{20}-1+2\lambda)(a_{2\bullet}-1+4\lambda)  } \left( \right.\\
&& a_{11}(a_{11}-1)(a_{20}-1+2\lambda)(a_{21}-1+2\lambda)(a_{2\bullet}-1+4\lambda)\\
&-&\left.(a_{20}(a_{20}-1)+a_{21}(a_{21}-1)-2\lambda( (a_{20}-a_{21})^2-a_{21}-a_{20}  ))(a_{11}-1+2\lambda)(a_{11}-1+4\lambda)  \right) \\
&=&4\left(\frac{\phi(\lambda)-\xi(\lambda)}{\psi(\lambda)}\right),
\end{eqnarray*}

for $\phi(\lambda)>0$ and $\psi(\lambda)>0$. The first and second derivatives of $\phi(\lambda)-\xi(\lambda)$ with respect to $\lambda$ are

\begin{eqnarray}
\label{egg}
\frac{d \phi(\lambda)}{d\lambda}&-&\frac{d \xi(\lambda)}{d\lambda}=48\lambda^2\left[a_{11}(a_{11}-1)+(a_{20}-a_{21})^2-a_{2\bullet}\right]\nonumber\\
&+&8\lambda\left[ a_{11}(a_{11}-1)(3a_{2\bullet}-5)-2\left((a_{20}-1)a_{20}+(a_{21}-1)a_{21}\right)+\right.\nonumber\\
&&\left.3(a_{11}-1)( (a_{20}-a_{21})^2-a_{2\bullet}  )   \right]\nonumber\\
&+&2(a_{11}-1) \left[   a_{11}((a_{2\bullet}-1)(a_{2\bullet}-2)+2(a_{20}-1)(a_{21}-1))+\right.\nonumber\\
&&\left. (a_{11}-1)( (a_{20}-a_{21})^2-a_{2\bullet} )-3((a_{20}-1)a_{20}+(a_{21}-1)a_{21})  \right],
\end{eqnarray}

\begin{eqnarray}
\label{egg2}
\frac{d^2 \left(\phi(\lambda)-\xi(\lambda)\right)}{d\lambda^2}&=&96\lambda\left[a_{11}(a_{11}-1)+(a_{20}-a_{21})^2-a_{2\bullet}\right]\nonumber\\
&+&8\left[ a_{11}(a_{11}-1)(3a_{2\bullet}-5)-2\left((a_{20}-1)a_{20}+(a_{21}-1)a_{21}\right)+\right.\nonumber\\
&&\left. 3(a_{11}-1)\left( (a_{20}-a_{21})^2-a_{2\bullet}  \right)   \right] .
\end{eqnarray}

Note that the functions $\phi(\lambda)$, $\xi(\lambda)$, $\psi(\lambda)$, $\frac{d \phi(\lambda)}{d\lambda}-\frac{d \xi(\lambda)}{d\lambda}$ and $\frac{d^2 \left(\phi(\lambda)-\xi(\lambda)\right)}{d\lambda^2}$ are continuous on $0\leq \lambda<\infty$ and

$$\lim_{\lambda\rightarrow\infty} \frac{d D}{d \lambda}=0.$$

Observe that

\begin{eqnarray}
\label{eq24}
a_{11}((a_{2\bullet}-1)(a_{2\bullet}-2)+2(a_{20}-1)(a_{21}-1))+&&\nonumber\\
(a_{11}-1)( (a_{20}-a_{21})^2-a_{2\bullet} )-&&\nonumber\\
3((a_{20}-1)a_{20}+(a_{21}-1)a_{21})&\geq&\nonumber\\
2((a_{2\bullet}-1)(a_{2\bullet}-2)+2(a_{20}-1)(a_{21}-1))+&&\nonumber\\
 ( (a_{20}-a_{21})^2-a_{2\bullet} )-&&\nonumber\\
3((a_{20}-1)a_{20}+(a_{21}-1)a_{21})&=&\nonumber\\
2(   3a_{20}a_{21}-4a_{2\bullet}+4  )&>&0,
\end{eqnarray}

\begin{eqnarray}
\label{eq241}
f(0)-g(0)&=& (a_{11}-1)( a_{11}(a_{20}-1)(a_{21}-1)(a_{2\bullet}-1)-(a_{11}-1)(a_{20}^2+a_{21}^2-a_{2\bullet})   )\nonumber\\
&>&(a_{11}-1)^2(  (a_{20}-1)(a_{21}-1)(a_{2\bullet}-1)- (a_{20}^2+a_{21}^2-a_{2\bullet})  ) \nonumber\\
&=&(a_{11}-1)^2( a_{20}^2(a_{21}-2)+a_{21}^2(a_{20}-2)+3(a_{2\bullet}-a_{20}a_{21} )-1   ),\nonumber\\
&\geq&0,
\end{eqnarray}

\begin{eqnarray}
\label{eq242}
 \Delta&=&D(\lambda=0)-\lim_{\lambda\rightarrow\infty}D(\lambda)\nonumber\\
 &=&-2\left( a_{11}\log(2)+a_{20}\log \left( \frac{2a_{20}-2}{a_{2\bullet}-1} \right) +a_{21}\log\left( \frac{2a_{21}-2}{a_{2\bullet}-1} \right) \right)\nonumber\\
 &\leq&-2\left( 2\log(2)+a_{20}\log \left( \frac{2a_{20}-2}{a_{2\bullet}-1} \right) +a_{21}\log\left( \frac{2a_{21}-2}{a_{2\bullet}-1} \right) \right)\nonumber\\
 &\leq&-2\left( 2\log(2)+2\log \left( \frac{2}{5-1} \right) +3\log\left( \frac{6-2}{5-1} \right) \right)=0.
\end{eqnarray}

\begin{itemize}
\item[A.]
Assume that $a_{11}(a_{11}-1)< a_{2\bullet}-(a_{20}-a_{21})^2$, which implies that $(a_{20}-a_{21})^2< a_{2\bullet}$. Then by (\ref{eq24}) and (\ref{eq241}) it follows that up to some $\lambda^+>0$, $\phi(\lambda)>\xi(\lambda)$, hence up to $0\leq\lambda<\lambda^+$, deviance is an increasing function of $\lambda$, for $\lambda=\lambda^+$, the deviance obtains its maximum, while for $\lambda>\lambda^+$ the deviance is a decreasing function of $\lambda$. Hence, it is sufficient to compare the deviance for $\lambda=0$ and $\lambda=\infty$. From (\ref{eq242}) it then follows that the deviance is minimized at $\lambda=0$. Note that $a_{11}=2$ and $a_{20}=2$, $a_{21}=3$ is a special case where the deviance has two minima, one at $\lambda=0$ and one at $\lambda=\infty$, which proves claim 2..

\item[B.]
Assume that $a_{11}(a_{11}-1)= a_{2\bullet}-(a_{20}-a_{21})^2$ which implies that $(a_{20}-a_{21})^2< a_{2\bullet}$. Then (\ref{egg}) is a linear function of $\lambda$. Assuming that

$$a_{11}(a_{11}-1)(3a_{2\bullet}-5)-2((a_{20}-1)a_{20}+(a_{21}-1)a_{21})+ 3(a_{11}-1)( (a_{20}-a_{21})^2-a_{\bullet}  )\geq 0,  $$

then it is obvious that $\frac{dD}{d\lambda}>0$, hence the deviance is minimized at $\lambda=0$, while for $$a_{11}(a_{11}-1)(3a_{2\bullet}-5)-2((a_{20}-1)a_{20}+(a_{21}-1)a_{21})+ 3(a_{11}-1)( (a_{20}-a_{21})^2-a_{\bullet}  )< 0,  $$
then by the same argument as in A. and by (\ref{eq242}) it follows that the deviance is minimized at $\lambda=0$. Items A. and B. hence prove claim 1. under (C1).

\item[C.]

Now assume that
\begin{equation}
\label{eq22g}
 (a_{20}-a_{21})^2> a_{2\bullet},
 \end{equation}

which implies that

\begin{equation}
\label{eq22.i}
2a_{20}a_{21}+a_{2\bullet}<a_{20}^2+a_{21}^2,
\end{equation}

\begin{equation}
\label{eq22.ii}
 a_{2\bullet}>6,
\end{equation}

 \begin{equation}
\label{eq22}
a_{11}(a_{11}-1)+(a_{20}-a_{21})^2-a_{2\bullet}>0,
\end{equation}

 \begin{equation}
\label{eq23}
\frac{1}{6}( a_{20}^2+a_{21}^2+5a_{2\bullet}-10 ) <  \frac{1}{2}(a_{20}^2+a_{21}^2 -a_{2 \bullet}).
\end{equation}

Further assume that $a_{11}>2$, then it follows from (\ref{eq22.i}) and (\ref{eq22.ii}) that

\begin{eqnarray}
\label{eq23}
a_{11}(a_{11}-1)(3a_{2\bullet}-5)-&&\nonumber\\
2((a_{20}-1)a_{20}+(a_{21}-1)a_{21})+&&\nonumber\\
3(a_{11}-1)( (a_{20}-a_{21})^2-a_{\bullet}  ) &\geq& \nonumber\\
6(3a_{2\bullet}-5)-&&\nonumber\\
2((a_{20}-1)a_{20}+(a_{21}-1)a_{21})+&&\nonumber\\
6( (a_{20}-a_{21})^2-a_{\bullet}  )&=&\nonumber\\
5(a_{2\bullet}-6)+4( a_{20}^2+a_{21}^2-(3a_{20}a_{21}-\frac{9}{4}a_{2\bullet}) )&>&0.\nonumber\\
\end{eqnarray}

This implies that for $\lambda\geq 0$, (\ref{egg}) is an increasing function of $\lambda$. Then by (\ref{eq24}) it follows that $\frac{d \phi(\lambda)}{d\lambda}-\frac{d \xi(\lambda)}{d\lambda}>0$ implying that $\phi(\lambda)-\xi(\lambda)$ is an increasing function of $\lambda$. By (\ref{eq241}) it then follows that $\phi(\lambda)>\xi(\lambda)$, implying that $\frac{d D}{d \lambda}>0$. This proves claim 1. under (C2).

For $a_{11}=2$ assume first that

$$2(3a_{2\bullet}-5)-2((a_{20}-1)a_{20}+(a_{21}-1)a_{21})+3( (a_{20}-a_{21})^2-a_{\bullet}  )\geq 0, $$
 or equivalently

 \begin{equation}
 \label{eh}
 6a_{20}a_{21}-5a_{2\bullet}+2\leq a_{20}^2+a_{21}^2-8,
 \end{equation}

 which together with (\ref{eq22g}) implies that

 $$a_{20}a_{21}\leq \frac{1}{6}\left(a_{20}^2+a_{21}^2+5a_{2\bullet}-10\right).$$

Then based on the same argument as before $\phi(\lambda)-\xi(\lambda)>0$, thus $\frac{d D}{d \lambda}>0$, which proves claim 1. under (C3).

Now assume that

$$2(3a_{2\bullet}-5)-2((a_{20}-1)a_{20}+(a_{21}-1)a_{21})+3( (a_{20}-a_{21})^2-a_{\bullet}  )< 0, $$
 or equivalently

 $$ 6a_{20}a_{21}-5a_{2\bullet}+10> a_{20}^2+a_{21}^2,$$

hence together with (\ref{eq22g})

\begin{equation}
 \label{eh1}
  \frac{1}{6}( a_{20}^2+a_{21}^2+5a_{2\bullet}-10 ) <a_{20}a_{21}< \frac{1}{2}(a_{20}^2+a_{21}^2 -a_{2 \bullet}).
  \end{equation}

Now, $\frac{d(\phi(\lambda)-\xi(\lambda))}{d\lambda}$ reaches its minimum at

$$\lambda^-=-\frac{a_{20}^2+a_{21}^2-6a_{20}a_{21}+5a_{2\bullet}-10  }{12(a_{20}^2+a_{21}^2-2a_{20}a_{21}-a_{2\bullet}+2)} >0,$$

where

$$\frac{d(\phi(\lambda)-\xi(\lambda))}{d\lambda}(\lambda^-)=4(3a_{20}a_{21}-4a_{2\bullet}+4)-\frac{(a_{20}^2+a_{21}^2-6a_{20}a_{21}+5a_{2\bullet}-10)^2}{3(a_{20}^2+a_{21}^2-2a_{20}a_{21}-a_{2\bullet}+2)} $$

Now, assuming that $\frac{d(\phi(\lambda)-\xi(\lambda))}{d\lambda}(\lambda^-)\geq 0$, then $\frac{d D}{d \lambda}>0$, however for $\frac{d(\phi(\lambda)-\xi(\lambda))}{d\lambda}(\lambda^-)< 0$, it is, since $\frac{d D}{d \lambda}$ can also be smaller then zero, necessary to directly investigate the deviance, which in this example is

$$D=-2\left( \log\frac{(1+2\lambda)^2 (a_{20}-1+2\lambda)^{a_{20}}(a_{21}-1+2\lambda)^{a_{21}}  }{ (1+4\lambda)^2(a_{2\bullet}-1+4\lambda)^{a_{2\bullet}} }  \right) .$$
 The investigation of the deviance is however possible only numerically, where it is sufficient to consider only

 $$0\leq\lambda<\frac{a_{20}^2+a_{21}^2-a_{2\bullet}}{a_{20}^2+a_{21}^2-2a_{20}a_{21}-a_{2\bullet}}, $$

 since it is obvious that for $\lambda\geq \frac{a_{20}^2+a_{21}^2-a_{2\bullet}}{a_{20}^2+a_{21}^2-2a_{20}a_{21}-a_{2\bullet}}$ the deviance is an increasing function of $\lambda$. We verified numerically that the deviance is also in this case minimized at $\lambda=0$. 

\item[D.]

Assume that $(a_{20}-a_{21})^2< a_{2\bullet}$ and $a_{11}(a_{11}-1)\geq a_{2\bullet}-(a_{20}-a_{21})^2$.


Assuming that

\begin{eqnarray*}
 && a_{11}(a_{11}-1)(3a_{2\bullet}-5)-2((a_{20}-1)a_{20}+(a_{21}-1)a_{21})+\\
&&3(a_{11}-1)( (a_{20}-a_{21})^2-a_{2\bullet}  )  \geq 0
\end{eqnarray*}

 then by the same argument as before, $\frac{dD}{d\lambda}>0$ thus proving claim 1. under (C4). Now for

 \begin{eqnarray*}
 && a_{11}(a_{11}-1)(3a_{2\bullet}-5)-2((a_{20}-1)a_{20}+(a_{21}-1)a_{21})+\\
&&3(a_{11}-1)( (a_{20}-a_{21})^2-a_{2\bullet}  )  <0,
\end{eqnarray*}

$\frac{d(\phi(\lambda)-\xi(\lambda))}{d\lambda}$ reaches its minimum at

$$\lambda^-= - \frac{   a_{11}(a_{11}-1)(3a_{2\bullet}-5)-2(a_{20}^2+a_{21}^2-a_{2\bullet})+3(a_{11}-1)( (a_{20}-a_{21})^2-a_{\bullet}  )    }{ 12 (a_{11}(a_{11}-1)+(a_{20}-a_{21})^2-a_{2\bullet}) } >0.$$

where
\begin{small}
\begin{eqnarray}
\frac{d(\phi(\lambda)-\xi(\lambda))}{d\lambda}(\lambda^-)&=&2(a_{11}-1) \left[   a_{11}((a_{2\bullet}-1)(a_{2\bullet}-2)+2(a_{20}-1)(a_{21}-1))+\right.\nonumber\\
&&\left. (a_{11}-1)( (a_{20}-a_{21})^2-a_{2\bullet} )-3((a_{20}-1)a_{20}+(a_{21}-1)a_{21})  \right] -\nonumber\\
&& \frac{  ( a_{11}(a_{11}-1)(3a_{2\bullet}-5)-2(a_{20}^2+a_{21}^2-a_{2\bullet})+3(a_{11}-1)( (a_{20}-a_{21})^2-a_{\bullet}  )   )^2 }{ 3 (a_{11}(a_{11}-1)+(a_{20}-a_{21})^2-a_{2\bullet}) }.\nonumber\\
\end{eqnarray}
\end{small}
For $\frac{d(\phi(\lambda)-\xi(\lambda))}{d\lambda}(\lambda^-)\geq 0$, it follows that $\frac{dD}{d\lambda}>0$, while for $\frac{d(\phi(\lambda)-\xi(\lambda))}{d\lambda}(\lambda^-)< 0$, $\frac{dD}{d\lambda}$ could also be negative hence it is necessary to investigate the deviance. We numerically verified that the deviance is also in this case minimized at $\lambda=0$.

\item [E.] Finally, let $(a_{20}-a_{21})^2= a_{2\bullet}$. Assume that

$$a_{11}(a_{11}-1)(3a_{2\bullet}-5)\geq 4 a_{20}a_{21}.  $$

Then by the same argument as before it follows that $\frac{dD}{d\lambda}>0$, which proves claim 1. under (C5). For

$$a_{11}(a_{11}-1)(3a_{2\bullet}-5)< 4 a_{20}a_{21},  $$

$\frac{d(\phi(\lambda)-\xi(\lambda))}{d\lambda}$ reaches its minimum at

$$\lambda^-= - \frac{ a_{11}(a_{11}-1)(3a_{2\bullet}-5)-4 a_{20}a_{21}  }{12a_{11}(a_{11}-1)} >0.$$

where

\begin{eqnarray}
\frac{d(\phi(\lambda)-\xi(\lambda))}{d\lambda}(\lambda^-)&=&-\frac{( a_{11}(a_{11}-1)(3a_{2\bullet}-5)-4 a_{20}a_{21} )^2}{3a_{11}(a_{11}-1)}\nonumber\\
&&+2(a_{11}-1)( a_{11}((a_{2\bullet}-1)(a_{2\bullet}-2)+2(a_{20}-1)(a_{21}-1))-6a_{20}a_{21} ).\nonumber\\
\end{eqnarray}

Similarly as in D., if $\frac{d(\phi(\lambda)-\xi(\lambda))}{d\lambda}(\lambda^-)\geq 0$ then $\frac{dD}{d\lambda}>0$, while for $\frac{d(\phi(\lambda)-\xi(\lambda))}{d\lambda}(\lambda^-)<0$, it is necessary to numerically investigate the deviance. We numerically verified that the deviance is also in this case minimized at $\lambda=0$.

 \end{itemize}

\end{proof}

\begin{proof}[Proof of Proposition \ref{th1} and Lemma \ref{lm1}.]
The derivative of $PE(\lambda)=PE$ with respect to $\lambda$ is

\begin{eqnarray}
\label{der.P}
\frac{dPE}{d\lambda}&=&\frac{8}{n}\sum_{k=1}^K \frac{ \lambda( (a_{k0}-a_{k1})^2-a_{k\bullet} )-a_{k0}a_{k1}  }{ ( a_{k\bullet}-1+4\lambda  )^3  }.
\end{eqnarray}
Note that $\frac{dPE}{d\lambda}$ is continuous on $0\leq\lambda< \infty$.

For $K=1$ it can easily be seen that, since $( a_{k\bullet}-1+4\lambda  )^3>0$, then for $(a_{k0}-a_{k1})^2-a_{k\bullet}> 0$ it follows that $\lambda( (a_{k0}-a_{k1})^2-a_{k\bullet} )-a_{k0}a_{k1}$ is a strictly increasing function of $\lambda$, hence $PE$ is minimized at

$$\frac{ a_{k0}a_{k1} }{ (a_{k0}-a_{k1})^2-a_{k\bullet} } ,$$

while for $(a_{k0}-a_{k1})^2-a_{k\bullet}\leq 0$, it follows that $\frac{dPE}{d\lambda}<0$, thus proving Lemma \ref{lm1}.

Now, for Proposition \ref{th1} assume w.l.o.g. that $a_{k0}=0$. Then assuming that $a_{k1}>1$, the numerator of (\ref{der.P}) for this $k$ is

$$\lambda a_{k1}(a_{k1}-1)\geq0, $$
 and if this holds for all $k=1,...,K$, then obviously $\frac{dPE}{d\lambda}\geq0$, with the equality applying if and only if $\lambda=0$, hence here $PE$ is minimized at $\lambda=0$.

 Now assume w.l.o.g that for the first $K_1$ categories, $a_{k0}=0$, $a_{k1}>1$, $k=1,...,K_1$ while for the remaining $K_2=K-K_1$ categories, $a_{k0},a_{k1}>1$, $k=K_1+1,...,K$. The derivative is then

 $$\frac{dPE}{d\lambda}=\frac{8}{n}\left[ \sum_{k=1}^{K_1} \frac{\lambda a_{k1}(a_{k1}-1) }{ (a_{k1}-1+4\lambda)^3 } +\sum_{k=K_1+1}^K \frac{\lambda( ( a_{k0}-a_{k1})^2-a_{k\bullet})-a_{k0}a_{k1}}{(a_{k\bullet}-1+4\lambda)^3}   \right]. $$

It is easily seen that for $\lambda=0$, $\frac{dPE}{d\lambda}<0$. Hence the proof of Proposition \ref{th1} is complete.

 \end{proof}

 \begin{proof}[Proof of Proposition \ref{prop.D.valid}]
  Assume that there is no \textit{separation} in the validation data, $a_{k0}^T,a_{k1}^T>0$ for $k=1,...,K$. The derivative of $D^T$ with respect to $\lambda$ is then

  $$ \frac{dD^T}{d\lambda}=-4\sum_{k=1}^K \frac{(a_{k\bullet}-2a_{k1})( a_{k1}^T(a_{k0}+2\lambda)-a_{k0}^T(a_{k1}+2\lambda)  )}{ (a_{k0}+2\lambda)(a_{k1}+2\lambda)(a_{k\bullet}+4\lambda) } . $$


  Now assume w.l.o.g that $a_{k1}=0$ for $k=1,...,K$. Note that the derivative is in this case continuous on $\lambda\in(0,\infty)$, while it is not defined for $\lambda=0$. The derivative, omitting the positive constants is

  $$-a_{k1}^T a_{k0}- 2\lambda(a_{k1}^T-a_{k0}^T). $$
   If $a_{k1}^T\geq a_{k0}^T$, the derivative for this category is negative for $\lambda\in(0,\infty)$, while if $a_{k1}^T< a_{k0}^T$ up to some $0<\lambda<\lambda^-$, $\lambda^->0$ the derivative for this category is negative, for $\lambda=\lambda^-$ the derivative for this category is equal to zero, while for $\lambda>\lambda^-$, the derivative is positive, which implies that $\lambda_{D^T}>0$, which proves the proposition.

  \end{proof}

\begin{proof}[Proof of Proposition \ref{pr4}.]
The derivative of $PE^T(\lambda)$ with respect to $\lambda$ is

$$\frac{dPE^T}{d\lambda}=\frac{4}{n_T}\sum_{k=1}^K \frac{(a_{k0}-a_{k1})}{(a_{k\bullet}+4\lambda)^3}\left(a_{k0}^Ta_{k1}-a_{k1}^Ta_{k0}+2\lambda(a_{k0}^T-a_{k1}^T)   \right). $$

For $\lambda\geq0$, $\frac{dPE^T}{d\lambda}$ is continuous. Assume w.l.o.g. that $a_{k0}=0$, $a_{k1}>0$, $k=1,...,K$. The derivative is then

$$ \frac{dPE^T}{d\lambda}= -\frac{4}{n_T}\sum_{k=1}^{K_1} \frac{a_{k1}}{(a_{k1}+4\lambda)^3}\left(a_{k0}^Ta_{k1}+2\lambda(a_{k0}^T-a_{k1}^T)   \right), $$

For $\lambda=0$ we have $ \frac{dPE^T}{d\lambda}<0$, which proves the proposition.
 \end{proof}

 \clearpage
 \newpage

 \begin{proof}[Proof of Proposition \ref{pr5}.]
 The derivative of the deviance with respect to $\lambda$ is


  $$\frac{dD}{d\lambda}=-8\sum_{k=1}^2 \frac{a_{j\bullet}}{(a_{k\bullet}-1)a_{j\bullet}+4\lambda(n-1)}\left(
 \frac{a_{k1}(  a_{\bullet 1}a_{k\bullet}-na_{k1}+a_{j0}   )}{a_{j\bullet}(a_{k1}-1)+4\lambda(a_{\bullet 1}-1)}
-\frac{a_{k0}(  a_{\bullet 1}a_{k\bullet}-na_{k1}-a_{j1} )  }{a_{j\bullet}(a_{k0}-1)+4\lambda(a_{\bullet 0}-1)} \right)$$

Observe that $\frac{dD}{d\lambda}$ is continuous on $\lambda\in[0,\infty)$. For $\lambda=0$ the derivative becomes

 $$\frac{dD}{d\lambda}|_{\lambda=0}=-8\sum_{k=1}^2\frac{ a_{k1}(a_{k0}-1)(a_{\bullet 1}a_{k\bullet}-na_{k1}+a_{j0})-a_{k0}(a_{k1}-1) (a_{\bullet 1}a_{k\bullet}-na_{k1}-a_{j1}) }{a_{j\bullet}(a_{k\bullet}-1)(a_{k1}-1)(a_{k0}-1)}. $$

 It can be verified numerically that


 $$a_{k1}(a_{k0}-1)(a_{\bullet 1}a_{k\bullet}-na_{k1}+a_{j0})> a_{k0}(a_{k1}-1) (a_{\bullet 1}a_{k\bullet}-na_{k1}-a_{j1}), $$

 holds for each $k$, hence $\frac{dD}{d\lambda}|_{\lambda=0}< 0$, which implies that $\lambda_D>0$.

In case of complete separation, assume w.l.o.g that $a_{11}=a_{20}=0$, $a_{10},a_{21}>1$. The derivative then becomes

$$\frac{dD}{d\lambda}=8a_{10}a_{21}\left( \frac{a_{21}(a_{10}-1)}{A_1}+\frac{(a_{21}-1)(n-a_{21})}{A_2}  \right), $$

 where

 $$A_1=( (a_{10}-1)a_{21}+4\lambda(n-1)   )( a_{21}(a_{10}-1)+4\lambda(a_{10}-1)   )>0 ,$$
  $$A_2=( (a_{21}-1)a_{10}+4\lambda(n-1)   )( a_{10}(a_{21}-1)+4\lambda(a_{21}-1)   )>0 .$$

  Obviously, $\frac{dD}{d\lambda}>0$, hence $\lambda_D=0$.

  In case of quasi complete separation, assume w.l.o.g. $a_{10}=0$, $a_{11},a_{20},a_{21}>1$. The derivative is then

 $$\frac{dD}{d\lambda}=-8a_{11}\frac{  d_1 a_0(\lambda)a_1(\lambda)a_2(\lambda)+c_1(\lambda)c_2(\lambda)(b_1+4\lambda b_2)  }{ c_1(\lambda)c_2(\lambda)a_0(\lambda)a_1(\lambda)a_2(\lambda)  } ,$$

  for some nonnegative functions

  $$ a_0(\lambda) = (a_{2\bullet}-1)a_{11}+4\lambda(n-1)>0 $$
  $$ a_1(\lambda) = (a_{21}-1)a_{11}+4\lambda(a_{\bullet 1}-1)>0 $$
  $$ a_2(\lambda) = (a_{20}-1)a_{11}+4\lambda(a_{20}-1)>0 $$
  $$ c_1(\lambda)= (a_{11}-1)a_{2\bullet} +4\lambda(n-1)>0 $$
   $$ c_2(\lambda)= (a_{11}-1)a_{2\bullet} +4\lambda(a_{\bullet 1}-1)>0 $$

  and constants (with respect to $\lambda$)

   \begin{small}
   $$b_1=a_{11}( a_{2\bullet}a_{\bullet 1}(a_{20}-a_{21})+na_{21}(a_{21}-a_{20})+a_{20}a_{11}(a_{21}-1)    )  $$
  \end{small}

    $$b_2=  a_{11}a_{20}(na_{21}-a_{\bullet 1}(a_{2\bullet}-1)-1) +a_{\bullet 1}a_{2\bullet}(a_{20}-a_{21})+na_{21}(a_{21}-a_{20}) $$
     $$d_1= -a_{2\bullet}a_{20}(  a_{11}-1)<0  $$

    The numerator (omitting the constants $-8a_{11}$) at $\lambda=0$ is

    $$d_1a_0a_1a_2+b_1c_1c_2, $$

    $$ a_0  = (a_{2\bullet}-1)a_{11} >0 $$
  $$ a_1  = (a_{21}-1)a_{11} >0 $$
  $$ a_2  = (a_{20}-1)a_{11} >0 $$
  $$ c_1 = (a_{11}-1)a_{2\bullet}  >0 $$
   $$ c_2 = (a_{11}-1)a_{2\bullet}  >0 $$

    which can be expressed, omitting the positive constants, as

    $$-a_{20}a_{11}^2(a_{20}-1)(a_{21}-1)(a_{2\bullet}-1)+(a_{11}-1)a_{2\bullet}( a_{2\bullet}a_{\bullet 1}(a_{20}-a_{21})+na_{21}(a_{21}-a_{20})+a_{20}a_{11}(a_{21}-1) ), $$

    or alternatively,

    \begin{eqnarray}
    \label{lk1}
    a_{11}(a_{21}-1)a_{20}( a_{2\bullet}(a_{11}-1)-a_{11}(a_{20}-1)(a_{2\bullet}-1) )+ \nonumber\\
    +  (a_{11}-1)a_{2\bullet}( a_{2\bullet}a_{\bullet 1}(a_{20}-a_{21})+na_{21}(a_{21}-a_{20})  ).
    \end{eqnarray}

    It can be verified numerically that when $a_{21}>a_{20}$ the above expression is always smaller then zero and in this case we verified numerically that $\frac{dD}{d\lambda}\geq 0$ implying that $\lambda_D=0$, while for $a_{21}<a_{20}$ expression (\ref{lk1}) can be larger then zero for a sufficiently large $a_{11}$, but only when $a_{21}=2$. For $a_{21}=2$, $a_{20}>2$ (\ref{lk1}), omitting the positive constants, becomes

    $$a_{11}(a_{20}-1)-a_{20}(1+a_{20})+2 .$$

    When the above expression is larger then zero, thence $\lambda_D>0$.

    When $a_{20}=a_{21}$ holds, the expression (\ref{lk1}) simplifies to (omitting the positive constants)

    $$5a_{11}a_{20}-2a_{20}( a_{11}a_{20}+1 )-a_{11}, $$

    which is (strictly) larger then zero if and only if

    $$a_{20}=a_{21}=2 \mbox{ and } a_{11}>4,$$

    hence in this case $\lambda_{D}>0$. Note that the above condition is the same as,

    $$a_{11}(a_{20}-1)-a_{20}(1+a_{20})+2>0,$$

    when $a_{20}=2$, hence $a_{20}=a_{21}=2$ is a special case.

     In the other cases $\frac{dD}{d\lambda}$ for $\lambda=0$ is larger (or equal) than zero and we verified numerically that $\frac{dD}{d\lambda}\geq 0$ implying that $\lambda_D=0$. Hence the proof is complete.




    \end{proof}

   \begin{proof}[Proof of Proposition \ref{pr6}.]

    The derivative of the mean prediction error with respect to $\lambda$ is

    $$\frac{dPE}{d\lambda}=\frac{8}{n}\sum_{k=1}^2\frac{a_{j\bullet}}{( (a_{k\bullet}-1)a_{j\bullet}+4\lambda(n-1)  )^3}\left( \right.  $$
    $$ a_{k0}(a_{k1}a_{j\bullet}+4\lambda a_{\bullet 1})(a_{\bullet 1}(a_{k\bullet}-1)-(n-1)a_{k1})$$
    $$\left.-a_{k1}(a_{k0}a_{j\bullet}+4\lambda a_{\bullet 0})((a_{\bullet 1}-1)(a_{k\bullet}-1)-(n-1)(a_{k1}-1))  \right).$$

    Obviously $\frac{dPE}{d\lambda}$ is continuous on $\lambda\in[0,\infty)$. Assuming that there is no \textit{separation}, for $\lambda=0$ the derivative for each $k$, omitting the positive constants, becomes

    \begin{equation}
    \label{eq.pe.e}
    (a_{k\bullet}-1)-(n-1)<0,
    \end{equation}

    hence it follows that $\lambda_{PE}>0$.

    When there is quasi complete separation, assume w.l.o.g. that $a_{10}=0$. For $\lambda=0$, the derivative for $k=1$ is equal to zero, and since (\ref{eq.pe.e}) still applies for $k=2$, it follows that $\lambda_{PE}>0$.

    When there is complete separation, assume w.l.o.g. that $a_{10},a_{21}=0$. The derivative then becomes,

    $$\frac{dPE}{d\lambda}=\frac{32}{n}\frac{(a_{11}^3a_{20}(a_{20}-1)\lambda+a_{11}a_{20}^3(a_{11}-1)\lambda)}{( (a_{k\bullet}-1)a_{j\bullet}+4\lambda(n-1)  )^3}. $$

    Obviously, $\frac{dPE}{d\lambda}\geq 0$ with the equality applying if and only if $\lambda=0$, hence in this case it follows that $\lambda_{PE}=0$. This completes the proof.
  \end{proof}

  \subsection{Validation deviance and validation mean prediction error when applying $P_1(\lambda)$}

   Assume that there is no \textit{separation} in the validation data, $a_{k0}^T,a_{k1}^T>0$ for $k=1,...,K$. It can be shown that if there is no \textit{separation} in the original data then the derivative of $D^T$ with respect to $\lambda$ is continuous on $\lambda\in[0,\infty)$ and is given by,

   $$\frac{dD^T}{d\lambda}=-8\sum_{k=1}^2\frac{a_{j\bullet}(a_{\bullet 1}a_{k\bullet}-na_{k1})}{a_{k\bullet}a_{j\bullet}+4\lambda n}\left(\right. $$
   $$ \left. -\frac{a_{k0}^T}{a_{j\bullet}a_{k0}+4\lambda a_{\bullet 0}}+\frac{a_{k1}^T}{a_{j\bullet}a_{k1}+4\lambda a_{\bullet 1}} \right)  $$

  or, omitting the positive constants,

  $$(a_{\bullet 1}a_{k\bullet}-na_{k1})(a_{k0}^T(a_{j\bullet}a_{k1}+4\lambda a_{\bullet 1})-a_{k1}^T(a_{j\bullet}a_{k0}+4\lambda a_{\bullet 0})) $$
Assume that there is no \textit{separation} in the validation data, $a_{k0}^T,a_{k1}^T>0$ for $k=1,...,K$. It can be shown that if there is no \textit{separation} in the original data then the derivative of $PE^T$ with respect to $\lambda$ is continuous on $\lambda\in[0,\infty)$ and is given by,

$$\frac{dPE^T}{d\lambda}=\frac{8}{n_T}\sum_{k=1}^2\frac{a_{j\bullet}(a_{\bullet 1}a_{k\bullet}-na_{k1})}{(a_{k\bullet}a_{j\bullet}+4\lambda n)^3}\left(\right. $$
$$\left. a_{k0}^T(a_{j\bullet}a_{k1}+4\lambda a_{\bullet 1})-a_{k1}^T(a_{j\bullet}a_{k0}+4\lambda a_{\bullet 0}) \right) , $$

or, omitting the positive constants,

$$(a_{\bullet 1}a_{k\bullet}-na_{k1})(a_{k0}^T(a_{j\bullet}a_{k1}+4\lambda a_{\bullet 1})-a_{k1}^T(a_{j\bullet}a_{k0}+4\lambda a_{\bullet 0})). $$

Observe that $\frac{dD^T_k}{d\lambda}$ and $\frac{dPE^T_k}{d\lambda}$ are identical up to some positive constant. This however, in general, does not hold for $\frac{dD^T}{d\lambda}$ and $\frac{dPE^T}{d\lambda}$.


\pagebreak
\begin{flushleft}
\textbf{\LARGE Supplementary material for: Tuning in ridge logistic regression to solve separation}

\vspace{5mm}

\textbf{\LARGE Additional file 2: Detailed simulation results for all considered scenarios and estimators.}

\end{flushleft}

\setcounter{equation}{0}
\setcounter{figure}{0}
\setcounter{table}{0}
\setcounter{page}{1}
\makeatletter
\renewcommand{\theequation}{S\arabic{equation}}
\renewcommand{\thefigure}{S\arabic{figure}}
\renewcommand{\thetable}{S\arabic{table}}
\renewcommand{\bibnumfmt}[1]{[S#1]}
\renewcommand{\citenumfont}[1]{S#1}

\begin{table}[ht]

\caption{Simulation results showing separation prevalence (SP) and root mean squared errors of $\beta_1$ across simulation scenarios that differed by the number of covariates included into data-generating process $K\in\{7, 10, 15\}$, the sample size $N\in\{80, 200, 500\}$, the value of $\beta_1\in\{0.69, 1.39, 2.08\}$ and noise excluded or included in the fitting process. FC, Firth's correction; RR, ridge regression; $B$, $B$-tuning; $T$, $T$-tuning; $D$, tuning by leave-one-out-cross-validated deviance; $A$, tuning by Akaike's information criterion; $E$, tuning by leave-one-out-cross-validated mean squared prediction error; $C$, tuning by leave-one-out-cross-validated mean classification error.}

\setlength{\tabcolsep}{4pt}
\renewcommand{\arraystretch}{0.88}
\centering
\begin{small}
\begin{tabular}{rrr|r|rrrrrrr|rrrrrrr}
  \hline
$K$ & $N$ & $\beta_1$ & SP  &  \multicolumn{7}{c}{No noise} & \multicolumn{7}{|c}{Noise} \\ 
 &  &  & (\%) & FC & \multicolumn{6}{c|}{RR} & FC & \multicolumn{6}{c}{RR} \\
 &  &  &  &  & $B$ & $T$ & $D$ & $A$ & $E$ & $C$ &  & $B$ & $T$ & $D$ & $A$ & $E$ & $C$\\ 
  \hline
7 & 80 & 0.69 & 48 & 0.92 & 0.69 & 0.53 & 6.35 & 0.76 & 0.90 & 0.55 & 1.15 & 1.04 & 0.57 & 0.64 & 5.20 & 0.81 & 0.66 \\ 
  7 & 80 & 1.39 & 65 & 1.06 & 0.97 & 0.84 & 7.16 & 1.02 & 1.12 & 1.15 & 1.19 & 1.03 & 1.08 & 1.10 & 3.46 & 1.16 & 1.20 \\ 
  7 & 80 & 2.08 & 81 & 1.30 & 1.34 & 0.98 & 7.88 & 1.35 & 1.78 & 1.79 & 1.37 & 1.23 & 1.37 & 1.64 & 3.90 & 1.85 & 1.78 \\ 
  7 & 200 & 0.69 & 13 & 0.76 & 0.72 & 0.45 & 4.89 & 1.08 & 0.57 & 0.46 & 0.80 & 0.73 & 0.49 & 0.49 & 0.47 & 0.48 & 0.47 \\ 
  7 & 200 & 1.39 & 38 & 0.85 & 0.94 & 0.66 & 7.97 & 1.62 & 0.80 & 0.94 & 0.89 & 0.91 & 0.72 & 1.24 & 0.89 & 0.95 & 0.96 \\ 
  7 & 200 & 2.08 & 58 & 0.82 & 0.97 & 0.74 & 9.48 & 1.67 & 1.05 & 1.53 & 0.86 & 0.95 & 0.75 & 1.72 & 1.43 & 1.48 & 1.53 \\ 
  7 & 500 & 0.69 & 0 & 0.51 & 0.48 & 0.35 & 1.01 & 0.50 & 0.43 & 0.33 & 0.52 & 0.48 & 0.36 & 0.99 & 0.37 & 0.35 & 0.35 \\ 
  7 & 500 & 1.39 & 7 & 0.69 & 0.82 & 0.53 & 3.81 & 1.26 & 0.67 & 0.69 & 0.70 & 0.82 & 0.51 & 3.82 & 0.91 & 0.70 & 0.72 \\ 
  7 & 500 & 2.08 & 27 & 0.76 & 1.00 & 0.70 & 6.93 & 1.92 & 0.78 & 1.19 & 0.78 & 1.03 & 0.68 & 6.95 & 1.47 & 1.17 & 1.22 \\ 
  10 & 80 & 0.69 & 50 & 1.00 & 0.85 & 0.54 & 0.77 & 0.72 & 0.83 & 0.68 & 1.18 & 1.11 & 0.56 & 0.59 & 5.96 & 0.80 & 0.65 \\ 
  10 & 80 & 1.39 & 69 & 1.03 & 0.88 & 0.87 & 1.07 & 1.01 & 1.29 & 1.16 & 1.19 & 1.05 & 1.06 & 1.06 & 8.02 & 2.96 & 2.07 \\ 
  10 & 80 & 2.08 & 82 & 1.26 & 1.18 & 1.02 & 1.92 & 1.53 & 2.08 & 1.76 & 1.42 & 1.21 & 1.53 & 1.66 & 8.13 & 1.75 & 1.75 \\ 
  10 & 200 & 0.69 & 16 & 0.83 & 0.81 & 0.45 & 3.45 & 0.51 & 0.54 & 0.47 & 0.89 & 0.86 & 0.46 & 0.49 & 0.47 & 0.50 & 0.56 \\ 
  10 & 200 & 1.39 & 36 & 0.82 & 0.87 & 0.63 & 5.23 & 0.81 & 1.02 & 0.94 & 0.86 & 0.87 & 0.74 & 0.88 & 0.88 & 0.91 & 0.95 \\ 
  10 & 200 & 2.08 & 60 & 0.84 & 0.91 & 0.74 & 6.71 & 1.25 & 1.35 & 1.52 & 0.88 & 0.94 & 0.77 & 1.40 & 1.41 & 1.45 & 1.51 \\ 
  10 & 500 & 0.69 & 1 & 0.56 & 0.54 & 0.38 & 1.28 & 0.50 & 0.41 & 0.36 & 0.57 & 0.53 & 0.37 & 0.76 & 0.37 & 0.38 & 0.38 \\ 
  10 & 500 & 1.39 & 7 & 0.66 & 0.77 & 0.49 & 3.80 & 1.14 & 0.61 & 0.68 & 0.68 & 0.79 & 0.48 & 3.12 & 0.69 & 0.67 & 0.70 \\ 
  10 & 500 & 2.08 & 24 & 0.75 & 0.97 & 0.68 & 6.53 & 1.74 & 1.00 & 1.20 & 0.76 & 0.99 & 0.66 & 5.66 & 1.15 & 1.14 & 1.21 \\ 
  15 & 80 & 0.69 & 64 & 1.39 & 1.29 & 0.60 & 0.71 & 11.10 & 1.57 & 0.98 & 1.62 & 1.30 & 0.57 & 0.70 & 22.08 & 1.32 & 0.78 \\ 
  15 & 80 & 1.39 & 72 & 1.63 & 1.34 & 0.94 & 1.12 & 12.40 & 1.81 & 1.70 & 1.57 & 1.25 & 1.03 & 1.09 & 25.17 & 2.06 & 2.56 \\ 
  15 & 80 & 2.08 & 80 & 1.62 & 1.40 & 1.30 & 1.58 & 10.02 & 1.88 & 1.74 & 1.77 & 1.39 & 1.53 & 1.63 & 22.69 & 2.55 & 2.29 \\ 
  15 & 200 & 0.69 & 6 & 0.88 & 0.91 & 0.52 & 0.55 & 0.56 & 0.61 & 0.61 & 0.97 & 1.07 & 0.48 & 0.52 & 0.54 & 0.59 & 0.56 \\ 
  15 & 200 & 1.39 & 20 & 1.06 & 1.18 & 0.71 & 1.06 & 0.93 & 1.42 & 1.05 & 1.14 & 1.32 & 0.77 & 0.89 & 0.90 & 1.11 & 0.99 \\ 
  15 & 200 & 2.08 & 37 & 1.14 & 1.45 & 0.81 & 1.23 & 2.11 & 1.64 & 1.39 & 1.20 & 1.47 & 0.91 & 1.32 & 2.37 & 1.57 & 1.41 \\ 
  15 & 500 & 0.69 & 0 & 0.54 & 0.53 & 0.43 & 0.63 & 0.43 & 0.43 & 0.43 & 0.58 & 0.54 & 0.39 & 0.41 & 0.40 & 0.41 & 0.43 \\ 
  15 & 500 & 1.39 & 1 & 0.62 & 0.67 & 0.50 & 0.76 & 0.64 & 0.62 & 0.64 & 0.66 & 0.68 & 0.50 & 0.63 & 0.66 & 0.63 & 0.65 \\ 
  15 & 500 & 2.08 & 10 & 0.82 & 1.02 & 0.70 & 3.45 & 1.08 & 1.32 & 1.01 & 0.85 & 1.03 & 0.64 & 0.96 & 1.01 & 1.00 & 1.01 \\ 
   \hline
\end{tabular}
\end{small}
\end{table}

\begin{sidewaystable}[ht]

\caption{Simulation results showing two-sided empirical coverage rates in percentages (and median widths) of 95\% confidence intervals for $\beta_1$ across simulation scenarios that differed by the number of covariates included into data-generating process $K\in\{7, 10, 15\}$, the sample size $N\in\{80, 200, 500\}$, the value of $\beta_1\in\{0.69, 1.39, 2.08\}$ and noise excluded or included in the fitting process. FC, Firth's correction; RR, ridge regression; $B$, $B$-tuning; $T$, $T$-tuning; $D$, tuning by leave-one-out-cross-validated deviance; $A$, tuning by Akaike's information criterion; $E$, tuning by leave-one-out-cross-validated mean squared prediction error; $C$, tuning by leave-one-out-cross-validated mean classification error.}

\setlength{\tabcolsep}{4pt}
\renewcommand{\arraystretch}{0.88}
\centering
\begin{small}
\begin{tabular}{lll|lllllll|lllllll}
  \hline
$K$ & $N$ & $\beta_1$ &  \multicolumn{7}{c}{No noise} & \multicolumn{7}{|c}{Noise} \\ 
 &  &   & FC & \multicolumn{6}{c|}{RR} & FC & \multicolumn{6}{c}{RR} \\
 &  &   &  & $B$ & $T$ & $D$ & $A$ & $E$ & $C$ &  & $B$ & $T$ & $D$ & $A$ & $E$ & $C$\\  
  \hline
7 & 80 & 0.69 & 97 (4.6) & 98 (4.1) & 99 (4.3) & 97 (2.5) & 98 (2.8) & 98 (2.2) & 98 (1.9) & 97 (5.4) & 98 (5.6) & 99 (3.3) & 98 (2.1) & 98 (2.5) & 98 (2) & 98 (2) \\ 
  7 & 80 & 1.39 & 96 (7.1) & 96 (4.7) & 98 (6.7) & 65 (4.2) & 75 (3.2) & 61 (2.5) & 21 (1.9) & 97 (7.3) & 98 (8) & 98 (4.7) & 50 (2.1) & 73 (2.6) & 41 (2) & 24 (2) \\ 
  7 & 80 & 2.08 & 96 (7.2) & 92 (5) & 98 (9.2) & 70 (5.9) & 72 (3.6) & 50 (2.8) & 0 (1.9) & 97 (7.4) & 97 (9.8) & 98 (7.5) & 30 (2.2) & 43 (2.7) & 18 (2.1) & 5 (2) \\ 
  7 & 200 & 0.69 & 97 (3) & 96 (2.6) & 97 (2.9) & 93 (1.9) & 94 (1.9) & 96 (1.9) & 96 (1.7) & 95 (3.2) & 97 (2.9) & 98 (2.5) & 96 (1.8) & 96 (1.8) & 96 (1.7) & 96 (1.7) \\ 
  7 & 200 & 1.39 & 96 (3.7) & 93 (3.4) & 97 (4) & 76 (2.4) & 76 (2.4) & 76 (2.6) & 49 (1.7) & 96 (3.9) & 95 (3.5) & 97 (3.7) & 58 (1.9) & 60 (1.9) & 56 (1.8) & 48 (1.7) \\ 
  7 & 200 & 2.08 & 97 (6.9) & 93 (5.3) & 98 (8.2) & 72 (>50) & 70 (8.1) & 72 (3) & 0 (1.8) & 96 (6.9) & 94 (5.4) & 99 (6.9) & 30 (1.9) & 19 (1.9) & 12 (1.8) & 2 (1.8) \\ 
  7 & 500 & 0.69 & 96 (2) & 95 (1.7) & 97 (1.9) & 96 (1.5) & 95 (1.5) & 95 (1.5) & 96 (1.4) & 96 (2) & 96 (1.8) & 98 (1.8) & 96 (1.4) & 96 (1.4) & 96 (1.4) & 96 (1.4) \\ 
  7 & 500 & 1.39 & 96 (2.4) & 91 (2.3) & 95 (2.5) & 77 (1.9) & 75 (1.9) & 86 (2) & 68 (1.5) & 96 (2.5) & 90 (2.3) & 97 (2.5) & 63 (1.5) & 65 (1.5) & 69 (1.5) & 67 (1.5) \\ 
  7 & 500 & 2.08 & 96 (3.5) & 94 (3.5) & 98 (3.9) & 82 (2.6) & 81 (2.8) & 84 (2.7) & 0 (1.5) & 96 (3.6) & 92 (3.5) & 98 (3.9) & 50 (1.9) & 32 (1.6) & 14 (1.6) & 1 (1.5) \\ 
  10 & 80 & 0.69 & 97 (4.7) & 98 (4.8) & 99 (4.2) & 98 (2.3) & 98 (2.7) & 98 (2.1) & 98 (1.9) & 97 (6.3) & 98 (7.1) & 99 (3.6) & 98 (2.1) & 98 (2.6) & 98 (2) & 98 (2) \\ 
  10 & 80 & 1.39 & 96 (7.2) & 97 (6) & 98 (6.1) & 62 (2.5) & 81 (2.9) & 57 (2.2) & 28 (1.9) & 97 (7.4) & 98 (13.1) & 98 (4.6) & 52 (2.1) & 76 (2.7) & 45 (2.1) & 30 (2) \\ 
  10 & 80 & 2.08 & 96 (7.2) & 97 (6.3) & 98 (8.7) & 45 (2.6) & 54 (3) & 32 (2.3) & 6 (2) & 96 (7.5) & 98 (15.1) & 98 (6.8) & 27 (2.2) & 42 (2.7) & 19 (2.1) & 10 (2) \\ 
  10 & 200 & 0.69 & 97 (3.1) & 97 (2.8) & 98 (3) & 91 (1.9) & 97 (1.9) & 97 (1.9) & 96 (1.7) & 96 (3.3) & 96 (3.2) & 98 (2.7) & 97 (1.8) & 97 (1.8) & 97 (1.8) & 97 (1.7) \\ 
  10 & 200 & 1.39 & 95 (3.8) & 95 (3.5) & 97 (3.9) & 73 (2.2) & 76 (2.1) & 72 (2) & 52 (1.7) & 96 (3.9) & 96 (3.8) & 96 (3.6) & 63 (1.9) & 66 (1.9) & 62 (1.8) & 54 (1.8) \\ 
  10 & 200 & 2.08 & 96 (6.9) & 93 (5.5) & 98 (7.9) & 60 (2.9) & 52 (2.3) & 40 (2.1) & 3 (1.8) & 96 (7) & 95 (5.7) & 98 (6.5) & 28 (2) & 23 (2) & 17 (1.9) & 9 (1.8) \\ 
  10 & 500 & 0.69 & 94 (2) & 95 (1.8) & 95 (1.9) & 95 (1.5) & 95 (1.4) & 95 (1.5) & 96 (1.4) & 94 (2.1) & 94 (1.9) & 96 (1.9) & 95 (1.4) & 95 (1.4) & 95 (1.4) & 95 (1.4) \\ 
  10 & 500 & 1.39 & 97 (2.4) & 94 (2.3) & 98 (2.5) & 77 (1.8) & 74 (1.6) & 82 (1.7) & 70 (1.5) & 97 (2.5) & 93 (2.4) & 98 (2.5) & 68 (1.5) & 69 (1.5) & 73 (1.5) & 70 (1.5) \\ 
  10 & 500 & 2.08 & 96 (3.5) & 94 (3.5) & 98 (3.9) & 72 (2.3) & 60 (2) & 57 (1.9) & 1 (1.5) & 96 (3.6) & 92 (3.5) & 98 (3.8) & 40 (1.8) & 24 (1.6) & 21 (1.6) & 7 (1.5) \\ 
  15 & 80 & 0.69 & 98 (6.7) & 98 (7.7) & 99 (4.9) & 98 (2.7) & 95 (3.2) & 98 (2.7) & 97 (2.2) & 97 (8.5) & 99 (11) & 99 (4.6) & 98 (2.4) & 90 (3.6) & 99 (2.5) & 98 (2.1) \\ 
  15 & 80 & 1.39 & 97 (7.6) & 98 (12) & 98 (5.9) & 67 (2.7) & 82 (3.2) & 64 (2.7) & 48 (2.2) & 98 (8.9) & 98 (13.6) & 99 (5.1) & 63 (2.5) & 78 (3.6) & 61 (2.5) & 47 (2.1) \\ 
  15 & 80 & 2.08 & 97 (7.9) & 98 (15.1) & 99 (7.5) & 47 (2.9) & 60 (3.4) & 46 (2.9) & 33 (2.2) & 97 (9.1) & 98 (16.8) & 98 (6.1) & 37 (2.6) & 56 (3.9) & 39 (2.6) & 30 (2.2) \\ 
  15 & 200 & 0.69 & 95 (3.3) & 96 (3.3) & 96 (3.3) & 96 (2.1) & 96 (2.1) & 96 (2.2) & 96 (2) & 95 (3.7) & 96 (3.8) & 97 (3.1) & 96 (2) & 96 (2.1) & 96 (2.1) & 96 (2) \\ 
  15 & 200 & 1.39 & 95 (4) & 95 (3.9) & 96 (4) & 75 (2.3) & 78 (2.3) & 76 (2.3) & 68 (2.2) & 95 (4.3) & 96 (4.4) & 97 (3.7) & 69 (2.1) & 75 (2.2) & 70 (2.2) & 65 (2) \\ 
  15 & 200 & 2.08 & 96 (4.5) & 96 (4.7) & 98 (4.7) & 58 (2.5) & 56 (2.5) & 56 (2.5) & 44 (2.3) & 96 (5) & 97 (5.4) & 98 (4.5) & 43 (2.3) & 46 (2.3) & 44 (2.3) & 40 (2.2) \\ 
  15 & 500 & 0.69 & 95 (2) & 95 (1.9) & 95 (2) & 95 (1.6) & 94 (1.5) & 95 (1.6) & 95 (1.6) & 95 (2.2) & 96 (2.1) & 97 (2.1) & 96 (1.6) & 96 (1.5) & 96 (1.6) & 96 (1.6) \\ 
  15 & 500 & 1.39 & 96 (2.4) & 95 (2.3) & 96 (2.4) & 86 (1.8) & 81 (1.7) & 86 (1.8) & 78 (1.7) & 96 (2.6) & 94 (2.4) & 97 (2.4) & 79 (1.7) & 76 (1.7) & 81 (1.7) & 81 (1.7) \\ 
  15 & 500 & 2.08 & 95 (2.9) & 91 (2.9) & 96 (3.1) & 68 (2.1) & 61 (2) & 68 (2.1) & 53 (1.8) & 95 (3.1) & 92 (3) & 96 (3) & 57 (1.9) & 49 (1.8) & 55 (1.9) & 49 (1.8) \\ 
   \hline
\end{tabular}
\end{small}
\end{sidewaystable}

\pagebreak

\begin{figure}
  \caption{Simulation results showing one-sided empirical coverage rates of 97.5\% confidence intervals for $\beta_1$ across simulation scenarios that differed by the number of covariates included into data-generating process $K\in\{7, 10, 15\}$, the sample size $N\in\{80, 200, 500\}$, the value of $\beta_1\in\{0.69, 1.39, 2.08\}$ and noise excluded or included in the fitting process. Bottom shows one-sided lower and top one-sided upper 97.5\% confidence limit. FC, Firth's correction; RR, ridge regression.}
  \includegraphics[width=\linewidth]{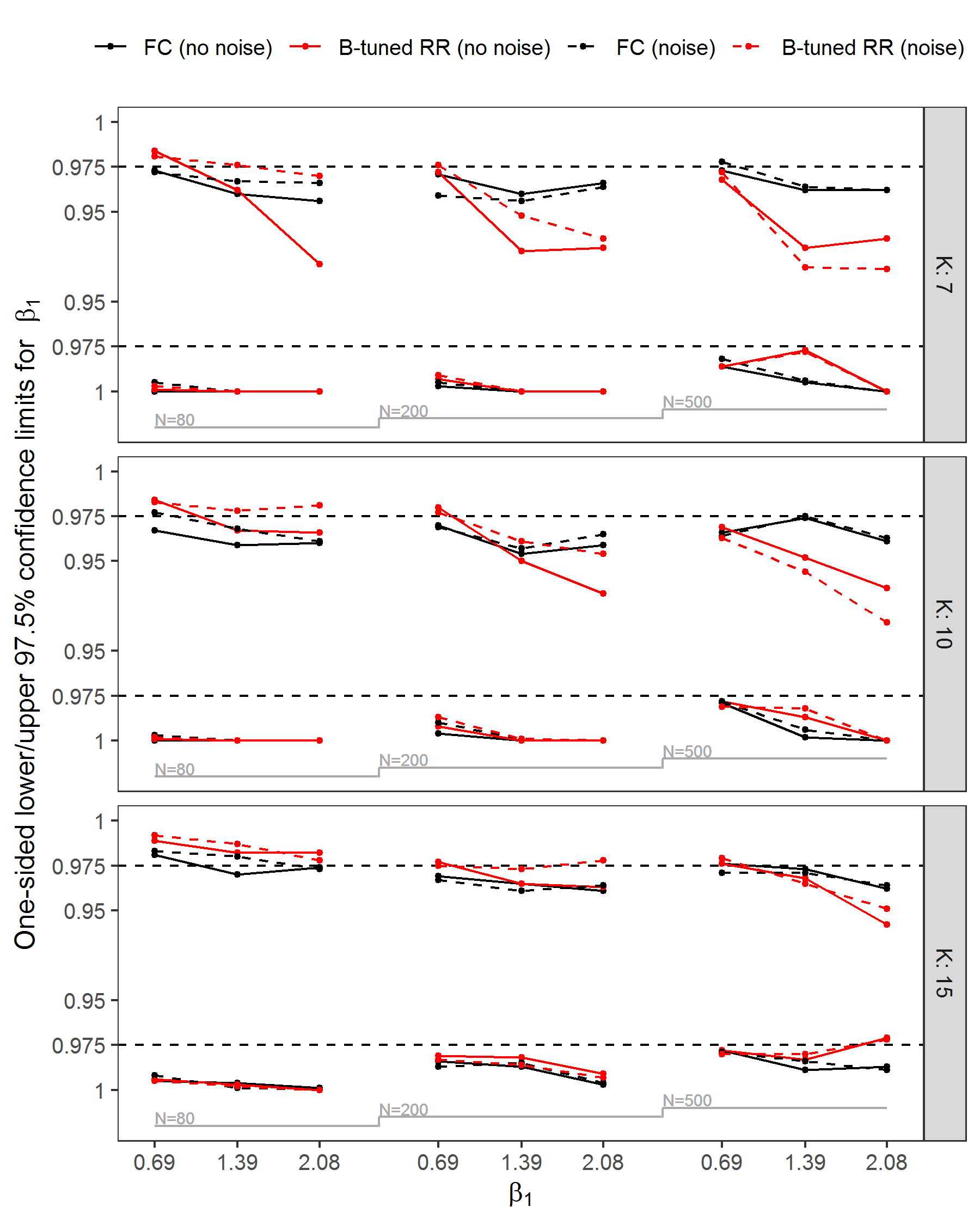}
\end{figure}

\pagebreak

\begin{sidewaysfigure}[ht]
  \caption{Simulation results showing bias (left) and variance (right) of $\hat{\beta_1}$ across simulation scenarios that differed by the number of covariates included into data-generating process $K\in\{7, 10, 15\}$, the sample size $N\in\{80, 200, 500\}$, the value of $\beta_1\in\{0.69, 1.39, 2.08\}$ and noise excluded or included in the fitting process. FC, Firth's correction; RR, ridge regression.}
  \includegraphics[width=\linewidth]{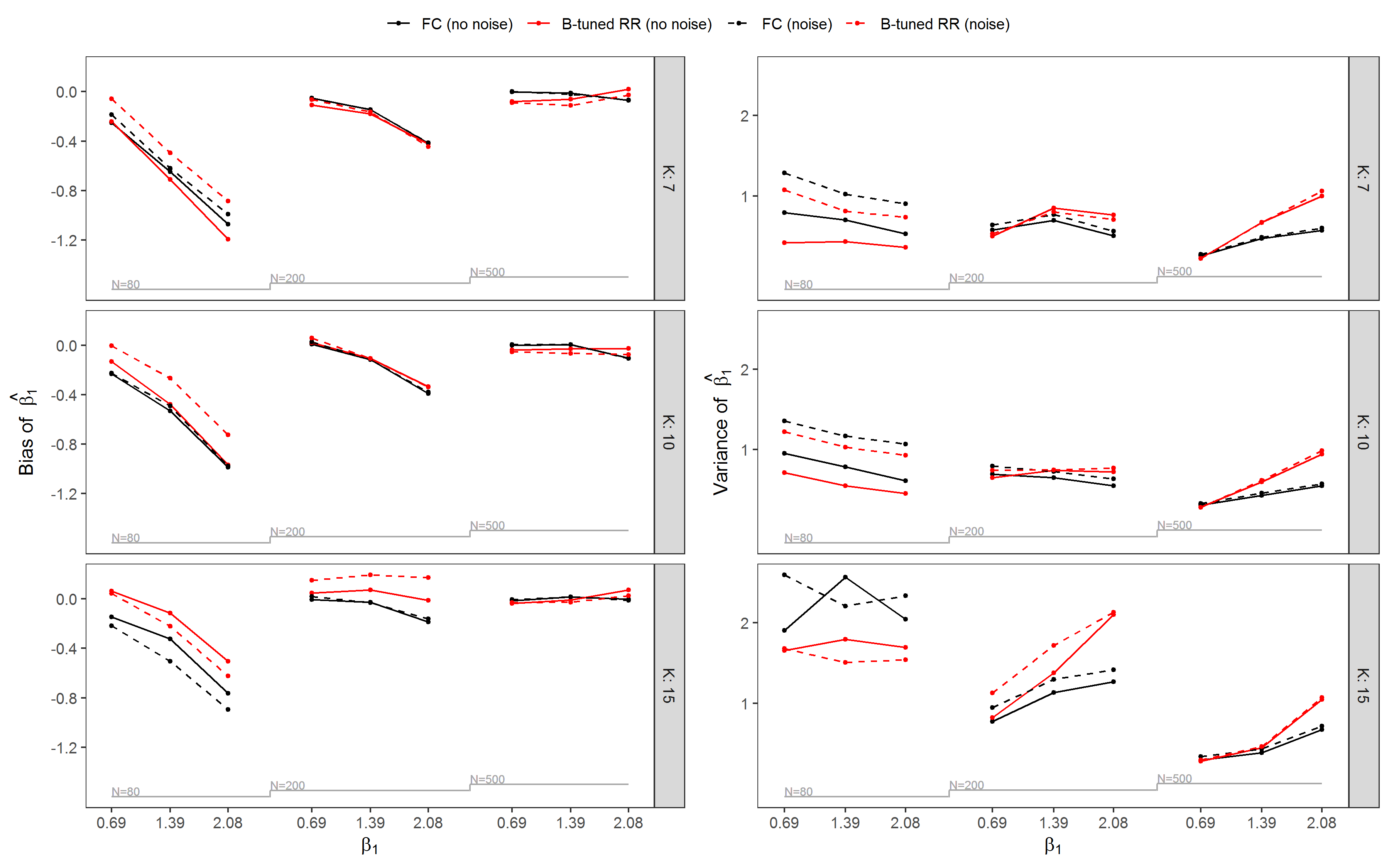}
\end{sidewaysfigure}

\pagebreak

\begin{table}[ht]

\caption{Simulation results showing separation prevalence (SP) and root mean squared errors of $\beta_2$ across simulation scenarios that differed by the number of covariates included into data-generating process $K\in\{7, 10, 15\}$, the sample size $N\in\{80, 200, 500\}$, the value of $\beta_1\in\{0.69, 1.39, 2.08\}$ and noise excluded or included in the fitting process. FC, Firth's correction; RR, ridge regression; $B$, $B$-tuning; $T$, $T$-tuning; $D$, tuning by leave-one-out-cross-validated deviance; $A$, tuning by Akaike's information criterion; $E$, tuning by leave-one-out-cross-validated mean squared prediction error; $C$, tuning by leave-one-out-cross-validated mean classification error.}

\setlength{\tabcolsep}{4pt}
\renewcommand{\arraystretch}{0.88}
\centering
\begin{small}
\begin{tabular}{rrrr|r|rrrrrrr|rrrrrrr}
  \hline
$K$ & $N$ & $\beta_1$ & $\beta_2$ & SP  &  \multicolumn{7}{c}{No noise} & \multicolumn{7}{|c}{Noise} \\ 
 &  & & & (\%) & FC & \multicolumn{6}{c|}{RR} & FC & \multicolumn{6}{c}{RR} \\
 &  & & &  &  & $B$ & $T$ & $D$ & $A$ & $E$ & $C$ &  & $B$ & $T$ & $D$ & $A$ & $E$ & $C$\\ 
  \hline
7 & 80 & 0.69 & 0.69 & 48 & 0.82 & 0.74 & 0.54 & 2.40 & 0.87 & 1.09 & 0.53 & 0.90 & 0.90 & 0.52 & 0.77 & 5.76 & 0.95 & 0.80 \\ 
  7 & 80 & 1.39 & 0.69 & 65 & 0.77 & 0.70 & 0.62 & 2.22 & 0.86 & 0.99 & 0.50 & 0.85 & 0.89 & 0.47 & 0.60 & 4.36 & 0.97 & 0.69 \\ 
  7 & 80 & 2.08 & 0.69 & 81 & 0.83 & 0.75 & 0.83 & 2.78 & 0.89 & 1.67 & 0.51 & 0.94 & 0.95 & 0.66 & 0.86 & 2.96 & 1.16 & 0.69 \\ 
  7 & 200 & 0.69 & 0.69 & 13 & 0.48 & 0.45 & 0.39 & 0.44 & 0.44 & 0.45 & 0.41 & 0.50 & 0.47 & 0.40 & 0.42 & 0.42 & 0.42 & 0.42 \\ 
  7 & 200 & 1.39 & 0.69 & 38 & 0.50 & 0.47 & 0.46 & 0.46 & 0.46 & 0.47 & 0.37 & 0.51 & 0.49 & 0.42 & 0.40 & 0.39 & 0.40 & 0.39 \\ 
  7 & 200 & 2.08 & 0.69 & 58 & 0.50 & 0.48 & 0.50 & 0.48 & 0.48 & 0.48 & 0.39 & 0.52 & 0.51 & 0.52 & 0.42 & 0.42 & 0.42 & 0.40 \\ 
  7 & 500 & 0.69 & 0.69 & 0 & 0.31 & 0.30 & 0.28 & 0.29 & 0.29 & 0.30 & 0.28 & 0.32 & 0.30 & 0.27 & 0.28 & 0.28 & 0.28 & 0.28 \\ 
  7 & 500 & 1.39 & 0.69 & 7 & 0.30 & 0.29 & 0.29 & 0.28 & 0.28 & 0.29 & 0.26 & 0.31 & 0.30 & 0.30 & 0.28 & 0.28 & 0.28 & 0.27 \\ 
  7 & 500 & 2.08 & 0.69 & 27 & 0.31 & 0.31 & 0.32 & 0.30 & 0.30 & 0.30 & 0.26 & 0.32 & 0.32 & 0.32 & 0.30 & 0.28 & 0.28 & 0.27 \\ 
  10 & 80 & 0.69 & 0.69 & 50 & 0.88 & 0.85 & 0.51 & 0.66 & 0.81 & 0.79 & 0.65 & 0.97 & 1.02 & 0.50 & 0.57 & 8.63 & 0.64 & 0.61 \\ 
  10 & 80 & 1.39 & 0.69 & 69 & 0.90 & 0.87 & 0.62 & 0.67 & 0.80 & 0.81 & 0.62 & 0.96 & 1.06 & 0.51 & 0.60 & 9.02 & 1.25 & 0.93 \\ 
  10 & 80 & 2.08 & 0.69 & 82 & 0.86 & 0.83 & 0.81 & 0.87 & 0.77 & 0.71 & 0.58 & 0.96 & 1.06 & 0.55 & 0.61 & 8.62 & 1.06 & 0.61 \\ 
  10 & 200 & 0.69 & 0.69 & 16 & 0.51 & 0.49 & 0.39 & 0.44 & 0.43 & 0.44 & 0.42 & 0.54 & 0.54 & 0.39 & 0.43 & 0.43 & 0.44 & 0.43 \\ 
  10 & 200 & 1.39 & 0.69 & 36 & 0.51 & 0.49 & 0.46 & 0.43 & 0.41 & 0.42 & 0.40 & 0.52 & 0.52 & 0.41 & 0.40 & 0.40 & 0.40 & 0.41 \\ 
  10 & 200 & 2.08 & 0.69 & 60 & 0.53 & 0.52 & 0.53 & 0.47 & 0.43 & 0.45 & 0.41 & 0.54 & 0.56 & 0.55 & 0.42 & 0.41 & 0.42 & 0.42 \\ 
  10 & 500 & 0.69 & 0.69 & 1 & 0.32 & 0.30 & 0.28 & 0.29 & 0.29 & 0.30 & 0.28 & 0.33 & 0.31 & 0.27 & 0.29 & 0.29 & 0.29 & 0.29 \\ 
  10 & 500 & 1.39 & 0.69 & 7 & 0.31 & 0.30 & 0.30 & 0.29 & 0.29 & 0.29 & 0.27 & 0.32 & 0.31 & 0.31 & 0.29 & 0.29 & 0.29 & 0.29 \\ 
  10 & 500 & 2.08 & 0.69 & 24 & 0.31 & 0.31 & 0.31 & 0.29 & 0.29 & 0.28 & 0.26 & 0.32 & 0.32 & 0.32 & 0.29 & 0.27 & 0.27 & 0.27 \\ 
  15 & 80 & 0.69 & 0.69 & 64 & 1.04 & 1.08 & 0.54 & 0.62 & 9.10 & 1.64 & 0.69 & 1.05 & 1.02 & 0.50 & 0.72 & 13.06 & 1.21 & 0.73 \\ 
  15 & 80 & 1.39 & 0.69 & 72 & 1.05 & 1.11 & 0.58 & 0.64 & 10.18 & 1.02 & 0.93 & 0.99 & 1.04 & 0.50 & 0.59 & 13.37 & 1.66 & 1.15 \\ 
  15 & 80 & 2.08 & 0.69 & 80 & 1.02 & 1.07 & 0.69 & 0.71 & 7.71 & 1.63 & 1.10 & 1.00 & 1.06 & 0.54 & 0.63 & 14.07 & 0.85 & 0.73 \\ 
  15 & 200 & 0.69 & 0.69 & 6 & 0.60 & 0.62 & 0.44 & 0.44 & 0.44 & 0.45 & 0.45 & 0.64 & 0.73 & 0.40 & 0.43 & 0.44 & 0.44 & 0.45 \\ 
  15 & 200 & 1.39 & 0.69 & 20 & 0.60 & 0.63 & 0.51 & 0.43 & 0.43 & 0.45 & 0.45 & 0.63 & 0.74 & 0.44 & 0.41 & 0.44 & 0.42 & 0.44 \\ 
  15 & 200 & 2.08 & 0.69 & 37 & 0.61 & 0.65 & 0.61 & 0.45 & 0.46 & 0.46 & 0.46 & 0.63 & 0.75 & 0.55 & 0.42 & 0.45 & 0.43 & 0.44 \\ 
  15 & 500 & 0.69 & 0.69 & 0 & 0.37 & 0.36 & 0.33 & 0.32 & 0.31 & 0.32 & 0.32 & 0.38 & 0.37 & 0.31 & 0.31 & 0.31 & 0.31 & 0.32 \\ 
  15 & 500 & 1.39 & 0.69 & 1 & 0.36 & 0.36 & 0.35 & 0.30 & 0.30 & 0.31 & 0.31 & 0.37 & 0.37 & 0.33 & 0.30 & 0.30 & 0.30 & 0.31 \\ 
  15 & 500 & 2.08 & 0.69 & 10 & 0.37 & 0.37 & 0.37 & 0.32 & 0.31 & 0.32 & 0.31 & 0.37 & 0.38 & 0.37 & 0.30 & 0.30 & 0.31 & 0.31 \\ 
   \hline
\end{tabular}
\end{small}
\end{table}

\begin{sidewaystable}[ht]

\caption{Simulation results showing two-sided empirical coverage rates in percentages (and median widths) of 95\% confidence intervals for $\beta_2$ across simulation scenarios that differed by the number of covariates included into data-generating process $K\in\{7, 10, 15\}$, the sample size $N\in\{80, 200, 500\}$, the value of $\beta_1\in\{0.69, 1.39, 2.08\}$ and noise excluded or included in the fitting process. FC, Firth's correction; RR, ridge regression; $B$, $B$-tuning; $T$, $T$-tuning; $D$, tuning by leave-one-out-cross-validated deviance; $A$, tuning by Akaike's information criterion; $E$, tuning by leave-one-out-cross-validated mean squared prediction error; $C$, tuning by leave-one-out-cross-validated mean classification error.}

\setlength{\tabcolsep}{4pt}
\renewcommand{\arraystretch}{0.88}
\centering
\begin{small}
\begin{tabular}{llll|lllllll|lllllll}
  \hline
$K$ & $N$ & $\beta_1$ & $\beta_2$ &  \multicolumn{7}{c}{No noise} & \multicolumn{7}{|c}{Noise} \\ 
 &  &  &  & FC & \multicolumn{6}{c|}{RR} & FC & \multicolumn{6}{c}{RR} \\
 &  & &  &  & $B$ & $T$ & $D$ & $A$ & $E$ & $C$ &  & $B$ & $T$ & $D$ & $A$ & $E$ & $C$\\  
  \hline
7 & 80 & 0.69 & 0.69 & 96 (3.1) & 96 (2.8) & 97 (2.9) & 91 (1.9) & 92 (2.1) & 91 (1.7) & 92 (1.6) & 96 (3.3) & 96 (3.3) & 98 (2.5) & 93 (1.6) & 93 (1.9) & 92 (1.6) & 92 (1.6) \\ 
  7 & 80 & 1.39 & 0.69 & 96 (3.1) & 97 (2.9) & 97 (3.1) & 95 (2.6) & 95 (2.3) & 95 (1.9) & 96 (1.6) & 96 (3.3) & 96 (3.4) & 98 (3) & 96 (1.7) & 96 (2) & 96 (1.6) & 95 (1.6) \\ 
  7 & 80 & 2.08 & 0.69 & 96 (3.2) & 97 (2.9) & 95 (3.3) & 94 (2.8) & 95 (2.5) & 94 (2.1) & 94 (1.6) & 95 (3.4) & 96 (3.4) & 95 (3.5) & 94 (1.7) & 95 (2.1) & 94 (1.6) & 94 (1.6) \\ 
  7 & 200 & 0.69 & 0.69 & 96 (1.9) & 96 (1.8) & 97 (1.9) & 93 (1.5) & 93 (1.5) & 93 (1.5) & 93 (1.3) & 96 (2) & 96 (1.9) & 97 (1.8) & 92 (1.4) & 92 (1.4) & 92 (1.4) & 92 (1.3) \\ 
  7 & 200 & 1.39 & 0.69 & 96 (1.9) & 97 (1.9) & 96 (1.9) & 96 (1.6) & 96 (1.6) & 96 (1.6) & 95 (1.3) & 96 (2) & 97 (2) & 97 (2) & 96 (1.4) & 95 (1.4) & 95 (1.4) & 95 (1.3) \\ 
  7 & 200 & 2.08 & 0.69 & 96 (1.9) & 96 (1.9) & 96 (1.9) & 95 (1.8) & 95 (1.8) & 95 (1.7) & 94 (1.3) & 95 (2) & 96 (2) & 95 (2) & 94 (1.4) & 95 (1.4) & 94 (1.4) & 94 (1.3) \\ 
  7 & 500 & 0.69 & 0.69 & 95 (1.2) & 96 (1.2) & 95 (1.2) & 94 (1.1) & 94 (1) & 94 (1.1) & 94 (1) & 95 (1.2) & 96 (1.2) & 96 (1.2) & 94 (1) & 94 (1) & 94 (1) & 94 (1) \\ 
  7 & 500 & 1.39 & 0.69 & 95 (1.2) & 96 (1.2) & 95 (1.2) & 95 (1.1) & 95 (1.1) & 95 (1.1) & 95 (1) & 96 (1.2) & 96 (1.2) & 96 (1.2) & 95 (1) & 95 (1) & 95 (1) & 95 (1) \\ 
  7 & 500 & 2.08 & 0.69 & 95 (1.2) & 95 (1.2) & 95 (1.2) & 96 (1.2) & 96 (1.2) & 95 (1.2) & 95 (1) & 96 (1.2) & 96 (1.2) & 95 (1.2) & 94 (1.1) & 95 (1.1) & 94 (1) & 95 (1) \\ 
  10 & 80 & 0.69 & 0.69 & 95 (3.2) & 96 (3.2) & 97 (2.9) & 94 (1.8) & 94 (2.1) & 94 (1.7) & 92 (1.6) & 96 (3.5) & 95 (3.9) & 98 (2.7) & 94 (1.7) & 94 (2) & 93 (1.6) & 93 (1.6) \\ 
  10 & 80 & 1.39 & 0.69 & 94 (3.3) & 95 (3.2) & 94 (3.2) & 93 (2) & 94 (2.2) & 93 (1.8) & 92 (1.6) & 96 (3.6) & 95 (4) & 97 (3) & 94 (1.7) & 92 (2.1) & 94 (1.6) & 93 (1.6) \\ 
  10 & 80 & 2.08 & 0.69 & 95 (3.2) & 95 (3.1) & 94 (3.4) & 94 (2) & 95 (2.2) & 94 (1.8) & 93 (1.6) & 96 (3.5) & 95 (3.9) & 95 (3.5) & 95 (1.7) & 93 (2.1) & 94 (1.6) & 93 (1.6) \\ 
  10 & 200 & 0.69 & 0.69 & 95 (2) & 96 (1.9) & 96 (1.9) & 93 (1.4) & 93 (1.4) & 92 (1.4) & 91 (1.3) & 95 (2.1) & 96 (2) & 97 (1.8) & 93 (1.4) & 93 (1.4) & 92 (1.4) & 92 (1.4) \\ 
  10 & 200 & 1.39 & 0.69 & 96 (2) & 96 (2) & 95 (2) & 95 (1.6) & 96 (1.5) & 95 (1.5) & 93 (1.3) & 96 (2.1) & 96 (2.1) & 96 (2) & 94 (1.4) & 94 (1.4) & 94 (1.4) & 93 (1.3) \\ 
  10 & 200 & 2.08 & 0.69 & 94 (2) & 95 (2) & 94 (2) & 95 (1.8) & 96 (1.6) & 95 (1.5) & 94 (1.3) & 95 (2.1) & 95 (2.1) & 94 (2.1) & 95 (1.4) & 95 (1.5) & 94 (1.4) & 94 (1.4) \\ 
  10 & 500 & 0.69 & 0.69 & 95 (1.2) & 95 (1.2) & 95 (1.2) & 93 (1.1) & 93 (1) & 93 (1.1) & 93 (1) & 95 (1.3) & 96 (1.2) & 96 (1.2) & 93 (1) & 93 (1) & 93 (1) & 93 (1) \\ 
  10 & 500 & 1.39 & 0.69 & 94 (1.2) & 94 (1.2) & 94 (1.2) & 95 (1.1) & 94 (1.1) & 93 (1.1) & 94 (1) & 94 (1.3) & 95 (1.2) & 94 (1.3) & 94 (1.1) & 93 (1) & 94 (1) & 93 (1) \\ 
  10 & 500 & 2.08 & 0.69 & 95 (1.2) & 95 (1.2) & 94 (1.2) & 96 (1.1) & 96 (1.1) & 96 (1.1) & 96 (1) & 95 (1.3) & 95 (1.3) & 94 (1.3) & 96 (1.1) & 96 (1) & 96 (1) & 95 (1) \\ 
  15 & 80 & 0.69 & 0.69 & 96 (4.2) & 96 (4.7) & 98 (3.6) & 96 (2.1) & 92 (2.5) & 95 (2.1) & 95 (1.7) & 97 (5) & 97 (6.7) & 98 (3.4) & 96 (2) & 88 (2.7) & 96 (2) & 96 (1.7) \\ 
  15 & 80 & 1.39 & 0.69 & 96 (4.1) & 95 (4.7) & 97 (3.7) & 97 (2.1) & 92 (2.5) & 96 (2.1) & 95 (1.7) & 98 (5.1) & 97 (6.7) & 98 (3.6) & 97 (2) & 87 (2.6) & 97 (2) & 96 (1.7) \\ 
  15 & 80 & 2.08 & 0.69 & 97 (4.2) & 96 (4.7) & 96 (4.1) & 97 (2.2) & 92 (2.5) & 96 (2.2) & 96 (1.7) & 97 (5) & 97 (6.7) & 98 (3.9) & 97 (2) & 88 (2.7) & 97 (2) & 96 (1.7) \\ 
  15 & 200 & 0.69 & 0.69 & 96 (2.3) & 95 (2.3) & 96 (2.3) & 96 (1.6) & 96 (1.6) & 96 (1.7) & 95 (1.6) & 95 (2.4) & 94 (2.5) & 97 (2.2) & 95 (1.5) & 96 (1.6) & 95 (1.6) & 95 (1.5) \\ 
  15 & 200 & 1.39 & 0.69 & 94 (2.3) & 94 (2.3) & 95 (2.3) & 96 (1.7) & 96 (1.7) & 96 (1.7) & 95 (1.6) & 95 (2.4) & 93 (2.5) & 95 (2.3) & 96 (1.6) & 96 (1.6) & 96 (1.6) & 96 (1.5) \\ 
  15 & 200 & 2.08 & 0.69 & 95 (2.3) & 94 (2.4) & 94 (2.4) & 96 (1.8) & 96 (1.7) & 96 (1.8) & 95 (1.6) & 95 (2.4) & 94 (2.6) & 95 (2.4) & 96 (1.7) & 96 (1.7) & 96 (1.7) & 96 (1.6) \\ 
  15 & 500 & 0.69 & 0.69 & 95 (1.4) & 95 (1.4) & 95 (1.4) & 95 (1.2) & 95 (1.2) & 95 (1.2) & 94 (1.2) & 94 (1.4) & 94 (1.4) & 96 (1.4) & 95 (1.2) & 95 (1.1) & 95 (1.2) & 94 (1.2) \\ 
  15 & 500 & 1.39 & 0.69 & 94 (1.4) & 94 (1.4) & 94 (1.4) & 95 (1.2) & 94 (1.2) & 95 (1.2) & 95 (1.2) & 94 (1.4) & 94 (1.4) & 94 (1.4) & 95 (1.2) & 95 (1.2) & 95 (1.2) & 95 (1.2) \\ 
  15 & 500 & 2.08 & 0.69 & 94 (1.4) & 95 (1.4) & 94 (1.4) & 96 (1.3) & 96 (1.2) & 95 (1.3) & 95 (1.2) & 95 (1.4) & 95 (1.5) & 95 (1.4) & 96 (1.2) & 96 (1.2) & 96 (1.2) & 96 (1.2) \\ 
   \hline
\end{tabular}
\end{small}
\end{sidewaystable}

\pagebreak

\begin{figure}
  \caption{Simulation results showing one-sided empirical coverage rates of 97.5\% confidence intervals for $\beta_2$ across simulation scenarios that differed by the number of covariates included into data-generating process $K\in\{7, 10, 15\}$, the sample size $N\in\{80, 200, 500\}$, the value of $\beta_1\in\{0.69, 1.39, 2.08\}$ and noise excluded or included in the fitting process. Bottom shows one-sided lower and top one-sided upper 97.5\% confidence limit. FC, Firth's correction; RR, ridge regression.}
  \includegraphics[width=\linewidth]{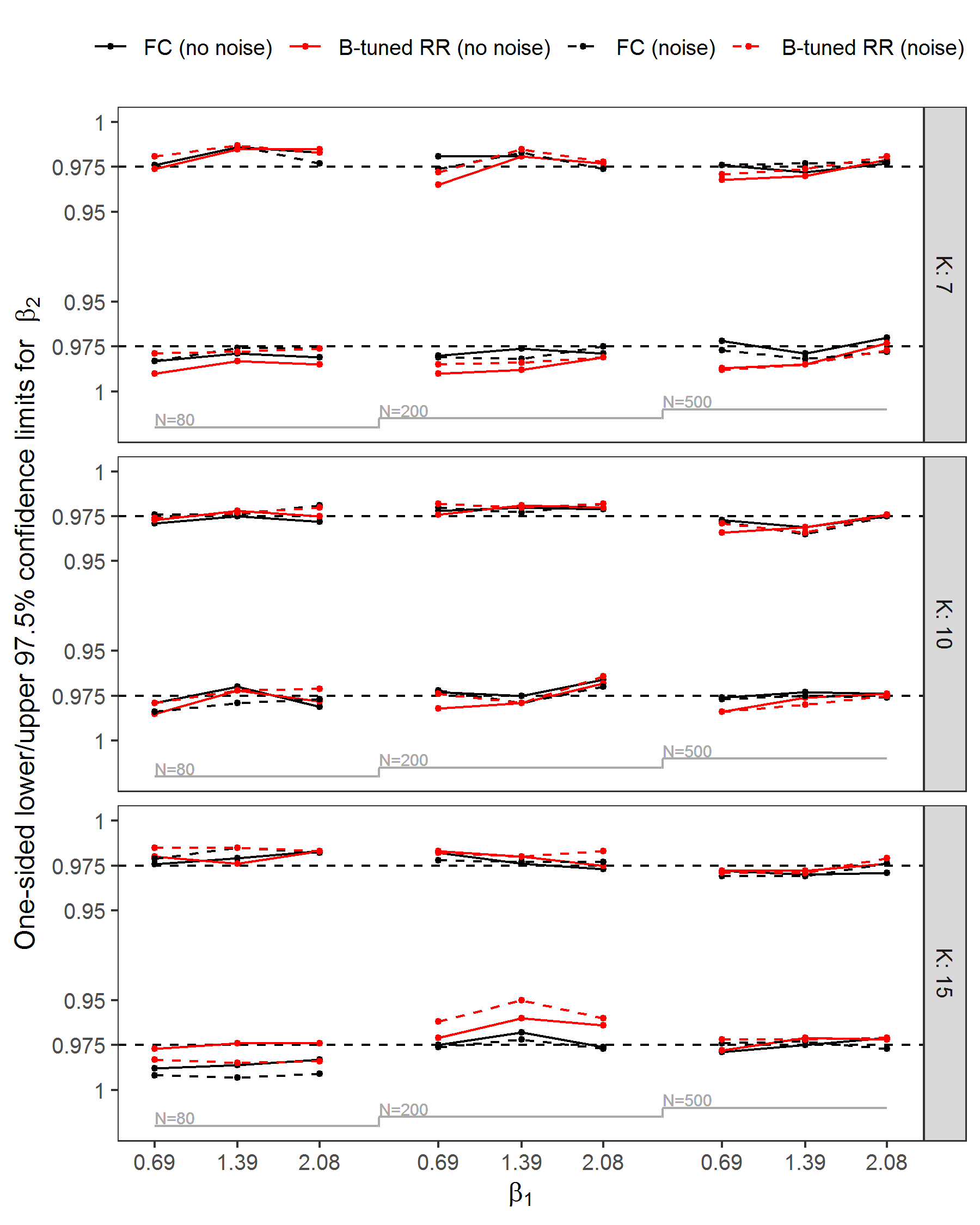}
\end{figure}

\pagebreak

\begin{sidewaysfigure}
  \caption{Simulation results showing bias (left) and variance (right) of $\hat{\beta_2}$ across simulation scenarios that differed by the number of covariates included into data-generating process $K\in\{7, 10, 15\}$, the sample size $N\in\{80, 200, 500\}$, the value of $\beta_1\in\{0.69, 1.39, 2.08\}$ and noise excluded or included in the fitting process. FC, Firth's correction; RR, ridge regression.}
  \includegraphics[width=\linewidth]{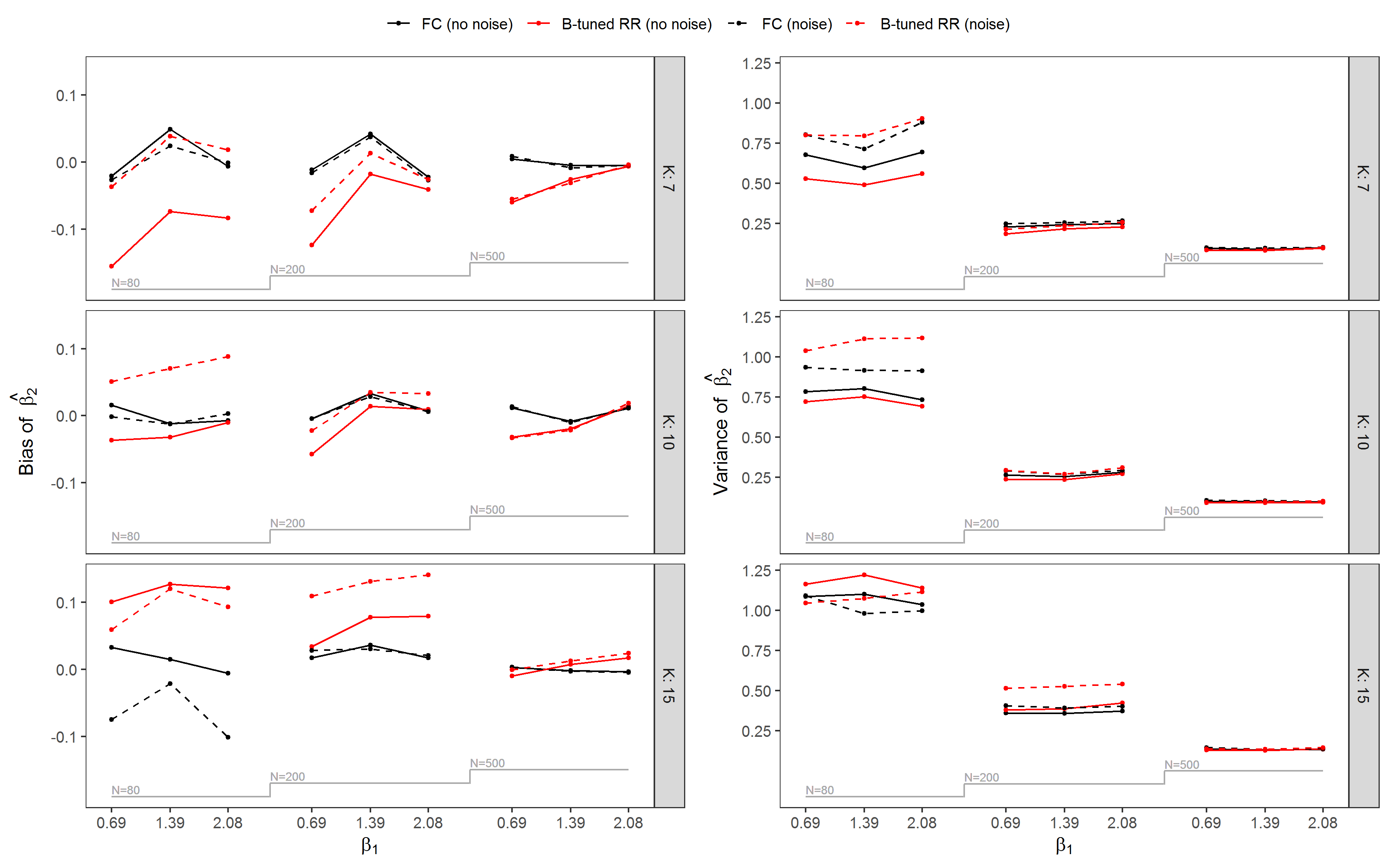}
\end{sidewaysfigure}

\end{document}